\newtheorem{defn}{Definition}[section]
\newtheorem{cor}[defn]{Corollary}
\newtheorem{thm}[defn]{Theorem}
\newtheorem{prop}[defn]{Proposition}
\newtheorem{lemma}[defn]{Lemma}
\newtheorem{remark}[defn]{Remark}
\newtheorem{example}[defn]{Example}
\newcommand{\be}{\begin{equation}}
\newcommand{\ee}{\end{equation}}
\newcommand{\bea}{\begin{eqnarray}}
\newcommand{\eea}{\end{eqnarray}}
\newcommand{\beas}{\begin{eqnarray*}}
\newcommand{\eeas}{\end{eqnarray*}}
\newcommand{\noi}{\noindent}
\newcommand{\goto}{\rightarrow}
\newcommand{\bp}{\begin{proof}}
\newcommand{\ep}{\end{proof}}
\title{Multidimensional Lambert-Euler inversion and vector-multiplicative coalescent processes}
\author{Yevgeniy Kovchegov}
\address{Department of Mathematics, Oregon State University, Corvallis, OR  97331, USA}
\email{kovchegy@math.oregonstate.edu}
\author{Peter T. Otto}
\address{Department of Mathematics, Willamette University, Salem, OR 97302, USA}
\email{potto@willamette.edu}
\date{}                                           
\begin{document}
\maketitle

\begin{abstract}
In this paper we show the existence of the minimal solution to the multidimensional Lambert-Euler inversion,
a multidimensional generalization of $[-e^{-1} ,0)$ branch of Lambert W function $W_0(x)$.
Specifically, for a given nonnegative irreducible symmetric matrix $V \in \mathbb{R}^{k \times k}$, 
we show that for ${\bf u}\in(0,\infty)^k$, if equation  
$$y_j  \exp\big\{\!-\!{\bf e}_j^{\sf T} V {\bf y} \big\} = u_j \qquad \forall j=1,\hdots,k,$$
has at least one solution, it must have a minimal solution ${\bf y}^*$, where the minimum is achieved in all coordinates $y_j$ simultaneously.
Moreover, such ${\bf y}^*$ is the unique solution satisfying $\rho\left(V D[y^*_j] \right) \leq 1$, 
where $D[y^*_j]={\sf diag}(y_j^*)$ is the diagonal matrix with entries $y^*_j$ and $\rho$ denotes the spectral radius.
 
Our main application is in the vector-multiplicative coalescent process.
It is a coalescent process with $k$ types of particles and vector-valued weights that begins with 
$\alpha_1n+\hdots+\alpha_k n$ particles partitioned into types of respective sizes,
and in which two clusters of weights ${\bf x}$ and ${\bf y}$ would merge with rate $({\bf x}^{\sf T} V {\bf y})/n$. 
We use combinatorics to solve the corresponding modified Smoluchowski equations, obtained as a hydrodynamic limit of vector-multiplicative coalescent
as $n \to \infty$,
and use multidimensional Lambert-Euler inversion to establish gelation and find a closed form expression for the gelation time. 

We also find the asymptotic length of the minimal spanning tree for a broad range of graphs equipped with random edge lengths.

\end{abstract}

\section{Introduction}\label{sec:intro}

In his 1783 work \cite{Euler1783} L. Euler considered the following {\it transcendental equation} entailed from 1758 work of J. H. Lambert \cite{Lambert1758}
\begin{equation}\label{eqn:EulerLambert}
x^\alpha -x^\beta=(\alpha-\beta)v x^{\alpha +\beta}.
\end{equation}
Letting $\alpha \to \beta$ in \eqref{eqn:EulerLambert}, Euler obtained
\begin{equation}\label{eqn:xLnEuler}
\ln x=v x^\beta.
\end{equation}
Next, Euler set $y=x^\beta$ and $u=\alpha v$ in \eqref{eqn:xLnEuler}, obtaining
\begin{equation}\label{eqn:yLnEuler}
{\ln y \over y}=u.
\end{equation}
Letting $y=e^w$, equation \eqref{eqn:yLnEuler} yields
\begin{equation}\label{eqn:wEuler}
we^{-w}=u.
\end{equation}
Equation \eqref{eqn:wEuler} gave rise to the Lambert W function, and in particular the function $W_0(x)$ for $-e^{-1} \leq x <0$.

\medskip
\noindent
Denote $R_0=(0,1)$, $\overline{R}_0=(0,1]$, and $R_1=(1,\infty)$. Then, for each $0<u <e^{-1}$ there are exactly two solutions $w$ of \eqref{eqn:wEuler}.
Moreover, one solution is always in $R_0$ and one solution is always in $R_1$. For $u=e^{-1}$, $w=1$ is the only solution.
Thus, for $0<u \leq e^{-1}$, there exists exactly one solution $w$ of \eqref{eqn:wEuler} in $\overline{R}_0$. 
This solution is either unique when $u=1$ or is the smaller of the two solutions when $0<u <e^{-1}$.

\medskip
\noindent
Lambert-Euler inversion \eqref{eqn:wEuler} yields the existence of function 
\begin{equation}\label{defx}
x(t):=\min\{x>0~:~xe^{-x}=te^{-t}\}, \qquad t\in (0,\infty),
\end{equation}
with the range $\overline{R}_0$. 
In 1960, function $x(t)$ was used by P. Erd\H{o}s and A. R\'enyi \cite{ER60} for establishing formation of a giant cluster in the theory of random graphs.
In 1962, J. B. McLeod \cite{McLeod62} used Lambert-Euler inversion and function $x(t)$ in the analysis of {\it Smoluchowski coagulation equations} 
with multiplicative kernel (aka {\it Flory coagulation system}), observing the {\it gelation phenomenon}.

\medskip
\noindent
In this paper, we will study the multidimensional Lambert-Euler inversion problem.
Let $V \in \mathbb{R}^{k \times k}$ be a nonnegative irreducible symmetric matrix.
For a given vector ${\bf z}\in(0,\infty)^k$, consider region
\begin{equation}\label{eqn:R0}
R_0 =\left\{{\bf z}  \in (0,\infty)^k :\, \rho\left(V D[z_j]\right) <1 \right\}
\end{equation}
its closure within $(0,\infty)^k$,
\begin{equation}\label{eqn:R0bar}
\overline{R}_0=\left\{{\bf z}  \in (0,\infty)^k :\, \rho\left(V D[z_j] \right) \leq 1\right\},
\end{equation}
and the complement of $\overline{R}_0$ within $(0,\infty)^k$,
\begin{equation}\label{eqn:R1}
R_1 =\left\{{\bf z}  \in (0,\infty)^k :\, \rho\left(V D[z_j] \right) >1 \right\},
\end{equation}
where for a vector ${\bf x} \in \mathbb{R}^k$ with coordinates $x_i$,
$D[x_i]$ denotes the diagonal matrix with entries $x_i$, and $\rho(M)$ denotes the spectral radius of matrix $M$.

\medskip 
\noindent
In this paper we found it convenient to us bra-ket notation of P. Dirac. Specifically, 
$|{\bf x}\rangle$ will denote the column vector representation of vector ${\bf x} \in \mathbb{R}^k$,
and $\langle {\bf x}|$ will denote the row vector representation of vector ${\bf x} \in \mathbb{R}^k$.
For $c \in \mathbb{R}$ and ${\bf x} \in \mathbb{R}^k$, $\,c|{\bf x}\rangle$ will represent the product $c{\bf x}$, a column vector.
Respectively, $\langle {\bf x}|{\bf y}\rangle=\langle {\bf y}|{\bf x}\rangle$ will be the dot product of ${\bf x}$ and ${\bf y}$ in $\mathbb{R}^k$.
Finally, for a matrix $M \in \mathbb{R}^{k \times k}$, $\langle {\bf x}|M|{\bf y}\rangle$ will represent the product ${\bf x}^{\sf T}M{\bf y}$ resulting in a scalar.

\medskip 
\noindent
Let ${\bf e}_j$ denote the $j$-th standard basis vector. The following theorem is the main result of the paper. 
\begin{thm}[Multidimensional Lambert-Euler inversion]\label{thm:multEulerLambert}
Consider a nonnegative irreducible symmetric matrix $V \in \mathbb{R}^{k \times k}$.
For any given ${\bf z}\in(0,\infty)^k$, there exists a unique vector ${\bf y}\in \overline{R}_0$ such that
\begin{equation}\label{eqn:yzSolsCoord}
y_j  e^{-\langle {\bf e}_j  | V | {\bf y} \rangle} = z_j  e^{-\langle {\bf e}_j  | V | {\bf z} \rangle} \qquad j=1,\hdots,k.
\end{equation}
Moreover, if ${\bf z}\in \overline{R}_0$, then ${\bf y}={\bf z}$. If ${\bf z}\in R_1$, 
then ${\bf y}<{\bf z}$ ($y_i <z_i$ $\forall i$), i.e., ${\bf y}$ is the smallest solution of \eqref{eqn:yzSolsCoord}.
\end{thm}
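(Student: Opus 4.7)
The plan is to recast \eqref{eqn:yzSolsCoord} as a fixed-point equation $\mathbf{y}=G(\mathbf{y})$, where
\[
G_j(\mathbf{y})=c_j\exp\{\langle \mathbf{e}_j|V|\mathbf{y}\rangle\},\qquad c_j:=z_j\,e^{-\langle \mathbf{e}_j|V|\mathbf{z}\rangle}>0.
\]
Because $V\geq 0$, the map $G:[0,\infty)^k\to(0,\infty)^k$ is continuous and componentwise monotone, and $\mathbf{z}$ is by construction a fixed point. Iterating from $\mathbf{y}^{(0)}=\mathbf{0}$ produces a componentwise nondecreasing sequence bounded above by $\mathbf{z}$ (the bound is preserved since $G(\mathbf{z})=\mathbf{z}$), hence it converges to a fixed point $\mathbf{y}^*\in(0,\mathbf{z}]$.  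A standard monotone-iteration sandwich shows $\mathbf{y}^*\leq\tilde{\mathbf{y}}$ coordinatewise for any other fixed point $\tilde{\mathbf{y}}\in(0,\infty)^k$, so $\mathbf{y}^*$ is the minimal solution of \eqref{eqn:yzSolsCoord}.

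The main technical step is to show $\mathbf{y}^*\in\overline{R}_0$.  I would argue by contradiction.  If $\rho(VD[y_j^*])>1$, then since $VD[y_j^*]$ is similar to the symmetric, irreducible, nonnegative matrix $D[y_j^*]^{1/2}VD[y_j^*]^{1/2}$, Perron-Frobenius supplies a strictly positive eigenvector $\mathbf{v}$ of $D[y_j^*]V=JG(\mathbf{y}^*)$ with eigenvalue $\rho>1$.  For $\mathbf{y}_\varepsilon:=\mathbf{y}^*-\varepsilon\mathbf{v}$, Taylor expansion gives
\[
G(\mathbf{y}_\varepsilon)-\mathbf{y}_\varepsilon=-\varepsilon(\rho-1)\mathbf{v}+O(\varepsilon^2)<\mathbf{0}
\]
componentwise for small $\varepsilon>0$.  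Iterating $G$ from $\mathbf{y}_\varepsilon$ then yields a componentwise decreasing sequence, bounded below by $\mathbf{c}>\mathbf{0}$, whose limit is a fixed point of $G$ strictly smaller than $\mathbf{y}^*$, contradicting minimality.

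For uniqueness in $\overline{R}_0$, let $\tilde{\mathbf{y}}\in\overline{R}_0$ be any solution.  Minimality forces $\tilde{\mathbf{y}}\geq\mathbf{y}^*$.  If $\tilde y_j=y_j^*$ for some $j$, the $j$th equation reduces to $\langle\mathbf{e}_j|V|\tilde{\mathbf{y}}-\mathbf{y}^*\rangle=0$, and irreducibility of $V$ propagates this equality to all coordinates, giving $\tilde{\mathbf{y}}=\mathbf{y}^*$.  Otherwise $\tilde{\mathbf{y}}>\mathbf{y}^*$ strictly.  Taking logs of \eqref{eqn:yzSolsCoord} for both solutions and subtracting yields $\log(\tilde y_j/y_j^*)=\langle\mathbf{e}_j|V|\tilde{\mathbf{y}}-\mathbf{y}^*\rangle$, and the one-variable MVT in each coordinate produces $\xi_j\in(y_j^*,\tilde y_j)$ with
\[
\tilde{\mathbf{y}}-\mathbf{y}^*=D[\xi_j]V(\tilde{\mathbf{y}}-\mathbf{y}^*).
\]
Thus $D[\xi_j]V$ has $1$ as an eigenvalue with the strictly positive eigenvector $\tilde{\mathbf{y}}-\mathbf{y}^*$, which identifies $1$ as the Perron eigenvalue and gives $\rho(VD[\xi_j])=1$.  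Since $\xi_j<\tilde y_j$ strictly and $V$ is irreducible, the strict Perron-Frobenius monotonicity of $\mathbf{x}\mapsto\rho(VD[\mathbf{x}])$ forces $\rho(VD[\tilde y_j])>1$, contradicting $\tilde{\mathbf{y}}\in\overline{R}_0$.

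The dichotomy is then automatic: if $\mathbf{z}\in\overline{R}_0$ then $\mathbf{z}$ is itself a solution in $\overline{R}_0$ and uniqueness gives $\mathbf{y}=\mathbf{z}$; if $\mathbf{z}\in R_1$ then $\mathbf{y}^*\in\overline{R}_0$ forces $\mathbf{y}^*\neq\mathbf{z}$, and the same irreducibility-propagation argument applied to the pair $(\mathbf{y}^*,\mathbf{z})$ upgrades $\mathbf{y}^*\leq\mathbf{z}$ to $\mathbf{y}^*<\mathbf{z}$ in every coordinate.  The hardest step is the nonlinear Perron-Frobenius argument pinning $\mathbf{y}^*$ inside $\overline{R}_0$; the rest relies on the routine but essential strict monotonicity $\rho(VD[\mathbf{x}'])>\rho(VD[\mathbf{x}])$ whenever $\mathbf{0}<\mathbf{x}<\mathbf{x}'$ and $V$ is irreducible.
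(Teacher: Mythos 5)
Your proposal is correct, and it takes a genuinely different route from the paper. The paper proves uniqueness in $\overline{R}_0$ (their Lemma 2.2) by a path-integral argument: it interpolates between the two putative solutions by a carefully chosen $L$-shaped path through an intermediate point $\mathbf{x}^*$, bounds each leg's contribution in Euclidean norm by the corresponding spectral radius, and uses orthogonality of the two legs to derive a norm contradiction. For existence and minimality on $R_1$ (their Lemma 2.4), the paper constructs an auxiliary vector $\boldsymbol\eta$ as the limit of a coupled monotone recursion, perturbs it slightly, and then follows an explicit ODE flow $\dot{\mathbf{x}} = -D[x_i](I - VD[z_j x_j])^{-1} f_z(\widetilde{\boldsymbol\eta})$ whose time-$1$ state hits the desired root inside $R_0$. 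Your proof instead runs monotone fixed-point iteration of $G$ upward from $\mathbf{0}$, which immediately yields a minimal fixed point $\mathbf{y}^*$ dominated by every other fixed point (so the existence and minimality claims come essentially for free); you then place $\mathbf{y}^*$ in $\overline{R}_0$ by a local Perron--Frobenius stability argument at the putative minimal fixed point (if $\rho(D[y_j^*]V) > 1$, perturbing along the Perron eigenvector and iterating downward would produce a strictly smaller fixed point, a contradiction); and you handle uniqueness in $\overline{R}_0$ by taking logarithms, applying the coordinate-wise mean value theorem to write $\tilde{\mathbf{y}} - \mathbf{y}^* = D[\xi_j]V(\tilde{\mathbf{y}} - \mathbf{y}^*)$, and invoking strict monotonicity of $\rho(VD[\cdot])$ to force $\tilde{\mathbf{y}} \in R_1$. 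Both approaches are rigorous; yours is arguably more elementary (successive approximations plus linearization, no path integrals or custom ODE flows) and builds in minimality automatically, while the paper's path-integral lemma is a self-contained geometric argument and its ODE construction gives an explicit curve from $R_1$ to the root with additional structure. One small caveat worth spelling out if you write this up: the Taylor remainder in your perturbation step is nonnegative by convexity of $G_j$, and the sign conclusion should be stated as holding since the leading linear term $-\varepsilon(\rho-1)v_j$ is strictly negative and dominates the $O(\varepsilon^2)$ remainder for small $\varepsilon$; the claim is right but deserves the explicit justification.
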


\medskip 
\noindent
Theorem \ref{thm:multEulerLambert} which we will prove in Section \ref{sec:euler}, 
yields the following multidimensional analogue of $[-e^{-1} ,0)$ branch of Lambert $W$ function $W_0(x)$.
Consider domain 
\be\label{eqn:DR}
\mathcal{D}=\left\{{\bf u} \in (0,\infty)^k ~~:~~\exists \,{\bf z} \in (0,\infty)^k  \quad\text{ such that }\quad
|{\bf u}\rangle =\sum\limits_{j=1}^k z_j \,  e^{-\langle {\bf e}_j  | V | {\bf z} \rangle}|{\bf e}_j\rangle  \right\}.
\ee
Then, by Theorem \ref{thm:multEulerLambert},
\be\label{eqn:DR0}
\mathcal{D}=\left\{{\bf u} \in (0,\infty)^k ~~:~~\exists \,{\bf z} \in  \overline{R}_0  \quad\text{ such that }\quad
|{\bf u}\rangle =\sum\limits_{j=1}^k z_j \,  e^{-\langle {\bf e}_j  | V | {\bf z} \rangle}|{\bf e}_j\rangle  \right\}
\ee
and for the mapping $\,\Psi_V:(0,\infty)^k \to \mathcal{D}\,$ defined as $\,|\Psi_V(z) \rangle=\sum\limits_{j=1}^k z_j \,  e^{-\langle {\bf e}_j  | V | {\bf z} \rangle}|{\bf e}_j\rangle\,$, 
Theorem \ref{thm:multEulerLambert}  implies that restricting the domain of $\,\Psi_V\,$ to $\overline{R}_0$, makes $\,\Psi_V\,$ a continuous bijection from $\overline{R}_0$ to $\mathcal{D}$. Thus, we can define Lambert-Euler inversion $\,\Lambda_V\,$ of $\,\Psi_V\,$ as a continuous bijection from $\mathcal{D}$ to $\overline{R}_0$. 
Thus, for any given ${\bf z}\in(0,\infty)^k$, equation \eqref{eqn:yzSolsCoord}
has the minimal solution ${\bf y}=\Lambda_V \circ \Psi_V ({\bf z}) \in \overline{R}_0$, where the minimum is achieved in all coordinates.

\medskip 
\noindent
Now, we can define a multidimensional analogue of function $x(t)$ in \eqref{defx}.
For any given ${\boldsymbol \alpha}\in(0,\infty)^k$, let  ${\bf y}(t)=\Lambda_V \circ \Psi_V({\boldsymbol \alpha} t)$ for all $t>0$, 
i.e., ${\bf y}(t)$ is the minimal solution of 
\be\label{eqn:GenMin}
y_j  e^{-\langle {\bf e}_j  | V | {\bf y} \rangle} = \alpha_j  t e^{-t \langle {\bf e}_j  | V | {\boldsymbol \alpha} \rangle} \qquad \forall j=1,\hdots,k.
\ee
Consequently, $\,{\bf y}(t)=\Lambda_V \circ \Psi_V({\boldsymbol \alpha} t)$ is a continuous function.

\medskip
\noindent
By analogy with  $\,\lim\limits_{t \to \infty}{x(t) \over t}=0$ for  $x(t)$ in \eqref{defx}, we will show  
that $\,\lim\limits_{t \to \infty}{{\bf y}(t) \over t}={\boldsymbol 0}$. This will be done in Lemma \ref{lem:massdissipation} of Section \ref{sec:euler}.

\medskip 
\noindent
Next, we will list the applications of Theorem \ref{thm:multEulerLambert}.

\medskip

\subsection{Vector-Multiplicative Coalescent Processes}\label{intro:coalescent} 
The solution to the multidimensional Lambert-Euler inversion given in Theorem \ref{thm:multEulerLambert} and
the function ${\bf y}(t)$ defined in \eqref{eqn:GenMin} will be used in the analysis of a general class of coalescent processes introduced here 
that we will call the {\it  vector-multiplicative coalescent processes}.

\medskip
\noindent 
For a nonnegative irreducible symmetric matrix $V \in \mathbb{R}^{k \times k}$ and a given vector ${\boldsymbol \alpha}\in(0,\infty)^k$, 
let 
\be\label{eqn:aplhan}
{\boldsymbol \alpha}[n]=\left[\!\!\begin{array}{c}\alpha_1[n] \\ \vdots \\ \alpha_k[n]\end{array}\!\!\right]={\boldsymbol \alpha}n+o(\sqrt{n}).
\ee
Consider a system with $k$ types of particles, $1,\hdots,k$,
and the coalescent process that begins with $\langle {\boldsymbol \alpha}[n]|{\bf 1}\rangle=\langle {\boldsymbol \alpha}|{\bf 1}\rangle n+o(\sqrt{n})$ singletons
distributed between the $k$ types so that for each $i$, there are $\alpha_i[n]=\alpha_i n+o(\sqrt{n})$ particles of type $i$.
In this continuous time Markov process, a particle of type $i$ bonds with a particle of type $j$ with the rate $v_{i,j}/n$, 
where $v_{i,j}=\langle {\bf e}_i|V|{\bf e}_j\rangle$ is the $(i,j)$ element in the matrix $V$.
The bonds are formed independently. This process is called {\it vector-multiplicative coalescent}.

\medskip
\noindent
Formally, vector-multiplicative coalescent process describes cluster merger dynamics, where the weight of each cluster is a $k$-dimensional vector ${\bf x} \in \mathbb{Z}_+^k$ such that $\langle {\bf x}|{\bf 1}\rangle>0$. 
Each cluster of weight ${\bf x}$ bonds together $x_1,\hdots,x_k$ particles of corresponding types $1,\hdots,k$.  
The coalescent process begins with $\langle {\boldsymbol \alpha}[n]|{\bf 1}\rangle$ singletons of all $k$ types, of which there are $\alpha_i[n]$ of type $i$ (for all $i=1,\hdots,k$). Each pair of clusters with respective weight vectors ${\bf x}$ and ${\bf y}$ would coalesce into a cluster of weight ${\bf x+y}$ with rate $K({\bf x},{\bf y})/n$,
where 
\be\label{VMkernel} 
K({\bf x}, {\bf y}) = \langle {\bf x} | V | {\bf y} \rangle.  
\ee
The last merger will create a cluster of weight $\,{\boldsymbol \alpha}[n]$.

\medskip
\noindent
The kernel $K({\bf x}, {\bf y})$ defined in \eqref{VMkernel} will be referred to as the {\it vector-multiplicative kernel}.
The kernel is symmetric
$$K({\bf x}, {\bf y}) =K({\bf y},{\bf x}) \quad \text{ for all vectors }~{\bf x}, {\bf y}$$
and bilinear
$$K(c_1{\bf x} + c_2{\bf y}, {\bf z}) = c_1\,K({\bf x}, {\bf z}) + c_2\,K({\bf y}, {\bf z}) \quad \text{ for all vectors }~{\bf x}, {\bf y}, {\bf z} ~~\text{ and scalars }~c_1,c_2.$$

\medskip
\noindent
Notice that coalescent processes with vector-valued weights have been considered in the past. See \cite{KOY,KBN98,VZ98}.

\medskip
\noindent
In the vector-multiplicative coalescent process, let $\zeta_{\bf x}^{[n]}(t)$ denote the number of clusters of weight ${\bf x}$ at time $t \geq 0$. 
The initial values are $\zeta_{\bf x}^{[n]}(0)=\sum\limits_{i=1}^k \alpha_i[n] \delta_{{\bf e}_i,{\bf x}}$.
The process $${\bf ML}_n(t)=\Big(\zeta_{\bf x}^{[n]}(t) \Big)_{{\bf x} \in \mathbb{Z}_+^k : \langle {\bf x}|{\bf 1}\rangle>0}$$
that counts clusters of all types in the vector-multiplicative coalescent process is the corresponding   
{\it Marcus-Lushnikov process}. 
In Lemma \ref{lem:hydroFlory} we will refer to the known weak limit result of T.\,G. Kurtz for {\it density dependent population processes} that yields 
convergence of $\zeta_{\bf x}^{[n]}(t)$ to $\zeta_{\bf x}(t)$, where $\zeta_{\bf x}(t)$ is the solution to the {\it modified Smoluchowski equations} (MSE)
$${d \over dt}\zeta_{\bf x}(t) =  - \zeta_{\bf x} \langle {\bf x} | V | \boldsymbol{\alpha} \rangle  + \frac{1}{2}\sum_{{\bf y}, {\bf z} \,: {\bf y} + {\bf z} = {\bf x}} \langle {\bf y} | V | {\bf z} \rangle \zeta_{{\bf y}} \zeta_{{\bf z}}$$
with the same initial conditions $\zeta_{\bf x}(0)=\sum\limits_{i=1}^k \alpha_i \delta_{{\bf e}_i,{\bf x}}$.

\medskip
\noindent
In Section \ref{sec:mse} we will find the unique solution $\zeta_{\bf x}(t)$ of the above modified Smoluchowski equations.
Specifically, for a vector ${\bf x} \in \mathbb{Z}_+^k$ let
$\,{\bf x}!=x_1!x_2!\hdots x_k!\,$
and for vectors ${\bf a}$ and ${\bf b}$ in $\mathbb{R}^k$ let
$\,{\bf a}^{\bf b}=a_1^{b_1} a_2^{b_2}\hdots a_k^{b_k}\,$ whenever $a_i^{b_i}$ is uniquely defined for all $i$.
Now, consider a complete graph $K_k$ consisting of vertices $\{1,\hdots,k\}$ with weights $w_{i,j}=w_{j,i}\geq 0$ assigned to
its edges $[i,j]$ ($i \not=j$). 
Let the weight $W(\mathcal{T})$ of a spanning tree $\mathcal{T}$ be the product of the weights of all of its edges.
Finally, let $\tau(K_k, w_{i,j})=\sum\limits_{\mathcal{T}}W(\mathcal{T})$ denote the {\it weighted spanning tree enumerator}, 
i.e., the sum of weights of all spanning trees in $K_k$.
These notations are used in the following closed form expression, that will be established in Corollary \ref{cor:TxsolODE} of Section \ref{sec:mse}
\be\label{eqn:MSEsol}
\zeta_{\bf x}(t) = {1 \over {\bf x}!}\boldsymbol{\alpha}^{\bf x} {\tau(K_k, x_i x_j v_{i,j}) \over {\bf x}^{\bf 1}}(V {\bf x})^{{\bf x}- {\bf 1}} e^{-\langle {\bf x} | V | \boldsymbol{\alpha} \rangle t} t^{\langle {\bf x} |\boldsymbol{1} \rangle -1}.
\ee

\bigskip
The concept of {\it gelation} was studied in \cite{Aldous98,Aldous,Jeon98,Jeon99,Rezakhanlou13,Spouge83,vDE86,Ziff} and related papers.
For the hydrodynamic limit $\zeta_{\bf x}(t)$ of the Marcus-Lushnikov process with vector-multiplicative kernel,
the {\it gelation time} $T_{gel}$ is the time after which the total mass $\,\sum_{\bf x} \zeta_{\bf x}(t) |{\bf x} \rangle \,$ begins to dissipate, i.e.,
while the initial total mass was $\,\sum_{\bf x} \zeta_{\bf x}(0) |{\bf x} \rangle= |\boldsymbol{\alpha} \rangle$,
 $$T_{gel}=\inf\Big\{t>0\,:\,\sum_{\bf x} \zeta_{\bf x}(t) |{\bf x} \rangle  <  |\boldsymbol{\alpha} \rangle \Big\}.$$
In Section \ref{sec:gel} we will use Corollary \ref{cor:ytat} of the Multidimensional Lambert-Euler inversion Theorem \ref{thm:multEulerLambert}
for establishing {\it gelation} in the coagulation process with the vector-multiplicative kernel \eqref{VMkernel} and find the value of the {\it gelation time}.
Specifically, we will show that the gelation time equals to
$$T_{gel}={1 \over \rho ( V D[\alpha_i] )}.$$

\bigskip

\subsection{Application in minimal spanning trees}\label{intro:mst}
Another application concerns the problem of finding the asymptotic mean length of the minimal spanning tree in a graph equipped with independent random edge lengths
as studied in \cite{BFMcD,CFIJS,Frieze,Janson95} and related research works.
Let $K_{{\boldsymbol \alpha}[n]}$ denote a graph with $\langle {\boldsymbol \alpha}[n]|{\bf 1}\rangle=\langle {\boldsymbol \alpha}|{\bf 1}\rangle+o(\sqrt{n})$
vertices divided into $k$ partitions of respective sizes 
$$\alpha_1[n],\hdots,\alpha_k[n],$$ 
where, each vertex in the $i$-th partition is connected with each vertex in the $j$-th partition by an edge if and only if $v_{i,j}=v_{j,i}>0$.
Even within an $i$-th partition, if $v_{i,i}>0$, a pair of vertices is connected by an edge.

\medskip
\noindent
Now, we equip the edges in the graph $K_{{\boldsymbol \alpha}[n]}$ with edge lengths as follows. 
For each edge $e$ connecting a vertex in the $i$-th partition with a vertex in the $j$-th partition
we have an associated random variable $\ell_e \sim {\sf Beta}\left(1,v_{i,j}\right)$, distributed on $(0,1)$ via 
the beta probability density function 
$$f_{i,j}(x)=v_{i,j}(1-x)^{v_{i,j}-1}, \qquad 0<x<1.$$
Random variables $\{\ell_e\}_e$ are sampled independently. 
Here, $\ell_e$ represents the length of edge $e$. 

\medskip
\noindent
The length of a tree graph is the sum of the lengths $\ell_e$ of its edges.
Consider the minimal spanning tree of $K_{{\boldsymbol \alpha}[n]}$, i.e., a spanning tree of $K_{{\boldsymbol \alpha}[n]}$ with the minimal length.
Let random variable $L_n$ denote the length of the minimal spanning tree of $K_{{\boldsymbol \alpha}[n]}$. 
We are interested in finding the limit $\,\lim_{n \goto \infty} \mathbb{E}[L_n]$.
Following the approach in \cite{KOY}, $\,\lim_{n \goto \infty} \mathbb{E}[L_n]$ will be expressed using the closed form solution 
\eqref{eqn:MSEsol} of the modified Smoluchowski equations.

\medskip
\noindent
The connection between the limit $\,\lim_{n \goto \infty} \mathbb{E}[L_n]\,$ and $\zeta_{\bf x}(t)$ is established using the following framework.
First, we construct a random graph process by considering a ``time" parameter $p \in [0,1]$, and declaring an edge $e$ ``open" if $\ell_e \leq p$
and ``closed'' if $\ell_e > p$. 
Thus, at time $p$, we have a graph consisting of $\langle {\boldsymbol \alpha}[n]|{\bf 1}\rangle$ vertices and all open edges.
The random graph process $G(n,p)$ describes the corresponding percolation dynamics on graph $K_{{\boldsymbol \alpha}[n]}$ equipped with edge lengths $\ell_e$. 
Process $G(n,p)$ partitions $K_{{\boldsymbol \alpha}[n]}$ into clusters of vertices connected by open edges at time $p$.

\medskip
\noindent
Notice that under the time change $p=1-e^{-t/n}$ with $t \in [0,\infty)$, for an edge $e$ connecting a vertex in the $i$-th partition with a vertex in the $j$-th partition we have
$$P(\ell_e \leq 1-e^{-t/n})=1-e^{-tv_{i,j}/n}.$$
Thus, in $G(n,1-e^{-t/n})$, an edge adjacent to a vertex in the $i$-th partition and a vertex in the $j$-th partition would open after waiting 
for an exponentially distributed arrival time with parameter $v_{i,j}/n$.

\medskip
\noindent
There is a one-to-one correspondence between connected clusters in $G(n,p)$ and vectors in $\mathbb{Z}_+^k$. 
Namely, a cluster with $x_i$ vertices in the $i$-th partition is represented by a vector ${\bf x}$ in $\mathbb{Z}_+^k$ with coordinates $x_i$.
Consider the random graph process $G(n,1-e^{-t/n})$.
For two clusters composed of two disjoint subsets of vertices in $K_{{\boldsymbol \alpha}[n]}$, represented by vectors ${\bf x}$ and ${\bf y}$, 
the waiting time for the clusters to connect via an open edge will be  
an exponential random variable with parameter $\langle {\bf x}|V|{\bf y}\rangle/n$.
Hence, the merger dynamics of clusters in the random graph process $G(n,1-e^{-t/n})$ matches the merger dynamics of clusters in 
the vector-multiplicative coalescent process. That is, if we let $\xi_{\bf x}^{[n]}(t)$ denote the number of clusters represented by vector ${\bf x}$
in $G(n,1-e^{-t/n})$, then $\,\Big(\xi_{\bf x}^{[n]}(t) \Big)_{\bf x}$ is distributed as the Marcus-Lushnikov process
$\,{\bf ML}_n(t)=\Big(\zeta_{\bf x}^{[n]}(t) \Big)_{\bf x}$.


\medskip
\noindent
As $\zeta_{\bf x}^{[n]}(t)$, and therefore, $\xi_{\bf x}^{[n]}(t)$ converges weakly to $\zeta_{\bf x}(t)$, the approach in \cite{KOY} yields 
$$\lim_{n \goto \infty} \mathbb{E}[L_n] = \sum_{\!{\bf x} : \langle {\bf x}|{\bf 1}\rangle>0~} \int\limits_0^\infty \zeta_{\bf x}(t) \, dt.$$

\medskip
\noindent
In Corollary \ref{cor:LnKxV}, the closed form expression \eqref{eqn:MSEsol} of the solution $\zeta_{\bf x}(t)$ to the modified Smoluchowski equations
is substituted, yielding the following general asymptotic equation
$$\lim_{n \goto \infty} \mathbb{E}[L_n] = \sum_{{\bf x}: \langle {\bf x}|{\bf 1}\rangle>0} {(\langle {\bf x} |\boldsymbol{1} \rangle -1)! \over {\bf x}!}\boldsymbol{\alpha}^{\bf x} {\tau(K_k, x_i x_j v_{i,j}) \over {\bf x}^{\bf 1}}(V {\bf x})^{{\bf x}- {\bf 1}} \, \langle {\bf x} | V | \boldsymbol{\alpha} \rangle^{-\langle {\bf x} | {\bf 1} \rangle}.$$

\bigskip
\noindent
Additionally, recalling a known correspondence between the gelation time $T_{gel}$ in the Marcus-Lushnikov process and the time $p_c$ of formation
of a giant component in $G(n,p)$, we have $p_c \sim 1-e^{-T_{gel}/n} \sim {T_{gel} \over n}$. Hence,
$$p_c \sim {1 \over n\,\rho (V D[\alpha_i] )}.$$

\bigskip
\noindent
Finally, random graph processes $G(n,1-e^{-t/n})$ have many features similar to the {\it inhomogeneous random graphs} formalism considered in
S\"{o}derberg \cite{Soderberg},    Bollob\'{a}s et all \cite{BJR2007}, and related papers.
We hope that the results of this current manuscript can be used in the study of inhomogeneous random graphs

\medskip
\noindent
We conclude the introduction by noticing that the parts of the paper on Smoluchowski coagulation equations and on spanning trees 
are tied to a number of interesting Abel's type multinomial identities such as \eqref{eqn:abeltau}. 

\bigskip

\section{Multidimensional Lambert-Euler inversion}\label{sec:euler}
In this section we will prove Theorem \ref{thm:multEulerLambert} that enables the multidimensional Lambert-Euler inversion.
Additionally, in Lemma \ref{lem:massdissipation} we will show that for ${\boldsymbol \alpha}\in(0,\infty)^k$, $\,\lim\limits_{t \to \infty}{{\bf y}(t) \over t}={\boldsymbol 0}$.

\medskip
\noindent
For ${\bf z}\in(0,\infty)^k$, let function $F({\bf x};{\bf z})$ be defined as follows 
\begin{equation*}
| F({\bf x};{\bf z}) \rangle=| {\bf x} \rangle -\sum\limits_{j=1}^k z_j  e^{\langle {\bf e}_j  | V | {\bf x}-{\bf z} \rangle} |{\bf e}_j\rangle, \qquad \forall {\bf x} \in (0,\infty)^k.
\end{equation*}
Notice that a root ${\bf x}$ of $F({\bf x};{\bf z})$ is a solution of \eqref{eqn:yzSolsCoord}.
Next, for a given ${\bf z}\in(0,\infty)^k$, we find the Jacobian matrix of $F({\bf x};{\bf z})$ in the equation below
\begin{align}\label{eqn:fieldFjacobian}
{\partial F({\bf x};{\bf z}) \over \partial {\bf x}}&=I-\sum\limits_{i,j=1}^k z_j  e^{\langle {\bf e}_j  | V | {\bf x}-{\bf z} \rangle} |{\bf e}_j\rangle \langle {\bf e}_j  | V | {\bf e}_i \rangle \langle {\bf e}_i  |
=I-\sum\limits_{j=1}^k z_j  e^{\langle {\bf e}_j  | V | {\bf x}-{\bf z} \rangle} |{\bf e}_j\rangle \langle {\bf e}_j  | V  \nonumber \\
&=I-D\left[z_j  e^{\langle {\bf e}_j  | V | {\bf x}-{\bf z} \rangle}\right]V.
\end{align}

\medskip
\noindent
We will need the following trivial proposition.
\begin{prop}\label{prop:rhoBigger}
For a given pair of vectors $\,{\bf a} \not= {\bf b}$ in $(0,\infty)^k$, if $a_i \leq b_i$ for all coordinates $i$, then
$\rho(V D[a_j]) < \rho(V D[b_j]).$
\end{prop}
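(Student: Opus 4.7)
The plan is to invoke Perron--Frobenius theory for the nonnegative matrices $VD[a_j]$ and $VD[b_j]$. Because the diagonal matrices $D[a_j]$ and $D[b_j]$ have strictly positive entries, the two products share the zero/nonzero pattern of $V$, so both inherit irreducibility from $V$. Perron--Frobenius then guarantees that each admits its spectral radius as a simple eigenvalue with strictly positive right and left eigenvectors.

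Using ${\bf a}\le{\bf b}$ componentwise, I would obtain the entrywise matrix inequality $VD[a_j] \le VD[b_j]$. Since ${\bf a} \ne {\bf b}$, there is some index $i_0$ with $a_{i_0} < b_{i_0}$, and irreducibility of $V$ forces the $i_0$-th column of $V$ to contain a positive entry $v_{k_0,i_0} > 0$; hence the entrywise inequality is \emph{strict} in position $(k_0,i_0)$. Now let $|{\bf v}\rangle > 0$ be the Perron eigenvector of $VD[b_j]$ and $\langle{\bf u}| > 0$ the left Perron eigenvector of $VD[a_j]$. Then
$$VD[a_j]|{\bf v}\rangle \,\le\, VD[b_j]|{\bf v}\rangle \,=\, \rho(VD[b_j])\,|{\bf v}\rangle,$$
with strict inequality in the $k_0$-th coordinate, because the difference in that coordinate is at least $v_{k_0,i_0}(b_{i_0}-a_{i_0})v_{i_0} > 0$. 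Pairing on the left with $\langle{\bf u}|$ gives
$$\rho(VD[a_j])\,\langle{\bf u}|{\bf v}\rangle \,=\, \langle{\bf u}|VD[a_j]|{\bf v}\rangle \,<\, \rho(VD[b_j])\,\langle{\bf u}|{\bf v}\rangle,$$
and dividing by the positive scalar $\langle{\bf u}|{\bf v}\rangle$ yields the desired strict inequality.

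The only subtlety, and in my view the point where any real work happens, is promoting the weak inequality of spectral radii to a strict one. This is exactly where irreducibility of $V$ is indispensable: it simultaneously supplies a row witnessing strict entrywise domination $VD[a_j] \lneq VD[b_j]$ and guarantees positivity of the left Perron eigenvector $\langle{\bf u}|$ needed to detect that witness after pairing. Without irreducibility one could only conclude $\rho(VD[a_j]) \le \rho(VD[b_j])$.
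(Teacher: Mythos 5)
Your proof is correct. The paper simply declares this proposition ``trivial'' and supplies no proof of its own, so there is nothing to compare against; your argument via Perron--Frobenius (pairing the right Perron eigenvector of $VD[b_j]$ with the left Perron eigenvector of $VD[a_j]$ to detect the strict entrywise gap) is the standard and complete way to establish it.
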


\medskip
\noindent
The following lemma is instrumental for establishing uniqueness in Theorem \ref{thm:multEulerLambert}.
\begin{lemma}\label{lem:R0}
If ${\bf y},{\bf z} \in \overline{R}_0$ satisfy equation \eqref{eqn:yzSolsCoord}, then ${\bf y}={\bf z}$.
\end{lemma}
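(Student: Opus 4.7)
The plan is to parametrise a log-linear interpolation from ${\bf z}$ to ${\bf y}$, use the concavity of each log-coordinate of $\Psi_V$ along the path, and conclude with a Perron--Frobenius / Neumann-series argument.

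Introduce $\boldsymbol\sigma := \log{\bf y} - \log{\bf z}$ (coordinatewise) and ${\bf z}(s) := \exp\!\bigl((1-s)\log{\bf z} + s\log{\bf y}\bigr)$ for $s\in[0,1]$, so that ${\bf z}(0)={\bf z}$, ${\bf z}(1)={\bf y}$, and ${\bf z}'(s) = D[z_j(s)]\boldsymbol\sigma$. Equation \eqref{eqn:yzSolsCoord} reads $\Psi_V({\bf y}) = \Psi_V({\bf z})$; therefore, for every coordinate $j$, the function
\[
\eta_j(s) \;:=\; \log \Psi_V({\bf z}(s))_j \;=\; (1-s)\log z_j + s\log y_j - \langle {\bf e}_j | V | {\bf z}(s)\rangle
\]
satisfies $\eta_j(0)=\eta_j(1)$. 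Direct differentiation yields
\[
\eta_j'(s) = \sigma_j - \langle{\bf e}_j | V D[z_\ell(s)] | \boldsymbol\sigma\rangle, \qquad \eta_j''(s) = -\sum_\ell v_{j,\ell}\,z_\ell(s)\,\sigma_\ell^2 \;\leq\; 0,
\]
so each $\eta_j$ is concave, $\eta_j'$ non-increasing, and $\int_0^1 \eta_j'(s)\,ds = 0$ forces $\eta_j'(0) \geq 0 \geq \eta_j'(1)$. Coordinate-wise this reads
\[
(I - V D[z_j])\boldsymbol\sigma \;\geq\; {\bf 0} \qquad\text{and}\qquad (V D[y_j] - I)\boldsymbol\sigma \;\geq\; {\bf 0}.
\]

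Assume, toward contradiction, that ${\bf y} \neq {\bf z}$, so $\boldsymbol\sigma \neq {\bf 0}$. Irreducibility of $V$ supplies some $\ell_0$ with $\sigma_{\ell_0} \neq 0$ and some $j_0$ with $v_{j_0,\ell_0} > 0$; for this $j_0$, $\eta_{j_0}''(s) < 0$ strictly on $[0,1]$, and consequently both displayed inequalities are strict at the coordinate $j_0$. Let $\langle u_z|$ and $\langle u_y|$ be the strictly positive left Perron eigenvectors of $V D[z_j]$ and $V D[y_j]$, with eigenvalues $\rho_z = \rho(V D[z_j]) \leq 1$ and $\rho_y = \rho(V D[y_j]) \leq 1$. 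Contracting the two inequalities against them, and using $u_{z,j_0}, u_{y,j_0} > 0$, gives
\[
(1 - \rho_z)\,\langle u_z | \boldsymbol\sigma \rangle \;>\; 0 \qquad\text{and}\qquad (\rho_y - 1)\,\langle u_y | \boldsymbol\sigma \rangle \;>\; 0.
\]
Combined with $\rho_z, \rho_y \leq 1$, the first relation forces $\rho_z < 1$ (with $\langle u_z|\boldsymbol\sigma\rangle > 0$) and the second forces $\rho_y < 1$ (with $\langle u_y|\boldsymbol\sigma\rangle < 0$); in particular the boundary cases $\rho_z = 1$ or $\rho_y = 1$ are eliminated automatically.

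Finally, $\rho_z, \rho_y < 1$ make $(I - V D[z_j])^{-1} = \sum_{n\geq 0}(V D[z_j])^n$ and $(I - V D[y_j])^{-1} = \sum_{n\geq 0}(V D[y_j])^n$ entry-wise nonnegative. Applying these to the respective coordinate-wise inequalities yields $\boldsymbol\sigma \geq {\bf 0}$ and $\boldsymbol\sigma \leq {\bf 0}$, whence $\boldsymbol\sigma = {\bf 0}$, contradicting ${\bf y} \neq {\bf z}$. The main obstacle I anticipate is the middle Perron--Frobenius step: the strict inequality at the single index $j_0$ (supplied by irreducibility together with the strict concavity of $\eta_{j_0}$) is exactly what prevents the contractions against the left Perron eigenvectors from degenerating to $0 = 0$, thereby handling the ostensible boundary cases without separate treatment.
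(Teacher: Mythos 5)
Your proof is correct and takes a genuinely different route from the paper. The paper works with a Euclidean ``rectangular'' path through the coordinatewise minimum ${\bf x}^* = \min({\bf y},{\bf z})$, integrates the Jacobian of $F(\cdot;{\bf z})$ along the two legs, exploits their orthogonality, and then closes with a norm/spectral-radius estimate on the two error vectors $\mathcal E_I$ and $\mathcal E_{II}$. You instead use a log-linear path ${\bf z}(s) = \exp((1-s)\log{\bf z}+s\log{\bf y})$, observe that each log-coordinate $\eta_j(s) = \log\Psi_V({\bf z}(s))_j$ is concave because $\eta_j'' = -\sum_\ell v_{j,\ell}z_\ell(s)\sigma_\ell^2 \le 0$, and then turn the two endpoint constraints $\eta_j'(0)\ge 0\ge\eta_j'(1)$ into the coordinate-wise linear inequalities $(I-VD[z_j])\boldsymbol\sigma\ge{\bf 0}$ and $(VD[y_j]-I)\boldsymbol\sigma\ge{\bf 0}$, strict at some index by irreducibility. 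Contracting against the left Perron vectors forces both spectral radii strictly below $1$, and the Neumann-series nonnegativity of $(I-VD[z_j])^{-1}$ and $(I-VD[y_j])^{-1}$ then squeezes $\boldsymbol\sigma$ to zero. Your argument stays entirely at the level of spectral radii and order-preserving inverses, avoiding the paper's step of bounding the Euclidean norm of a (generally non-symmetric) matrix--vector product by the spectral radius. It is also somewhat more self-contained in how it dispatches the boundary case $\rho = 1$: the strict concavity at $j_0$ directly rules it out via the Perron contraction, whereas the paper routes this case through Corollary~\ref{cor:R1R0} in combination with Lemma~\ref{lem:R1toR0}.
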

\begin{proof}
We will prove this statement by contradiction. Suppose there is a pair ${\bf y},{\bf z} \in \overline{R}_0$ 
satisfying ${\bf y}\not={\bf z}$ and  \eqref{eqn:yzSolsCoord}. Then, $F({\bf y};{\bf z})=0$.
Consider a set of indices $\,\mathcal{I}=\{i\,:\, y_i<z_i\}$, and let $\mathcal{I}^c=\{i\,:\, y_i\geq z_i\}$ denote its complement.

\medskip
\noindent
Next, consider a point ${\bf x}^*$ with coordinates
\begin{equation*}
|{\bf x}^*\rangle = |{\bf z}\rangle+\sum\limits_{i \in \mathcal{I}}(y_i-z_i)|{\bf e}_i\rangle
\end{equation*}
and a path ${\bf x}(t),$ $\,t \in [0,1]$, connecting ${\bf z}$ and ${\bf y}$ defined as
\begin{equation}\label{eqn:path_xtCases}
|{\bf x}(t)\rangle = \begin{cases}
      |{\bf z}\rangle +2t|{\bf x}^*-{\bf z}\rangle & \text{ for } t \in [0,1/2], \\
      |{\bf x}^*\rangle +(2t-1)|{\bf y}-{\bf x}^*\rangle & \text{ for } t \in [1/2,1].
\end{cases}
\end{equation}

\medskip
\noindent
Now, since $F({\bf y};{\bf z})=F({\bf z};{\bf z})=0$, then from \eqref{eqn:fieldFjacobian} and \eqref{eqn:path_xtCases} we have
\begin{align}\label{eqn:int}
0 =& F({\bf y};{\bf z})-F({\bf z};{\bf z}) =\int\limits_0^1 \big | dF({\bf x}(t);{\bf z}) \big\rangle =\int\limits_0^1 {\partial F({\bf x};{\bf z}) \over \partial {\bf x}} ~\Big| \, d{\bf x}(t) \Big\rangle \nonumber \\
=&\int\limits_0^{1/2} {\partial F({\bf x};{\bf z}) \over \partial {\bf x}} ~\Big| \, d{\bf x}(t) \Big\rangle
+\int\limits_{1/2}^1 {\partial F({\bf x};{\bf y}) \over \partial {\bf x}} ~\Big| \, d{\bf x}(t) \Big\rangle \nonumber \\
=& 2\!\!\int\limits_0^{1/2} \Big( I-D\left[z_j  e^{2t\langle {\bf e}_j  | V | {\bf x}^*-{\bf z} \rangle}\right]V \Big)  \, \Big| \, {\bf x}^*-{\bf z} \Big\rangle \,dt
+2\!\!\int\limits_{1/2}^1 \Big( I-D\left[y_j  e^{2(t-1)\langle {\bf e}_j  | V |  {\bf y} -{\bf x}^* \rangle}\right]V  \Big)\, \Big| \, {\bf y}-{\bf x}^* \Big\rangle \,dt  \nonumber \\
&\qquad = |{\bf y}-{\bf z}\rangle -|\mathcal{E}_I\rangle -|\mathcal{E}_{II}\rangle,
\end{align}
where 
\begin{equation}\label{eqn:E1}
| \mathcal{E}_I \rangle=2\!\!\int\limits_0^{1/2} D\left[z_j  e^{2t\langle {\bf e}_j  | V | {\bf x}^*-{\bf z} \rangle}\right]V  \, \Big| \, {\bf x}^*-{\bf z} \Big\rangle \,dt
=D[z_j a_j]V | {\bf x}^*-{\bf z} \rangle
\end{equation}
with
$$a_j=\begin{cases}
      {1-e^{-\langle {\bf e}_j  | V | {\bf z}-{\bf x}^* \rangle} \over \langle {\bf e}_j  | V | {\bf z}-{\bf x}^* \rangle} & \text{ if } j \in \mathcal{I}, \\
      1 & \text{ if } j \in \mathcal{I}^c,
\end{cases}$$
and similarly,
\begin{equation}\label{eqn:E2}
| \mathcal{E}_{II}\rangle=2\!\!\int\limits_{1/2}^1  D\left[y_j  e^{2(t-1)\langle {\bf e}_j  | V |  {\bf y} -{\bf x}^* \rangle}\right]V  \, \Big| \, {\bf y}-{\bf x}^* \Big\rangle \,dt
= D[y_j b_j]V  | {\bf y}-{\bf x}^* \rangle
\end{equation}
with
$$b_j=\begin{cases}
      1 & \text{ if } j \in \mathcal{I}, \\
      {1-e^{-\langle {\bf e}_j  | V | {\bf y}-{\bf x}^* \rangle} \over \langle {\bf e}_j  | V | {\bf y}-{\bf x}^* \rangle} & \text{ if } j \in \mathcal{I}^c.
\end{cases}$$

\medskip
\noindent
Notice that for $j \in \mathcal{I}$, we have $\langle {\bf e}_j  | V | {\bf z}-{\bf x}^* \rangle >0$ and  $\, a_j={1-e^{-\langle {\bf e}_j  | V | {\bf z}-{\bf x}^* \rangle} \over \langle {\bf e}_j  | V | {\bf z}-{\bf x}^* \rangle} \in (0,1)$.
Similarly, for $j \in \mathcal{I}^c$, we have $\langle {\bf e}_j  | V | {\bf y}-{\bf x}^* \rangle>0$ and $\, b_j={1-e^{-\langle {\bf e}_j  | V | {\bf y}-{\bf x}^* \rangle} \over \langle {\bf e}_j  | V | {\bf y}-{\bf x}^* \rangle} \in (0,1)$.
Thus, by Proposition~\ref{prop:rhoBigger},
\begin{equation}\label{eqn:rhoBigger}
\rho\left(D[z_j  a_j]V\right) \leq \rho\left(D[z_j]V\right)
\quad
\text{ and }
\quad
\rho\left(D[y_j  b_j]V\right)  \leq \rho\left(D[y_j]V\right),
\end{equation}
where the first inequality is strict if ${\bf z}\not={\bf x}^*$ and the second inequality is strict if ${\bf y}\not={\bf x}^*$.

\medskip
\noindent 
  Now, since $|\mathcal{E}_I \rangle$ in \eqref{eqn:E1} has all nonpositive coordinates
  and  $| \mathcal{E}_{II} \rangle$ in \eqref{eqn:E2} has all nonnegative coordinates, we have
  $$\langle \mathcal{E}_I \,|\, \mathcal{E}_{II} \rangle =\langle \mathcal{E}_{II} \,|\, \mathcal{E}_I \rangle \leq 0.$$
  Therefore, since $\rho\left(D[y_j]V\right) \leq 1$ and either ${\bf y}\not={\bf x}^*$ or ${\bf z}\not={\bf x}^*$ or both hold, 
  equations \eqref{eqn:E1}, \eqref{eqn:E2}, and \eqref{eqn:rhoBigger} imply
  \begin{align*}
  \big\|\mathcal{E}_I+\mathcal{E}_{II}\big\|^2 &\leq \big\|\mathcal{E}_I\big\|^2 +\big\|\mathcal{E}_{II}\big\|^2 
  ~< \rho^2\!\left(D[z_j]V\right) \big\|{\bf z}-{\bf x}^*\big\|^2 +\rho^2\!\left(D[y_j]V\right)\big\| {\bf y}-{\bf x}^*\big\|^2 \\
  &\leq  \big\|{\bf z}-{\bf x}^*\big\|^2 +\big\| {\bf y}-{\bf x}^*\big\|^2 ~=\big\| {\bf y}-{\bf z}\big\|^2
  \end{align*}
  as $\,\langle {\bf y}-{\bf x}^* \, | \, {\bf z}-{\bf x}^* \rangle=0$. 
  The contradiction to equation \eqref{eqn:int} follows.
\end{proof}

\medskip
\noindent
Let $|{\bf 1}\rangle=\sum\limits_{i=1}^k | {\bf e}_i \rangle$ denote the vector with all of its coordinates equal to $1$, and
let $|{\bf 0}\rangle$ denote the vector of zeros. 
For a vector ${\bf x} \in (0,\infty)^k$ with coordinates $x_i$, let $| {\bf x}^{-1} \rangle=\sum\limits_{i=1}^k x_i^{-1} | {\bf e}_i \rangle$ denote the vector with coordinates $x_i^{-1}$. Also, for vectors ${\bf a}$ and ${\bf b}$ in $\mathbb{R}^k$, we will write ${\bf a}<{\bf b}$ if $a_i < b_i$ for all $i$. 
Similarly, for matrices $A$ and $B$ in $\mathbb{R}^{k \times k}$, we will write $A<B$ if the inequality holds coordinate-wise.

\medskip
\noindent
We will need the following lemma. 
\begin{lemma}\label{lem:R1toR0}
For any given ${\bf z}\in R_1$, there exists a vector ${\boldsymbol \eta}=\sum\limits_{i=1}^k \eta_i | {\bf e}_i \rangle$ such that 
$$0<\eta_i<1 \qquad \forall i,$$
$\rho\left(V D[z_j \eta_j] \right)=1$, and
\begin{equation}\label{eqn:eta1}
V D[z_j]| {\bf 1}-{\boldsymbol \eta} \rangle=| {\boldsymbol \eta}^{-1} \rangle -| {\bf 1}\rangle.
\end{equation}
\end{lemma}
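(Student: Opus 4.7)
The plan is to substitute $|\phi\rangle := |{\bf 1} - {\boldsymbol \eta}\rangle$ and reformulate \eqref{eqn:eta1} as a fixed-point problem on $(0,1)^k$, solve it by a monotone iteration seeded by the Perron--Frobenius eigenvector of $VD[z_j]$, and finally obtain the spectral radius condition $\rho(VD[z_j\eta_j])=1$ for free from Perron--Frobenius applied to the resulting matrix.

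First I would rewrite \eqref{eqn:eta1}. Since $|{\boldsymbol \eta}^{-1}\rangle - |{\bf 1}\rangle = D[\eta_j^{-1}]\,|{\bf 1}-{\boldsymbol \eta}\rangle$, equation \eqref{eqn:eta1} is equivalent to $D[\eta_j]\,VD[z_j]\,|\phi\rangle = |\phi\rangle$, or coordinate-wise
$$\phi_i \;=\; \frac{\langle {\bf e}_i|VD[z_j]|\phi\rangle}{1+\langle {\bf e}_i|VD[z_j]|\phi\rangle} \;=:\; G_i(\phi), \qquad i=1,\dots,k,$$
together with $\eta_j = 1-\phi_j$. Thus it suffices to exhibit a fixed point $\phi^* \in (0,1)^k$ of the map $G:[0,\infty)^k \to [0,1)^k$ defined by the right-hand side, and set $\eta_j := 1 - \phi^*_j$.

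To produce $\phi^*$, I would use monotone iteration. The map $G$ is continuous and, since $V,D[z_j]\geq 0$ and $y\mapsto y/(1+y)$ is increasing on $[0,\infty)$, componentwise non-decreasing. Because ${\bf z}\in R_1$ we have $\lambda:=\rho(VD[z_j])>1$, and the matrix $VD[z_j]$ is nonnegative and irreducible (its directed graph coincides with that of $V$ since $z_j>0$). Perron--Frobenius then supplies a strictly positive eigenvector ${\bf v}>0$ with $VD[z_j]{\bf v}=\lambda {\bf v}$. For $\varepsilon>0$ small enough that $\lambda\varepsilon v_i<\lambda-1$ for every $i$, a direct calculation gives $G(\varepsilon {\bf v})_i = \lambda \varepsilon v_i/(1+\lambda\varepsilon v_i) > \varepsilon v_i$ coordinate-wise. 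Iterating $\phi^{(n+1)}=G(\phi^{(n)})$ from $\phi^{(0)}=\varepsilon {\bf v}$, monotonicity of $G$ yields a coordinate-wise non-decreasing sequence that is bounded above by ${\bf 1}$; its limit $\phi^*$ is a fixed point satisfying $\phi^*\geq \varepsilon{\bf v}>0$, while $\phi^*<{\bf 1}$ automatically because $G$ takes values in $[0,1)^k$. Setting $\eta_j := 1-\phi_j^*\in(0,1)$ then solves \eqref{eqn:eta1}.

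The spectral condition $\rho(VD[z_j\eta_j])=1$ now comes for free: the identity $D[\eta_j]VD[z_j]|\phi^*\rangle=|\phi^*\rangle$ exhibits $|\phi^*\rangle>0$ as a strictly positive eigenvector of the nonnegative irreducible matrix $D[\eta_j]VD[z_j]$ with eigenvalue $1$, and Perron--Frobenius forces the spectral radius to equal $1$; then $\rho(D[\eta_j]VD[z_j])=\rho(VD[z_j]D[\eta_j])=\rho(VD[z_j\eta_j])$ because $AB$ and $BA$ share the same characteristic polynomial. The main obstacle is the choice of initial seed: iterating $G$ from $\phi={\bf 1}$ would give a decreasing sequence liable to collapse to the spurious fixed point $\phi={\bf 0}$ (corresponding to $\eta={\bf 1}$), so it is essential to exploit the hypothesis $\lambda>1$ by seeding with a small positive multiple of the Perron eigenvector of $VD[z_j]$ to force monotone increase toward a nondegenerate fixed point.
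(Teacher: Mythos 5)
Your proof is correct and takes essentially the same route as the paper's. Both proofs reduce \eqref{eqn:eta1} to the fixed-point equation $D[\eta_j]VD[z_j]|{\bf 1}-{\boldsymbol \eta}\rangle = |{\bf 1}-{\boldsymbol \eta}\rangle$, run the same monotone iteration seeded by a small multiple of the Perron eigenvector of $VD[z_j]$ (your map $G$ is exactly the paper's update rule $w_i^{(k)} = 1-\eta_i^{(k)} = \langle{\bf e}_i|VD[z_j]|{\bf w}^{(k-1)}\rangle/(1+\langle{\bf e}_i|VD[z_j]|{\bf w}^{(k-1)}\rangle)$), and conclude $\rho(VD[z_j\eta_j])=1$ by exhibiting a strictly positive eigenvector with eigenvalue $1$; your use of $\rho(AB)=\rho(BA)$ is a slightly tidier way to reach the same diagonal conjugation the paper carries out explicitly, and your direct inequality $\lambda\varepsilon v_i<\lambda-1$ replaces the paper's first-order $O(\varepsilon^2)$ estimate for the base case, but the argument is the same.
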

\begin{proof}
Let ${\bf u}>{\boldsymbol 0}$ (i.e., $u_i>0$ $\forall i$) be the Perron-Frobenius eigenvector of $V D[z_j]$, then since $\rho\left(V D[z_j] \right)>1$,
we have
$$\langle {\bf e}_i | V D[z_j] -I | {\bf u}\rangle ~>0 \quad \text{ for all }~i,$$
and therefore
\begin{equation}\label{eqn:u1}
\langle {\bf e}_i |  V D[z_j]| {\bf u}  \rangle ~>u_i \quad \text{ for all }~i.
\end{equation}
Consider two sequences of vectors in $(0,\infty)^k$, ${\boldsymbol \eta}^{(k)}$ and ${\bf w}^{(k)}$ evolving according to the following recursion 
\begin{equation}\label{eqn:wetaRec}
\eta_i^{(k)}={1 \over 1+ \langle {\bf e}_i |  V D[z_j]| {\bf w}^{(k-1)}  \rangle} \quad \text{ and }\quad {\bf w}^{(k)}={\bf 1}-{\boldsymbol \eta}^{(k)}.
\end{equation}
Let the sequences begin with ${\bf w}^{(0)}=1-{\boldsymbol \eta}^{(0)}=\varepsilon {\bf u}$ with $\varepsilon>0$ sufficiently small so that
${\bf w}^{(0)} <{\bf 1}$, and by \eqref{eqn:u1},
$$\eta_i^{(1)}={1 \over 1+ \langle {\bf e}_i |  V D[z_j]| {\bf w}^{(0)}  \rangle}=1- \langle {\bf e}_i |  V D[z_j]| {\bf w}^{(0)} \rangle+O(\varepsilon^2) 
~<~ 1-w^{(0)}_i=\eta_i^{(0)}  \qquad \text{ for all }~i.$$
Then, ${\boldsymbol \eta}^{(0)}>{\boldsymbol \eta}^{(1)}$, and by \eqref{eqn:wetaRec}, ${\bf w}^{(1)}>{\bf w}^{(0)}$, which in turn implies
${\boldsymbol \eta}^{(1)}>{\boldsymbol \eta}^{(2)}$, and so on. Recursively obtaining
$${\boldsymbol \eta}^{(k)}>{\boldsymbol \eta}^{(k+1)} \quad \text{ and }\quad {\bf w}^{(k+1)}>{\bf w}^{(k)}$$
for all $k=0,1,\hdots$. Hence, the limits 
$$\lim\limits_{k \to \infty} {\boldsymbol \eta}^{(k)}= {\boldsymbol \eta}  \quad \text{ and }\quad \lim\limits_{k \to \infty} {\bf w}^{(k)}={\bf w}$$
satisfy 
\begin{equation}\label{eqn:weta}
{\boldsymbol 0}<{\boldsymbol \eta}={\bf 1}-{\bf w}< {\bf 1} \qquad \text{ and }\qquad \eta_i={1 \over 1+ \langle {\bf e}_i |  V D[z_j]| {\bf w} \rangle}  \quad \text{ for all }~i.
\end{equation}
Equation \eqref{eqn:weta} implies
$$\eta_i \langle {\bf e}_i |  V D[z_j]| {\bf w} \rangle=1-\eta_i=w_i  \quad \text{ for all }~i.$$

\medskip 
\noindent
Thus, $\eta_i w_i^{-1}=\langle {\bf e}_i |  V D[z_j]| {\bf w} \rangle^{-1}=\langle {\bf e}_i |  V D[z_j w_j]| {\bf 1} \rangle^{-1}\,$, and
\begin{equation}\label{eqn:etaVDz}
D[\eta_j w_j^{-1}]VD[z_j w_j] | {\bf 1} \rangle=| {\bf 1} \rangle ,
\end{equation}
i.e., all the rows of $D[\eta_j w_j^{-1}]VD[z_j w_j]$ add up to $1$.

\medskip 
\noindent
For the vector $| {\bf w}{\boldsymbol \eta}^{-1} \rangle=\sum\limits_{i=1}^k w_i \eta_i^{-1} | {\bf e}_i \rangle$,
equation \eqref{eqn:etaVDz} yields
$$V D[z_j \eta_j]| {\bf w}{\boldsymbol \eta}^{-1} \rangle=D[w_j \eta_j^{-1}] D[\eta_j w_j^{-1}]VD[z_j w_j] | {\bf 1} \rangle=D[w_j \eta_j^{-1}]| {\bf 1} \rangle=| {\bf w}{\boldsymbol \eta}^{-1} \rangle.$$
Hence, by Perron-Frobenius theorem, $1$ is the Perron-Frobenius eigenvalue of $V D[z_j \eta_j]$, the spectral radius $\rho\left(V D[z_j \eta_j] \right)=1$, 
and $| {\bf w}{\boldsymbol \eta}^{-1} \rangle$ is the corresponding Perron-Frobenius eigenvector.

\medskip 
\noindent
Therefore, as $\,{\bf 1}-{\boldsymbol \eta}={\bf w}$,
$$VD[z_j] | {\bf 1}-{\boldsymbol \eta} \rangle=VD[z_j \eta_j] | {\bf w}{\boldsymbol \eta}^{-1} \rangle=| {\bf w}{\boldsymbol \eta}^{-1} \rangle=| {\boldsymbol \eta}^{-1} \rangle -| {\bf 1}\rangle$$
yielding the claim in \eqref{eqn:eta1}.
\end{proof}

\bigskip
\begin{lemma}\label{lem:R1toR0}
For any given ${\bf z}\in R_1$, there exists a unique vector ${\bf y}\in R_0$ such that \eqref{eqn:yzSolsCoord} is satisfied.
Moreover, ${\bf y}<{\bf z}$.
\end{lemma}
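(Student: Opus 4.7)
Uniqueness is immediate from Lemma \ref{lem:R0}: if ${\bf y},{\bf y}'\in R_0\subset\overline{R}_0$ both satisfy \eqref{eqn:yzSolsCoord}, then $\Psi_V({\bf y})=\Psi_V({\bf y}')$ is nothing but \eqref{eqn:yzSolsCoord} with ${\bf z}$ replaced by ${\bf y}$, so Lemma \ref{lem:R0} forces ${\bf y}={\bf y}'$. What remains is existence together with the strict inequality ${\bf y}<{\bf z}$.

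The plan is to recast \eqref{eqn:yzSolsCoord} as a fixed-point problem for the coordinate-wise monotone map
\[
T({\bf y})_j \;=\; z_j \exp\bigl\{-\langle {\bf e}_j | V | {\bf z}-{\bf y}\rangle\bigr\}, \qquad j=1,\dots,k,
\]
and then run a monotone iteration from ${\bf 0}$ trapped underneath a carefully chosen strict super-solution. Since $V$ has nonnegative entries, $T$ is coordinate-wise monotone increasing in ${\bf y}$, and its fixed points in $(0,\infty)^k$ coincide with the solutions of \eqref{eqn:yzSolsCoord}; in particular $T({\bf z})={\bf z}$.

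The strict super-solution will be ${\bf z}_0:=D[\boldsymbol\eta]{\bf z}$, where $\boldsymbol\eta\in(0,1)^k$ is the vector produced by the preceding lemma, satisfying $\rho(VD[z_j\eta_j])=1$ and $\langle {\bf e}_j|VD[z_j]|{\bf 1}-{\boldsymbol\eta}\rangle=\eta_j^{-1}-1$. Plugging this identity into $T$ gives the tractable closed form
\[
T({\bf z}_0)_j \;=\; z_j\,e^{1-1/\eta_j},
\]
and the elementary inequality $e^{t-1}>t$ for $t>1$, applied with $t=1/\eta_j$, yields $T({\bf z}_0)_j<\eta_j z_j=(z_0)_j$ in every coordinate. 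With ${\bf z}_0<{\bf z}$ and $T({\bf z}_0)<{\bf z}_0$ in hand, I would iterate ${\bf y}^{(k+1)}:=T({\bf y}^{(k)})$ from ${\bf y}^{(0)}={\bf 0}$: monotonicity of $T$ makes the sequence nondecreasing, and ${\bf z}_0$ propagates through $T$ as an upper bound, so ${\bf y}^{(k)}\nearrow{\bf y}^*\leq{\bf z}_0$ and continuity gives $T({\bf y}^*)={\bf y}^*$, i.e.\ ${\bf y}^*$ solves \eqref{eqn:yzSolsCoord}. Since ${\bf z}_0$ itself is not a fixed point, ${\bf y}^*\neq{\bf z}_0$; Proposition \ref{prop:rhoBigger} then gives $\rho(VD[y^*_j])<\rho(VD[z_j\eta_j])=1$, placing ${\bf y}^*$ in $R_0$. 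Finally ${\bf y}^*\leq D[{\boldsymbol\eta}]{\bf z}<{\bf z}$ coordinate-wise since each $\eta_j<1$, yielding the required strict inequality.

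The one non-trivial step is the choice of the super-solution: once one recognizes that the algebraic identity supplied by the preceding lemma is precisely what is needed to collapse $T({\bf z}_0)$ into the scalar form $z_je^{1-1/\eta_j}$, the scalar inequality $\ln t<t-1$ for $t>1$ handles the rest and the remainder of the argument is routine monotone-convergence bookkeeping combined with Proposition \ref{prop:rhoBigger}.
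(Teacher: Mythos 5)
Your proof is correct, and it runs along a genuinely different track from the paper's. Both arguments pivot on the vector $\boldsymbol\eta$ supplied by the preceding lemma and on the same elementary scalar fact $t e^{1-t}<1$ for $t\neq 1$, which is exactly what makes $\boldsymbol\eta$ (equivalently ${\bf z}_0=D[\boldsymbol\eta]{\bf z}$) a \emph{strict} super-solution. Where the paths diverge is in how existence of a root is then extracted. The paper introduces $f_z(\boldsymbol\xi)=\ln\boldsymbol\xi+VD[z_i](\mathbf 1-\boldsymbol\xi)$, perturbs $\boldsymbol\eta$ to $\widetilde{\boldsymbol\eta}=(1-\delta)\boldsymbol\eta$ so that $\rho(VD[z_j\widetilde\eta_j])<1$, and then flows along the explicit ODE
\[
\frac{d}{dt}x_i(t)=-x_i(t)\,\bigl\langle {\bf e}_i\big|\bigl(I-VD[z_jx_j(t)]\bigr)^{-1}\big|f_z(\widetilde{\boldsymbol\eta})\bigr\rangle,
\]
which it engineers so that $f_z({\bf x}(t))=(1-t)f_z(\widetilde{\boldsymbol\eta})$ vanishes exactly at $t=1$. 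You instead phrase the problem as a fixed point of the coordinate-wise monotone map $T$ and run a Tarski/Kantorovich-type iteration from $\mathbf 0$, trapped below $T({\bf z}_0)<{\bf z}_0$. Your route avoids the Neumann-series resolvent argument entirely and therefore also sidesteps the need for the $\delta$-perturbation of $\boldsymbol\eta$ (you use ${\bf z}_0=D[\boldsymbol\eta]{\bf z}$ directly rather than $D[\widetilde{\boldsymbol\eta}]{\bf z}$), which is a modest simplification; the paper's ODE argument, on the other hand, delivers an explicit path that also reveals that $\rho(VD[z_jx_j(t)])<1$ is preserved along the flow. Two small bookkeeping points worth making explicit in your write-up: (i) ${\bf y}^{*}\ge{\bf y}^{(1)}=T(\mathbf 0)>\mathbf 0$, so ${\bf y}^{*}\in(0,\infty)^k$, which is needed before Proposition~\ref{prop:rhoBigger} can be invoked; and (ii) you in fact get the stronger ${\bf y}^{*}\le T({\bf z}_0)<{\bf z}_0$ coordinate-wise, so you don't even need the separate "${\bf z}_0$ is not a fixed point" observation to conclude ${\bf y}^{*}\ne{\bf z}_0$.
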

\begin{proof}
For ${\bf x} \in (0,\infty)^k$ with coordinates $x_i$, let $| \ln {\bf x}\rangle=\sum\limits_{i=1}^k \ln x_i | {\bf e}_i \rangle$ denote the vector with coordinates $\ln x_i$. 
Consider function $f_z: (0,\infty)^k \to \mathbb{R}^k$ defined as follows:
\begin{equation}\label{eqn:fz}
| f_z({\boldsymbol \xi})\rangle= |\ln {\boldsymbol \xi} \rangle+VD[z_i] | {\bf 1}-{\boldsymbol \xi} \rangle, \qquad {\boldsymbol \xi} \in (0,\infty)^k.
\end{equation}
For the vector ${\boldsymbol \eta}$ in Lemma \ref{lem:R1toR0}, we have
\begin{equation*}\label{eqn:fzeta}
| f_z({\boldsymbol \eta})\rangle = |\ln {\boldsymbol \eta} \rangle+VD[z_i] | {\bf 1}-{\boldsymbol \eta} \rangle= -|\ln {\boldsymbol \eta}^{-1} \rangle+| {\boldsymbol \eta}^{-1} \rangle -| {\bf 1}\rangle \
=-\sum\limits_{i=1}^k \ln\left(\eta_i^{-1} e^{1-\eta_i^{-1}}\right) | {\bf e}_i \rangle \,>\,{\bf 0}
\end{equation*}
since $\eta_i^{-1}>1$ $~\forall i~$ and $xe^{1-x}<1$ for all positive $x \not=1$.

\medskip 
\noindent
Now, since $| f_z({\boldsymbol \eta})\rangle>{\boldsymbol 0}$, by continuity of $f_z({\boldsymbol \xi})$, there exists $\delta\in(0,1)$ sufficiently small so that $\widetilde{\boldsymbol \eta}=(1-\delta){\boldsymbol \eta}$ satisfies  
$$| f_z(\widetilde{\boldsymbol \eta})\rangle>{\boldsymbol 0}.$$
Notice that since $\rho\left(V D[z_j \eta_j] \right)=1$, we have $\rho\left(V D[z_j \widetilde{\eta}_j] \right)=1-\delta<1$.

\medskip 
\noindent
Next, consider a smooth curve ${\bf x}(t)$ in $(0,\infty)^k$ that begins at ${\bf x}(0)=\widetilde{\boldsymbol \eta}$ and evolves according to
the following differential equations:
\begin{equation}\label{eqn:xtODE}
{d \over dt}x_i(t)=-x_i(t) \sum\limits_{m=0}^\infty \langle {\bf e}_i | \big(V D[z_j x_j(t)] \big)^m | f_z(\widetilde{\boldsymbol \eta})\rangle \qquad i=1,\hdots,k.
\end{equation}
As each $x_i(t)$ is monotone decreasing to $0$ at exponentially fast rate as $t \to \infty$, and since $\rho\big(V D[z_j x_j(0)] \big)<1$, 
by Prop.~\ref{prop:rhoBigger}, we have $\rho\big(V D[z_j x_j(t)] \big)<1$ for all $t \geq 0$. Thus,
$$\sum\limits_{m=0}^\infty \big(V D[z_j x_j(t)] \big)^m=\big(I-V D[z_j x_j(t)] \big)^{-1}$$
is well-defined for all $t \geq 0$. Therefore, \eqref{eqn:xtODE} yields
\begin{equation*}\label{eqn:logxtODE}
{d \over dt}|\ln {\bf x}(t) \rangle=-\big(I-V D[z_j x_j(t)] \big)^{-1} | f_z(\widetilde{\boldsymbol \eta})\rangle
\end{equation*}
and
\begin{align}\label{eqn:fzteta}
\big| f_z\big({\bf x}(t)\big)\big\rangle &= | f_z(\widetilde{\boldsymbol \eta})\rangle+|\ln {\bf x}(t) \rangle - |\ln \widetilde{\boldsymbol \eta} \rangle +VD[z_i] | \widetilde{\boldsymbol \eta}-{\bf x}(t) \rangle \nonumber \\
&= | f_z(\widetilde{\boldsymbol \eta})\rangle+\int\limits_0^t \Big( D\big[1/x_j(s)\big]-VD[z_i]\Big) \big| d{\bf x}(s) \rangle \nonumber \\
&= | f_z(\widetilde{\boldsymbol \eta})\rangle+\int\limits_0^t \big(I-V D[z_j x_j(s)] \big) |d\ln {\bf x}(s) \rangle \nonumber \\
&= | f_z(\widetilde{\boldsymbol \eta})\rangle- \int\limits_0^t  | f_z(\widetilde{\boldsymbol \eta})\rangle ds ~=(1-t)| f_z(\widetilde{\boldsymbol \eta})\rangle.
\end{align}
Hence, substituting $t=1$ into \eqref{eqn:fzteta} yields
\begin{equation}\label{eqn:x10}
\big| f_z\big({\bf x}(1)\big)\big\rangle={\boldsymbol 0},
\end{equation}
where ${\bf x}(1)<{\bf x}(0)=\widetilde{\boldsymbol \eta}<{\boldsymbol \eta} <{\bf 1}.$

\medskip 
\noindent
Next, we let $|{\bf y} \rangle=D[z_j ]|{\bf x}(1)\rangle=\sum\limits_{i=1}^k z_i x_i(1) | {\bf e}_i \rangle$, then by Prop.~\ref{prop:rhoBigger},
$$\rho\big(V D[y_j] \big)=\rho\big(VD[z_j x_j(1)]\big)<\rho\left(V D[z_j \widetilde{\eta}_j] \right)<1,\quad \text{ i.e., }~{\bf y}\in R_0.$$ 
 Also,  ${\bf x}(1)<{\bf 1}$ yields
$${\bf y} <{\bf z}.$$

\medskip 
\noindent
Finally, equations \eqref{eqn:x10} and $x_i(1)=y_i z_i^{-1}$ imply
$${\boldsymbol 0}=\big| f_z\big({\bf x}(1)\big)\big\rangle
=|\ln {\bf x}(1)\rangle +VD[z_i] \big | {\bf 1}-{\bf x}(1) \big\rangle
=|\ln {\bf y}\rangle -|\ln {\bf z} \rangle+V|{\bf z}-{\bf y} \rangle$$
arriving at 
$$|\ln {\bf y}\rangle -V|{\bf y} \rangle=|\ln {\bf z} \rangle -V|{\bf z} \rangle,$$
equivalent to equation \eqref{eqn:yzSolsCoord}.
\end{proof}

\bigskip 
\noindent
Lemmas \ref{lem:R0} and \ref{lem:R1toR0} yield the following simple corollary.
\begin{cor}\label{cor:R1R0}
If $\rho\left(V D[z_j] \right)=1$, then ${\bf y}={\bf z}$ is the only solution of \eqref{eqn:yzSolsCoord}.
\end{cor}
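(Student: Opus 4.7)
The plan is to deduce this corollary as a short consequence of Lemmas \ref{lem:R0} and \ref{lem:R1toR0} via a case split on where a putative solution ${\bf y}$ lies. Existence is immediate: since $\rho(V D[z_j]) = 1$ we have ${\bf z} \in \overline{R}_0$, and ${\bf y}={\bf z}$ trivially satisfies \eqref{eqn:yzSolsCoord}. All the work is in uniqueness.

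For uniqueness, let ${\bf y}\in(0,\infty)^k$ be any solution of \eqref{eqn:yzSolsCoord}. Partition by spectral radius: either ${\bf y}\in\overline{R}_0$ or ${\bf y}\in R_1$. In the first case, ${\bf y}$ and ${\bf z}$ both lie in $\overline{R}_0$ and satisfy the same equation \eqref{eqn:yzSolsCoord}, so Lemma \ref{lem:R0} immediately forces ${\bf y}={\bf z}$.

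In the second case, I would apply Lemma \ref{lem:R1toR0} with the roles reversed, taking ${\bf y}$ to play the role of the vector in $R_1$. This produces a vector ${\bf y}'\in R_0$ with
$$y'_j\,e^{-\langle {\bf e}_j | V | {\bf y}' \rangle} \;=\; y_j\,e^{-\langle {\bf e}_j | V | {\bf y} \rangle} \qquad \forall j=1,\hdots,k,$$
and since ${\bf y}$ already satisfies \eqref{eqn:yzSolsCoord}, the right-hand side equals $z_j\,e^{-\langle {\bf e}_j | V | {\bf z} \rangle}$. Thus ${\bf y}'\in R_0\subset \overline{R}_0$ and ${\bf z}\in \overline{R}_0$ are both solutions of \eqref{eqn:yzSolsCoord}, so Lemma \ref{lem:R0} yields ${\bf y}'={\bf z}$. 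But then $\rho(V D[y'_j])<1=\rho(V D[z_j])$, contradicting ${\bf y}'={\bf z}$. Hence the second case is vacuous and uniqueness follows.

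There is no real obstacle here; the whole proof is a two-line assembly of the previous lemmas. The only thing worth noting is that Lemma \ref{lem:R1toR0} is stated for an arbitrary vector in $R_1$, so it applies with ${\bf y}$ in place of ${\bf z}$ without any adjustment.
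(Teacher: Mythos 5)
Your proof is correct and matches the paper's approach: the paper simply asserts that the corollary follows from Lemmas \ref{lem:R0} and \ref{lem:R1toR0} without spelling out the details, and your case split (with the reduction of a hypothetical $R_1$ solution back to $R_0$ via Lemma \ref{lem:R1toR0}, then invoking Lemma \ref{lem:R0} and the strict spectral-radius comparison) is exactly the intended assembly.
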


\bigskip 
\noindent
Now, we have a proof of the Multidimensional Lambert-Euler inversion.
\begin{proof}[Proof of Theorem \ref{thm:multEulerLambert}]
The statement in Theorem \ref{thm:multEulerLambert} follows immediately from Lemmas \ref{lem:R0} and \ref{lem:R1toR0}, and Corollary \ref{cor:R1R0}.
\end{proof}

\medskip 
\noindent
Theorem \ref{thm:multEulerLambert} yields the following corollary.
\begin{cor}\label{cor:ytat} 
Let ${\bf y}(t)=\Lambda_V \circ \Psi_V({\boldsymbol \alpha} t)$ be the minimal solution of \eqref{eqn:GenMin}.  Then,
\begin{itemize}
  \item[(a)] $~~~{\bf y}(t) = {\boldsymbol \alpha} t\,$ for all $\,t \leq  \frac{1}{\rho ( V D[\alpha_i] )}$;
  \item[(b)] $~~~{\bf y}(t) < {\boldsymbol \alpha} t\,$ for all $\,t >  \frac{1}{\rho ( V D[\alpha_i] )}$.
\end{itemize}
\end{cor}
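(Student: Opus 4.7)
The plan is to apply Theorem \ref{thm:multEulerLambert} directly with the choice $\mathbf{z} = \boldsymbol{\alpha}\, t$. The crucial first observation is that $\mathbf{y} = \boldsymbol{\alpha}\, t$ is itself always a solution of equation \eqref{eqn:GenMin}: substituting $\mathbf{y} = \boldsymbol{\alpha}\, t$ into the left-hand side produces exactly $\alpha_j t\, e^{-t\langle \mathbf{e}_j | V | \boldsymbol{\alpha}\rangle}$. Therefore the question reduces to deciding whether this trivial solution lies in $\overline{R}_0$ or in $R_1$, since the minimal solution ${\bf y}(t) = \Lambda_V \circ \Psi_V(\boldsymbol{\alpha}\, t)$ coincides with $\boldsymbol{\alpha}\, t$ in the former case and is strictly smaller in the latter.

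Next I would use the scaling identity $\rho\bigl(V D[\alpha_i t]\bigr) = t\,\rho\bigl(V D[\alpha_i]\bigr)$, which is immediate because $D[\alpha_i t] = t\,D[\alpha_i]$ and the spectral radius is positively homogeneous of degree one. This yields the sharp dichotomy
\[
\boldsymbol{\alpha}\, t \in \overline{R}_0 \iff t \leq \frac{1}{\rho(V D[\alpha_i])}, \qquad \boldsymbol{\alpha}\, t \in R_1 \iff t > \frac{1}{\rho(V D[\alpha_i])}.
\]

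For part (a), when $t \leq 1/\rho(V D[\alpha_i])$, the vector $\boldsymbol{\alpha}\, t$ lies in $\overline{R}_0$. Invoking the first ``moreover'' clause of Theorem \ref{thm:multEulerLambert} with $\mathbf{z} = \boldsymbol{\alpha}\, t$ identifies $\boldsymbol{\alpha}\, t$ itself as the unique element of $\overline{R}_0$ solving \eqref{eqn:GenMin}, and so ${\bf y}(t) = \boldsymbol{\alpha}\, t$. For part (b), when $t > 1/\rho(V D[\alpha_i])$, we have $\boldsymbol{\alpha}\, t \in R_1$; the second ``moreover'' clause then gives ${\bf y}(t) < \boldsymbol{\alpha}\, t$ in every coordinate. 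There is no real obstacle to overcome: all the analytic content sits inside Theorem \ref{thm:multEulerLambert}, and the only point requiring care is the bookkeeping observation that $\boldsymbol{\alpha}\, t$ is automatically a solution of \eqref{eqn:GenMin}, which makes the theorem applicable with this particular $\mathbf{z}$.
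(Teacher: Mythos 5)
Your proof is correct and follows exactly the route implied by the paper, which states the corollary as an immediate consequence of Theorem \ref{thm:multEulerLambert}: take $\mathbf{z}=\boldsymbol{\alpha}t$, note $\boldsymbol{\alpha}t$ is trivially a solution of \eqref{eqn:GenMin}, use the homogeneity $\rho(VD[\alpha_i t])=t\,\rho(VD[\alpha_i])$ to decide whether $\boldsymbol{\alpha}t$ lies in $\overline{R}_0$ or $R_1$, and read off the two ``moreover'' clauses of the theorem. No further comment is needed.
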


\medskip 
\noindent
Notice that $\,\lim\limits_{t \to \infty}{x(t) \over t}=0$ for  $x(t)$ in \eqref{defx} which, in the context of random graphs, 
is analogous to the absorption of connected components of various sizes 
by a giant component in Erd\H{o}s-R\'enyi random graph model \cite{ER60}. 
We have the corresponding multidimensional result.
\begin{lemma}\label{lem:massdissipation}
For any given ${\boldsymbol \alpha}\in(0,\infty)^k$, let ${\bf y}(t)=\Lambda_V \circ \Psi_V({\boldsymbol \alpha} t)$ be the minimal solution of \eqref{eqn:GenMin}. Then,
$$\lim\limits_{t \to \infty}{{\bf y}(t) \over t}={\boldsymbol 0}.$$
\end{lemma}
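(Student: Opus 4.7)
The plan is to set ${\bf w}(t) := {\bf y}(t)/t$ and show that every subsequential limit of ${\bf w}(t)$ as $t \to \infty$ equals ${\bf 0}$. By Corollary~\ref{cor:ytat}, one has ${\bf y}(t) \leq {\boldsymbol \alpha}\,t$ componentwise for every $t > 0$, so ${\bf w}(t)$ is confined to the compact box $\prod_i [0,\alpha_i]$. By Bolzano--Weierstrass it then suffices to show that every accumulation point ${\bf w}^*$ of ${\bf w}(t_n)$ along a sequence $t_n \to \infty$ equals ${\bf 0}$.

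The first step will be to take logarithms in \eqref{eqn:GenMin} and divide by $t$, which gives
$$(V{\bf w}(t))_j \;=\; (V\boldsymbol{\alpha})_j + \frac{\ln(w_j(t)/\alpha_j)}{t}, \qquad j=1,\dots,k.$$
Fix a subsequence $t_n \to \infty$ with ${\bf w}(t_n) \to {\bf w}^*$, and set $S := \{j : w^*_j > 0\}$. For $j \in S$ the ratio $w_j(t_n)/\alpha_j$ stays bounded and bounded away from zero, so $t_n^{-1}\ln(w_j(t_n)/\alpha_j) \to 0$, and passing to the limit yields $(V{\bf w}^*)_j = (V\boldsymbol{\alpha})_j$ for every $j \in S$. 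Using that $w^*_i = 0$ for $i \notin S$, this rearranges as
$$\sum_{i \in S} V_{ji}(\alpha_i - w^*_i) \;+\; \sum_{i \notin S} V_{ji}\,\alpha_i \;=\; 0 \qquad (j \in S).$$
Every summand is nonnegative, so each vanishes; in particular $V_{ji} = 0$ for all $j \in S$ and $i \notin S$.

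What remains is a case analysis on $S$. If $S$ were a proper nonempty subset of $\{1,\dots,k\}$, symmetry of $V$ would upgrade the vanishing block $S \times S^c$ to a full block decomposition, contradicting irreducibility. If $S = \{1,\dots,k\}$, the relation above reduces to $V(\boldsymbol{\alpha} - {\bf w}^*) = {\bf 0}$ with $\boldsymbol{\alpha} - {\bf w}^* \geq {\bf 0}$; because every column of the irreducible nonnegative matrix $V$ is nonzero, this forces ${\bf w}^* = \boldsymbol{\alpha}$. This last case is the key obstacle, and I dispose of it by invoking ${\bf y}(t_n) \in \overline{R}_0$: the convergence ${\bf w}(t_n) \to \boldsymbol{\alpha} > {\bf 0}$ yields $y_j(t_n) \geq \tfrac{1}{2}\alpha_j t_n$ for all large $n$, and Proposition~\ref{prop:rhoBigger} then gives $\rho\!\left(VD[y_j(t_n)]\right) \geq (t_n/2)\,\rho(VD[\alpha_i]) \to \infty$, contradicting $\rho(VD[y_j]) \leq 1$ on $\overline{R}_0$. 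Hence only $S = \emptyset$ survives, giving ${\bf w}^* = {\bf 0}$ and the lemma.
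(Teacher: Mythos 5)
Your proof is correct, and the route is genuinely different from the paper's. The paper works quantitatively: it first proves the scalar bound $\limsup_{t\to\infty}\tfrac{1}{t}\langle\mathbf{1}|\mathbf{y}(t)\rangle<\langle\mathbf{1}|\boldsymbol\alpha\rangle$ by the same spectral-radius contradiction you use for the $S=\{1,\dots,k\}$ case, then extracts a fixed $\varepsilon>0$, and propagates the estimate $y_j(t)/t<\alpha_j-\varepsilon$ from one coordinate to all coordinates using \eqref{eqn:yovert} and the irreducibility of $V$, ending with an explicit exponential decay rate $y_i(t)/t\le\alpha_i e^{-v_{i,j}\varepsilon t}$. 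You instead take a soft/compactness route: you box $\mathbf{w}(t)=\mathbf{y}(t)/t$ into $\prod_i[0,\alpha_i]$, pass to a subsequential limit $\mathbf{w}^*$ in the logarithm of \eqref{eqn:GenMin}, read off $(V\mathbf{w}^*)_j=(V\boldsymbol\alpha)_j$ on the support $S$ of $\mathbf{w}^*$, and then use nonnegativity plus the symmetry and irreducibility of $V$ to force $S$ to be either empty or full; the full case is killed by the same $\rho(VD[y_j])\le 1$ obstruction. Both proofs ultimately lean on the same two ingredients (\eqref{eqn:yovert} and Proposition~\ref{prop:rhoBigger} together with irreducibility), but yours trades the explicit exponential rate for a cleaner limit-point dichotomy; the paper's is longer but records the decay rate as a byproduct.
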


\medskip
\begin{proof}
Let $v_{i,j}=\langle {\bf e}_i  | V | {\bf e}_j \rangle$ denote the entries in matrix $V$. Recall that $V$ is nonnegative irreducible symmetric matrix.
Thus, $v_{i,j}=v_{j,i} \geq 0$ for all $i,j$. 

\medskip 
\noindent
Recall that $\,y_i(t) \leq \alpha_i t\,$ for all $i$ and all $t >0$.
Equation \eqref{eqn:GenMin} implies
\be\label{eqn:yovert}
{y_i(t) \over t}=\alpha_i e^{-t \langle {\bf e}_i  | V | {\boldsymbol \alpha}-{\bf y}/t \rangle} \quad \text{ for all }~i=1,\hdots,k,
\ee
where ${\bf y}={\bf y}(t)$. First, we claim that
\be\label{eqn:lsupayt}
\limsup\limits_{t \to \infty} {1 \over t} \langle {\bf 1} | {\bf y} \rangle <\langle {\bf 1} |  {\boldsymbol \alpha} \rangle.
\ee
We prove \eqref{eqn:lsupayt} by contradiction as follows. Suppose, not. Then, there exists a sequence $t_m>0$
increasing to $\infty$, such that
$$\lim\limits_{m \to \infty} {1 \over t_m} \langle {\bf 1} | {\bf y}(t_m) \rangle =\langle {\bf 1} |  {\boldsymbol \alpha} \rangle.$$
Hence, for all $m$ sufficiently large, $y_i(t_m)>\alpha_i t_m/2$ for all $i=1,\hdots,k$. 
Thus, since ${\bf y}(t)\in \overline{R}_0$, the spectral radius
$$1\geq \rho(VD[y_j]) \geq {t_m \over 2} \rho(VD[\alpha_j])$$
by Prop.~\ref{prop:rhoBigger}, contradicting $\lim\limits_{m \to \infty}t_m=\infty$.
Therefore, equation \eqref{eqn:lsupayt} holds.

\medskip 
\noindent
By equation \eqref{eqn:lsupayt}, there exists $\varepsilon>0$ so small that it satisfies
\be\label{eqn:vare1}
\varepsilon < \min\limits_j \alpha_j 
\quad \text{ and }\quad 
\limsup\limits_{t \to \infty} {1 \over t} \langle {\bf 1} | {\bf y} \rangle <\langle {\bf 1} | {\boldsymbol \alpha} \rangle -k\varepsilon.
\ee
Next, \eqref{eqn:vare1} implies the existence of $T>0$ large enough so that whenever $t>T$ we have
\be\label{eqn:vare3}
\alpha_j -\alpha_i e^{-v_{i,j}\varepsilon t} >\varepsilon \quad \text{ for all }~i,j
\ee
and
\be\label{eqn:vare2}
\forall t>T \quad \exists j' ~~\text{ such that }~~~{y_{j'}(t) \over t}<\alpha_{j'}-\varepsilon.
\ee

\medskip 
\noindent
Next, for a given $t>T$, we show that if $\,{y_j(t) \over t}<\alpha_j-\varepsilon \,$ for some $j$, then for all $i$ such that $v_{i,j}>0$,  we have
$${y_i(t) \over t}<\alpha_i-\varepsilon.$$
Indeed, equations \eqref{eqn:yovert} and \eqref{eqn:vare3} yield
$${y_i(t) \over t}=\alpha_i e^{-t \langle {\bf e}_i  | V | {\boldsymbol \alpha}-{\bf y}/t \rangle} \leq \alpha_i e^{-t v_{i,j}(\alpha_j-y_j/t)}
< \alpha_i e^{-v_{i,j}\varepsilon t}<\alpha_i-\varepsilon.$$ 

\medskip 
\noindent
Hence, by \eqref{eqn:vare2} and irreducibility of $V$, for all $t>T$, we have
\be\label{eqn:vare2all}
{y_i(t) \over t}<\alpha_i-\varepsilon \quad \text{ for all }~i.
\ee

\medskip 
\noindent
Together, equations \eqref{eqn:yovert} and \eqref{eqn:vare2all}  imply
$${y_i(t) \over t} \leq \alpha_i e^{-t v_{i,j}(\alpha_j-y_j/t)}<\alpha_i e^{-v_{i,j}\varepsilon t} \quad \text{ for all }~t>T \quad \text{ and all }~i,j.$$
Thus, by irreducibility of $V$, we have $\lim\limits_{t \to \infty}{y_i(t) \over t}=0$ exponentially fast for each $i=1,\hdots,k$.
\end{proof}

\bigskip

\section{Vector-Multiplicative Coalescent Processes} 

In this section we will analyze Smoluchowski coagulation equations \eqref{eqn:SE} and modified Smoluchowski equations \eqref{eqn:Flory}.
In Lemma \ref{lem:hydroFlory}, we will show that equations \eqref{eqn:Flory} are a hydrodynamic limit of the Marcus-Lushnikov process 
for the vector-multiplicative coalescent.
Our main result is in Subsection \ref{sec:mse}, where we will use tools from combinatorics and linear algebra to find a complete solution to the modified Smoluchowski system of equations \eqref{eqn:Flory}.

\medskip

\subsection{Vector-Multiplicative Smoluchowski Equations}\label{sec:Smol} 

\medskip
Consider a vector-multiplicative coalescent process introduced in Subsection \ref{intro:coalescent}.
Let $\zeta_{\bf x}(t)$ be an averaged quantity that tracks the relative number of clusters of weight ${\bf x}$ at tome $t\geq 0$.  
Since the process evolves according to the merger rates $\,n^{-1}\langle {\bf x} | V | {\bf y} \rangle$, 
the {\it Smoluchowski coagulation system} of equations for the vector-multiplicative coalescent process is written as follows:
\begin{equation}\label{eqn:SE}
{d \over dt}\zeta_{\bf x}(t) = - \zeta_{\bf x}  \sum_{{\bf y}} \zeta_{{\bf y}} \langle {\bf x} | V | {\bf y} \rangle + \frac{1}{2}\sum_{{\bf y}, {\bf z} \,: {\bf y} + {\bf z} = {\bf x}} \langle {\bf y} | V | {\bf z} \rangle \zeta_{{\bf y}} \zeta_{{\bf z}} 
\end{equation}
with the initial conditions $\zeta_{\bf x}(0)=\sum\limits_{i=1}^k \alpha_i \delta_{{\bf e}_i,{\bf x}}$.
Functions $\zeta_{\bf x}$ are indexed by all weight vectors ${\bf x} \in \mathbb{Z}_+^k$ satisfying $\langle {\bf x}|{\bf 1}\rangle>0$. 
This is also the domain for summation, i.e., $\sum\limits_{\bf x} f({\bf x})=\!\!\sum\limits_{{\bf x}\in \mathbb{Z}_+^k:\langle {\bf x}|{\bf 1}\rangle>0}f({\bf x})$.

\medskip
\noindent
Note that the initial conditions $\zeta_{\bf x}(0)=\sum\limits_{i=1}^k \alpha_i \delta_{{\bf e}_i,{\bf x}}$ yield $\,\sum_{\bf x} \zeta_{\bf x}(0) |{\bf x} \rangle = |\boldsymbol{\alpha} \rangle$.
Equation \eqref{eqn:SE} implies
\be\label{eqn:consmassSE}
{d \over dt}\sum_{\bf x} \zeta_{\bf x}(t) |{\bf x} \rangle =- \sum_{\bf x}\zeta_{\bf x}(t)|{\bf x} \rangle  \sum_{{\bf y}} \zeta_{{\bf y}} \langle {\bf x} | V | {\bf y} \rangle + \frac{1}{2}\sum_{{\bf y}, {\bf z}} \langle {\bf y} | V | {\bf z} \rangle \zeta_{{\bf y}} \zeta_{{\bf z}}|{\bf y} + {\bf z} \rangle=0
\ee
whenever the second order moments of the solutions $\zeta_{\bf x}(t)$ of \eqref{eqn:SE} are convergent, i.e., the matrix of all second order moments
$\,A(t)=\sum\limits_{\bf x} \zeta_{\bf x}(t) |{\bf x}\rangle \langle {\bf x}|$ has all finite entries.

\medskip
\noindent
If we set the total mass constant by letting
$$\sum_{\bf y} \zeta_{\bf y}(t) |{\bf y} \rangle = |\boldsymbol{\alpha} \rangle, \qquad \text{ where }~~\langle \boldsymbol{\alpha}| = (\alpha_1, \alpha_2, \ldots, \alpha_k),$$
in the right hand side of \eqref{eqn:SE}, then equation \eqref{eqn:SE} will turn into the following quasilinear system of equations
\be
\label{eqn:Flory}
{d \over dt}\zeta_{\bf x}(t) =  - \zeta_{\bf x} \langle {\bf x} | V | \boldsymbol{\alpha} \rangle  + \frac{1}{2}\sum_{{\bf y}, {\bf z} \,: {\bf y} + {\bf z} = {\bf x}} \langle {\bf y} | V | {\bf z} \rangle \zeta_{{\bf y}} \zeta_{{\bf z}},
\ee
with the same initial conditions $\zeta_{\bf x}(0)=\sum\limits_{i=1}^k \alpha_i \delta_{{\bf e}_i,{\bf x}}$ as in \eqref{eqn:SE}. 
Equations of the type in \eqref{eqn:Flory} are called {\it modified Smoluchowski equations} (MSE) or {\it Flory} system of equations.

\medskip
\noindent
Equation \eqref{eqn:consmassSE} implies that the solutions of Smoluchowski coagulation equations \eqref{eqn:SE} and modified Smoluchowski equations \eqref{eqn:Flory}
will coincide as long as the second order moments $\,A(t)=\sum\limits_{\bf x} \zeta_{\bf x}(t) |{\bf x}\rangle \langle {\bf x}|$ of the solutions $\zeta_{\bf x}(t)$ of \eqref{eqn:SE} are convergent, i.e., for all $t$ between $0$ and $t_c$, where
$$t_c =\inf\Big\{t>0~:~\sum\limits_{\bf x} \zeta_{\bf x}(t) |{\bf x}\rangle \langle {\bf x}| \text{ diverges } \Big\}.$$

\bigskip

\subsection{Marcus-Lushnikov process and hydrodynamic limit}
Recall that {\it Marcus-Lushnikov process} ${\bf ML}_n(t)$ keeps track of cluster counts in the vector-multiplicative coalescent process that begins 
with $\langle {\boldsymbol \alpha}[n]|{\bf 1}\rangle$ singletons of $k$ types with $\alpha_i[n]$ of type $i$ for all $i$. Specifically, let $\zeta_{\bf x}^{[n]}(t)$ denote the number of connected components of weight ${\bf x}$ at time $t$. Then, 
$${\bf ML}_n(t)=\Big(\zeta_{\bf x}^{[n]}(t) \Big)_{{\bf x} \in \mathbb{Z}_+^k : \langle {\bf x}|{\bf 1}\rangle>0}$$
with the starting values $\zeta_{\bf x}^{[n]}(0)=\sum\limits_{i=1}^k \alpha_i[n] \delta_{{\bf e}_i,{\bf x}}$.

\medskip
\noindent
Our next lemma states that the solution to the modified Smoluchowski coagulation system \eqref{eqn:Flory} is 
the hydrodynamic limit of the Marcus-Lushnikov process ${\bf ML}_n(t)$ with cross-multiplicative kernel. 
\begin{lemma}\label{lem:hydroFlory}
For any given $T>0$ and all ${\bf x} \in \mathbb{Z}_+^k$ satisfying $\langle {\bf x}|{\bf 1}\rangle>0$,
$$\lim\limits_{n \to \infty} \sup\limits_{s \in [0,T]} \left|n^{-1}\zeta_{\bf x}^{[n]}(s)-\zeta_{\bf x}(s)\right|=0 \qquad \text{ a.s.}$$
where $\zeta_{\bf x}(t)$ is the solution of the modified Smoluchowski coagulation system \eqref{eqn:Flory}
with the initial conditions $\zeta_{\bf x}(0)=\sum\limits_{i=1}^k \alpha_i \delta_{{\bf e}_i,{\bf x}}$. 
\end{lemma}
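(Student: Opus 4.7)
The plan is to verify the hypotheses of T.~G. Kurtz's law of large numbers for density-dependent population processes (as formulated, e.g., in Ethier--Kurtz, Theorem~11.2.1) on a suitably truncated state space, and then let the truncation go to infinity. The authors already signal this route in the paragraph immediately preceding the lemma.

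First I would fix ${\bf x}$ with $\langle{\bf x}|{\bf 1}\rangle>0$ and any integer $N\geq\langle{\bf x}|{\bf 1}\rangle$, and project the Marcus-Lushnikov process onto the finite collection of weights $\mathcal{W}_N=\{{\bf y}\in\mathbb{Z}_+^k:\langle{\bf y}|{\bf 1}\rangle\leq N\}$. Because mass is exactly conserved, $\sum_{\bf y}\zeta_{\bf y}^{[n]}(t)\,|{\bf y}\rangle=|\boldsymbol{\alpha}[n]\rangle$, the rescaled densities $\rho_{\bf y}^{[n]}(t)=n^{-1}\zeta_{\bf y}^{[n]}(t)$ for ${\bf y}\in\mathcal{W}_N$ take values in a fixed compact subset of $[0,\infty)^{|\mathcal{W}_N|}$. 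The jumps on this projection come from two families: each pair of clusters of weights ${\bf y},{\bf z}$ with ${\bf y}+{\bf z}={\bf x}$ merges at pairwise rate $\langle{\bf y}|V|{\bf z}\rangle/n$, and each ${\bf x}$-cluster disappears by merging with any ${\bf y}$-cluster at pairwise rate $\langle{\bf x}|V|{\bf y}\rangle/n$. Rescaling shows this projected process is density-dependent in Kurtz's sense with per-jump size $1/n$ and jump intensities that are polynomials in the $\rho_{\bf y}^{[n]}$'s on the compact set.

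Next I would rewrite the loss term $\rho_{\bf x}^{[n]}\sum_{\bf y}\rho_{\bf y}^{[n]}\langle{\bf x}|V|{\bf y}\rangle$ appearing in the drift using the exact mass identity $\sum_{\bf y}\rho_{\bf y}^{[n]}(t)\,|{\bf y}\rangle=n^{-1}|\boldsymbol{\alpha}[n]\rangle\to|\boldsymbol{\alpha}\rangle$, which converts it (up to a deterministic $o(1)$ error from \eqref{eqn:aplhan}) into $\rho_{\bf x}^{[n]}\langle{\bf x}|V|\boldsymbol{\alpha}\rangle$. This is precisely the replacement that turns the Smoluchowski drift \eqref{eqn:SE} into the Flory drift \eqref{eqn:Flory}. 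Combined with the polynomial gain term, the limiting vector field on $\mathcal{W}_N$ is Lipschitz on the relevant compact set, so \eqref{eqn:Flory} has a unique solution on $[0,T]$; and initial data convergence $n^{-1}\zeta_{\bf x}^{[n]}(0)\to\zeta_{\bf x}(0)$ is immediate from \eqref{eqn:aplhan}. Kurtz's theorem then yields $\sup_{s\in[0,T]}|\rho_{\bf x}^{[n]}(s)-\zeta_{\bf x}(s)|\to 0$ almost surely for every ${\bf x}\in\mathcal{W}_N$, and since $N$ was arbitrary this establishes the lemma.

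The principal obstacle is that the full Marcus-Lushnikov state space is countably infinite, so Kurtz's theorem does not apply directly. The finite-truncation argument handles this only because mass conservation decouples the influence of the ``large'' clusters through the single scalar $\sum_{\bf y}\zeta_{\bf y}^{[n]}(t)\,{\bf y}/n=\boldsymbol{\alpha}[n]/n$, which is \emph{deterministic} and thus does not require a separate estimate. A secondary technical point is that the restriction of the Flory solution to $\mathcal{W}_N$ must coincide with the ODE limit produced by Kurtz's theorem; this follows from uniqueness once one observes that in the ${\bf x}$-coordinate of \eqref{eqn:Flory} only ${\bf y},{\bf z}\in\mathcal{W}_N$ enter the gain term, while the loss term has already been replaced by the deterministic quantity $\langle{\bf x}|V|\boldsymbol{\alpha}\rangle$.
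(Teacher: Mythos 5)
Your proposal is correct and follows the same route as the paper: both invoke Kurtz's law of large numbers for density-dependent population processes (Ethier--Kurtz, Ch.~11, Thm.~2.1), the paper simply deferring the verification to the bipartite case treated in \cite{KOY}. Your truncation-to-$\mathcal{W}_N$ argument, with mass conservation used both to obtain a compact state space and to replace the loss-term sum $\sum_{\bf y}\rho_{\bf y}^{[n]}\langle{\bf x}|V|{\bf y}\rangle$ by the deterministic $\langle{\bf x}|V|\boldsymbol{\alpha}\rangle$, is exactly the mechanism that makes the cited theorem applicable.
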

\begin{proof}
The proof is an application of the weak convergence results of T.\,G. Kurtz for {\it density dependent population processes}. 
Namely, Theorem 2.1 in Chapter 11 of \cite{EK}, or equivalently, Theorem 8.1 in \cite{Kurtz81}.
This lemma follows immediately from the approach in Section 5 of \cite{KOY} 
by replacing $V=\left[\begin{array}{cc}0 & 1 \\1 & 0\end{array}\right]$
with any other nonnegative irreducible symmetric matrix $V \in \mathbb{R}^{k \times k}$.
\end{proof}

\bigskip

\subsection{Solving the Modified Smoluchowski Equations}\label{sec:mse} 
Recall that 
$\,\boldsymbol{\alpha}^{\bf x} = \alpha_1^{x_1} \alpha_2^{x_2} \cdots \alpha_k^{x_k}$.
The following proposition generalizes the approach in \cite{KOY, McLeod62}.
\begin{prop}\label{prop:solODE}
Consider 
\be
\label{eqn:testsoln} 
\zeta_{\bf x}(t) = \boldsymbol{\alpha}^{\bf x} S_{\bf x} e^{-\langle {\bf x} | V | \boldsymbol{\alpha} \rangle t} t^{\langle {\bf x} |\boldsymbol{1} \rangle -1}. 
\ee
with $S_{\bf x}$ solving the following reccursion  
\be\label{SolnEqn} 
S_{\bf x} (\langle {\bf x} | \boldsymbol{1} \rangle -1) = \frac{1}{2}\sum_{{\bf y}, {\bf z} : {\bf y} + {\bf z} = {\bf x}} \langle {\bf y} | V | {\bf z} \rangle S_{\bf y} S_{\bf z}
\ee
with the initial conditions $S_{{\bf e}_j} = 1$ for all $j=1,\hdots,k$. 
Then, $\zeta_{\bf x}(t)$ is the unique solution of MSE \eqref{eqn:Flory} with the initial conditions $\zeta_{\bf x}(0)=\sum\limits_{i=1}^k \alpha_i \delta_{{\bf e}_i,{\bf x}}$.
\end{prop}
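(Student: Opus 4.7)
The plan is to verify the ansatz \eqref{eqn:testsoln} by direct substitution into the modified Smoluchowski equation \eqref{eqn:Flory}, so that the claimed recursion \eqref{SolnEqn} emerges as a necessary and sufficient condition, and then to argue uniqueness via a straightforward induction on the ``total mass'' $\langle {\bf x}|{\bf 1}\rangle$.

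First, I would differentiate \eqref{eqn:testsoln}. The time derivative produces two terms: one from differentiating the exponential $e^{-\langle {\bf x}|V|\boldsymbol{\alpha}\rangle t}$, which reproduces exactly the linear sink $-\zeta_{\bf x}\langle {\bf x}|V|\boldsymbol{\alpha}\rangle$ on the right hand side of \eqref{eqn:Flory}; and one from differentiating the polynomial factor $t^{\langle {\bf x}|{\bf 1}\rangle -1}$, which yields
\[
\boldsymbol{\alpha}^{\bf x}S_{\bf x}(\langle {\bf x}|{\bf 1}\rangle-1)e^{-\langle {\bf x}|V|\boldsymbol{\alpha}\rangle t}t^{\langle {\bf x}|{\bf 1}\rangle-2}.
\]
Next, plug the ansatz into the quadratic source $\tfrac{1}{2}\sum_{{\bf y}+{\bf z}={\bf x}}\langle {\bf y}|V|{\bf z}\rangle\zeta_{\bf y}\zeta_{\bf z}$ and use the three additivity/multiplicativity identities $\boldsymbol{\alpha}^{\bf y}\boldsymbol{\alpha}^{\bf z}=\boldsymbol{\alpha}^{\bf x}$, $\langle {\bf y}|V|\boldsymbol{\alpha}\rangle+\langle {\bf z}|V|\boldsymbol{\alpha}\rangle=\langle {\bf x}|V|\boldsymbol{\alpha}\rangle$, and $\langle {\bf y}|{\bf 1}\rangle+\langle {\bf z}|{\bf 1}\rangle=\langle {\bf x}|{\bf 1}\rangle$, valid whenever ${\bf y}+{\bf z}={\bf x}$. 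Factoring out the common $\boldsymbol{\alpha}^{\bf x}e^{-\langle {\bf x}|V|\boldsymbol{\alpha}\rangle t}t^{\langle {\bf x}|{\bf 1}\rangle-2}$, the equation \eqref{eqn:Flory} collapses to precisely the recursion \eqref{SolnEqn}. So, provided $\{S_{\bf x}\}$ satisfies \eqref{SolnEqn}, the ansatz satisfies \eqref{eqn:Flory}.

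For the initial conditions, evaluate \eqref{eqn:testsoln} at $t=0$: the factor $t^{\langle {\bf x}|{\bf 1}\rangle-1}$ vanishes whenever $\langle {\bf x}|{\bf 1}\rangle\geq 2$, while for ${\bf x}={\bf e}_j$ it equals $1$, giving $\zeta_{{\bf e}_j}(0)=\alpha_j S_{{\bf e}_j}=\alpha_j$ by the stipulation $S_{{\bf e}_j}=1$. This matches the prescribed initial data exactly. I also need to note that the recursion \eqref{SolnEqn} is consistent for $\langle {\bf x}|{\bf 1}\rangle=1$: the factor $\langle {\bf x}|{\bf 1}\rangle-1$ on the left is zero, and the sum on the right is empty since ${\bf y},{\bf z}\in\mathbb{Z}_+^k$ with $\langle {\bf y}|{\bf 1}\rangle,\langle {\bf z}|{\bf 1}\rangle\geq 1$ and ${\bf y}+{\bf z}={\bf e}_j$ is impossible. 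For $\langle {\bf x}|{\bf 1}\rangle\geq 2$ the coefficient $\langle {\bf x}|{\bf 1}\rangle-1$ is nonzero, so \eqref{SolnEqn} uniquely defines $S_{\bf x}$ in terms of $\{S_{\bf y}\}$ with $\langle {\bf y}|{\bf 1}\rangle<\langle {\bf x}|{\bf 1}\rangle$, and the recursion terminates on the base case $S_{{\bf e}_j}=1$.

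Finally, uniqueness of the MSE solution follows by the same mass-induction. For each fixed ${\bf x}$, equation \eqref{eqn:Flory} is a scalar linear ODE in $\zeta_{\bf x}(t)$ of the form $\frac{d}{dt}\zeta_{\bf x}=-c_{\bf x}\zeta_{\bf x}+g_{\bf x}(t)$, where the ``source'' $g_{\bf x}(t)$ involves only $\zeta_{\bf y}$ with $\langle {\bf y}|{\bf 1}\rangle<\langle {\bf x}|{\bf 1}\rangle$ (since both ${\bf y}$ and ${\bf z}={\bf x}-{\bf y}$ in the sum are nonzero). Hence, by induction on $\langle {\bf x}|{\bf 1}\rangle$: at the base case $\langle {\bf x}|{\bf 1}\rangle=1$ one has an autonomous linear ODE with a prescribed initial value, yielding $\zeta_{{\bf e}_j}(t)=\alpha_j e^{-\langle {\bf e}_j|V|\boldsymbol{\alpha}\rangle t}$; and at the inductive step the source $g_{\bf x}(t)$ is determined, so variation of parameters gives a unique $\zeta_{\bf x}(t)$. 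I expect no serious obstacle — the main thing to be careful about is bookkeeping of the three additivity identities when matching the quadratic source, and the degenerate case $\langle {\bf x}|{\bf 1}\rangle=1$ which must be verified separately because the coefficient in front of $S_{\bf x}$ in \eqref{SolnEqn} vanishes.
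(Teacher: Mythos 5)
Your proof is correct and follows essentially the same route as the paper: differentiate the ansatz, use the additivity identities $\boldsymbol{\alpha}^{\bf y}\boldsymbol{\alpha}^{\bf z}=\boldsymbol{\alpha}^{\bf x}$, $\langle {\bf y}|V|\boldsymbol{\alpha}\rangle+\langle {\bf z}|V|\boldsymbol{\alpha}\rangle=\langle {\bf x}|V|\boldsymbol{\alpha}\rangle$, $\langle {\bf y}|{\bf 1}\rangle+\langle {\bf z}|{\bf 1}\rangle=\langle {\bf x}|{\bf 1}\rangle$ on the quadratic term, and match to obtain \eqref{SolnEqn}. Your explicit induction on $\langle {\bf x}|{\bf 1}\rangle$ for uniqueness is simply an unpacking of the paper's terse remark that uniqueness ``follows from quasilinearity,'' and your careful handling of the degenerate base case $\langle {\bf x}|{\bf 1}\rangle=1$ is a welcome bit of extra rigor, not a departure in method.
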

\begin{proof}
First we show that $\zeta_{\bf x}(t)$ is a solution of \eqref{eqn:Flory}.
Differentiating with respect to $t$ yields
\beas
{d \over dt}\zeta_{\bf x}(t) & = & \boldsymbol{\alpha}^{\bf x} S_{\bf x} \left[ e^{-\langle {\bf x} | V | \boldsymbol{\alpha} \rangle t} (\langle {\bf x} | \boldsymbol{1} \rangle -1) t^{\langle {\bf x} | \boldsymbol{1} \rangle -2} - \langle {\bf x} | V | \boldsymbol{\alpha} \rangle e^{-\langle {\bf x} | V | \boldsymbol{\alpha} \rangle t} t^{\langle {\bf x} | \boldsymbol{1} \rangle -1} \right] \\
& = & \boldsymbol{\alpha}^{\bf x} S_{\bf x} e^{-\langle {\bf x} | V | \boldsymbol{\alpha} \rangle t} (\langle {\bf x} | \boldsymbol{1} \rangle -1) t^{\langle {\bf x} | \boldsymbol{1} \rangle -2} - \langle {\bf x} | V | \boldsymbol{\alpha} \rangle \zeta_{\bf x}(t) ,
\eeas
and for ${\bf y} + {\bf z} = {\bf x}$,
\beas
\zeta_{{\bf y}} (t) \zeta_{{\bf z}} (t) & = & \boldsymbol{\alpha}^{{\bf y}} \boldsymbol{\alpha}^{{\bf z}} S_{{\bf y}} S_{{\bf z}} e^{-(\langle {\bf y} | V | \boldsymbol{\alpha} \rangle + \langle {\bf z} | V | \boldsymbol{\alpha} \rangle) t} t^{\langle {\bf y} | \boldsymbol{1} \rangle + \langle {\bf z} | \boldsymbol{1} \rangle - 2} \\
& = & \boldsymbol{\alpha}^{\bf x} S_{{\bf y}} S_{{\bf z}} e^{-\langle {\bf x} | V | \boldsymbol{\alpha} \rangle t} t^{\langle {\bf x} | \boldsymbol{1} \rangle - 2}. 
\eeas

\noindent
Plugging the above two equations into \eqref{eqn:Flory} yields \eqref{SolnEqn}.

\medskip
\noindent
Finally, the uniqueness of solution \eqref{eqn:testsoln} of \eqref{eqn:Flory} follows from quasilinearity of \eqref{eqn:Flory}. 
\end{proof}

\medskip
\noi
Next, we complete the solution of \eqref{eqn:Flory} by finding a combinatorial expression for $S_{\bf x}$ in \eqref{eqn:testsoln}.
First, we need the following notations.

\medskip
\noindent
For a given ${\bf x} \in \mathbb{Z}_+^k$ satisfying $\langle {\bf x}|{\bf 1}\rangle>0$, let $K_{\bf x}(V)$ denote a graph equipped with
edge weights such that
\begin{itemize}
  \item $\,K_{\bf x}(V)$  is a complete graph with  $\langle {\bf x}|{\bf 1}\rangle$ vertices; its vertices are 
  partitioned into $k$ groups with the number of vertices in the $i$-th partition set equal to $x_i$, the $i$-th coefficient of the vector ${\bf x}$;  
  \item $\,V$ is the matrix of edge weights, i.e., the weight of an edge connecting a vertex in the $i$-th partition set with a vertex in the $j$-th partition set equals $v_{i,j}=v_{j,i}$.
\end{itemize}
Finally, if $\mathcal{T}$ is a spanning tree of $\,K_{\bf x}(V)$, then the weight of $\mathcal{T}$ is the product of the weights of all of its edges.
Let $T_{\bf x}=T_{\bf x}(V)$ denote the {\it weighted spanning tree enumerator} of $\,K_{\bf x}(V)$, i.e., $T_{\bf x}$ is the sum of weights of all spanning trees of $\,K_{\bf x}(V)$.
Now, for a graph consisting of just one vertex, the weighted spanning tree enumerator is set to be equal $1$. Thus, $T_{{\bf e}_j}=1$ for all $j=1,\hdots,k$.
\begin{lemma}\label{lem:solSx}
Let $T_{\bf x}$ be the weighted spanning tree enumerator of $\,K_{\bf x}(V)$. Then, 
$$S_{\bf x} = \frac{T_{\bf x}}{{\bf x}!} \qquad \text{ where we denote }~~{\bf x}!=x_1!x_2!\hdots x_k!$$
is the solution to the recursion equation \eqref{SolnEqn}.
\end{lemma}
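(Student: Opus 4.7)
The plan is to prove $S_{\bf x} = T_{\bf x}/{\bf x}!$ by strong induction on $n = \langle {\bf x} | {\bf 1} \rangle$. The base case $n = 1$ is immediate: ${\bf x} = {\bf e}_j$ for some $j$, the graph $K_{{\bf e}_j}(V)$ is a single vertex with $T_{{\bf e}_j} = 1$, and ${\bf e}_j! = 1$, matching the initial condition $S_{{\bf e}_j} = 1$. The recursion \eqref{SolnEqn} uniquely determines $S_{\bf x}$ from $\{S_{\bf y} : \langle {\bf y}|{\bf 1}\rangle < n\}$ whenever $n \geq 2$, so it suffices to verify that the quantities $T_{\bf x}/{\bf x}!$ satisfy \eqref{SolnEqn}.

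The inductive step is a combinatorial double count. I would count the weighted pairs $(\mathcal{T}, e)$ where $\mathcal{T}$ is a spanning tree of $K_{\bf x}(V)$ and $e \in E(\mathcal{T})$ is a distinguished edge, each pair weighted by $W(\mathcal{T})$. Since every spanning tree has exactly $n-1$ edges, the total is $(n-1) T_{\bf x}$. On the other hand, removing $e$ from $\mathcal{T}$ splits the tree into two subtrees spanning complementary vertex sets $A \sqcup B = V(K_{\bf x}(V))$; conversely, from an ordered partition $(A,B)$ together with a spanning tree on $A$, a spanning tree on $B$, and a connecting edge $e$ between $A$ and $B$, one recovers a unique pair $(\mathcal{T}, e)$.

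Suppose $A$ has type ${\bf y}$ and $B$ has type ${\bf z} = {\bf x}-{\bf y}$ (with both nonzero). The number of ordered vertex partitions of type $({\bf y}, {\bf z})$ equals $\prod_i \binom{x_i}{y_i} = \tfrac{{\bf x}!}{{\bf y}!\,{\bf z}!}$; the weighted spanning trees on $A$ and $B$ contribute totals $T_{\bf y}$ and $T_{\bf z}$; and the total weight of edges between $A$ and $B$ is $\sum_{i,j} y_i z_j v_{i,j} = \langle {\bf y} | V | {\bf z} \rangle$. Each unordered split $\{A,B\}$ is counted twice by ordered pairs, so
\begin{equation*}
(n-1)\, T_{\bf x} \;=\; \frac{1}{2} \sum_{{\bf y}+{\bf z}={\bf x}} \frac{{\bf x}!}{{\bf y}!\,{\bf z}!}\, T_{\bf y}\, T_{\bf z}\, \langle {\bf y} | V | {\bf z} \rangle,
\end{equation*}
where the sum runs over ordered decompositions into positive vectors. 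Dividing both sides by ${\bf x}!$ yields exactly the recursion \eqref{SolnEqn} for $T_{\bf x}/{\bf x}!$, completing the induction.

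The only subtlety I foresee is bookkeeping, namely confirming that the multinomial factor $\tfrac{{\bf x}!}{{\bf y}!\,{\bf z}!}$ produced by choosing which vertices of each type go into $A$ is precisely what is needed to convert $T_{\bf y}/{\bf y}!$ and $T_{\bf z}/{\bf z}!$ on the right side into $T_{\bf x}/{\bf x}!$ on the left, and that the $\tfrac{1}{2}$ in \eqref{SolnEqn} correctly compensates for treating $(A,B)$ and $(B,A)$ as distinct. Once this accounting is verified, the identity is routine.
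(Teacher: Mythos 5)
Your proof is correct and follows essentially the same route as the paper: both derive the identity $(n-1)\,T_{\bf x} = \tfrac{1}{2}\sum_{{\bf y}+{\bf z}={\bf x}} \tfrac{{\bf x}!}{{\bf y}!\,{\bf z}!}\,T_{\bf y}\,T_{\bf z}\,\langle {\bf y}|V|{\bf z}\rangle$ by decomposing a spanning tree together with a marked edge into the two subtrees it separates, then divide by ${\bf x}!$ and invoke uniqueness of the recursion. The only cosmetic difference is that you run an explicit strong induction where the paper appeals to uniqueness of the solution to \eqref{SolnEqn} directly; the content is identical.
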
  
\begin{proof}
Let us count the total weight $T_{\bf x}=T_{\bf x}(V)$ of all spanning trees of $\,K_{\bf x}(V)$. For a given ${\bf y}$ and ${\bf z}$ satisfying ${\bf y} + {\bf z} = {\bf x}$, there are
$${x_1 \choose y_1} {x_2 \choose y_2} \cdots {x_k \choose y_k}$$
ways of splitting $K_{\bf x}(V)$ into $K_{\bf y}(V)$ and $K_{\bf z}(V)$. 
Each of the two subgraphs, $K_{\bf y}(V)$ and $K_{\bf z}(V)$ has the respective weighted spanning tree enumerators $T_{\bf y}$ and $T_{\bf z}$.
For any given disection of $K_{\bf x}(V)$ into $K_{\bf y}(V)$ and $K_{\bf z}(V)$, the total weight of the edges connecting the two subgraphs equals $\langle {\bf y} | V | {\bf z} \rangle$.
Now, there are $\langle {\bf x} | \boldsymbol{1}\rangle - 1$ edges in every spanning tree of $\,K_{\bf x}(V)$, and each edge splits the tree into two spanning trees,
$K_{\bf y}(V)$ and $K_{\bf z}(V)$.
Hence, the total weight $T_{\bf x}$ of all spanning trees in $\,K_{\bf x}(V)$ satisfies
\be\label{eqn:recTx}
T_{\bf x} = \frac{1}{2 (\langle {\bf x} | \boldsymbol{1}\rangle - 1)} \sum_{{\bf y}, {\bf z} : {\bf y} + {\bf z} = {\bf x}} {x_1 \choose y_1} {x_2 \choose y_2} \cdots {x_k \choose y_k} \langle {\bf y} | V | {\bf z} \rangle T_{\bf y} T_{\bf z},
\ee
where the multiple of ${1 \over 2}$ accounts for double counting  ${\bf y}+{\bf z}$ splits with ${\bf z}+{\bf y}$ splits.

\medskip
\noindent
Equation \eqref{eqn:recTx} can be rewritten as
\be\label{eqn:TxSoln} 
{T_{\bf x} \over {\bf x}!} (\langle {\bf x} | {\bf 1} \rangle -1) = \frac{1}{2}\sum_{{\bf y}, {\bf z} : {\bf y} + {\bf z} = {\bf x}} \langle {\bf y} | V | {\bf z} \rangle {T_{\bf y} \over {\bf y}!}  {T_{\bf z} \over {\bf z}!} 
\ee
with the initial conditions $T_{{\bf e}_j}=1$ for all $j=1,\hdots,k$.
Therefore, by the uniqueness of the solution of the recursive equation \eqref{SolnEqn}, we have $\,S_{\bf x} = \frac{T_{\bf x}}{{\bf x}!}$.
\end{proof}

\bigskip
\noindent 
Let $L_{\bf x}=L_{\bf x}(V)$ denote the {\it weighted Laplacian matrix} of $\,K_{\bf x}(V)$, i.e., 
$L_{\bf x}=\big(l_{r,s}\big) \in \mathbb{R}^{\langle {\bf x} | \boldsymbol{1}\rangle\times \langle {\bf x} | \boldsymbol{1}\rangle}$ 
is a matrix with coordinates
\be\label{eqn:Lx}
l_{r,s}=\begin{cases}
      \langle {\bf e}_i |V| {\bf x} \rangle-v_{i,i} & \text{ if } r=s, \,\text{ where }\, i=1+\max\{m:\,s_m < r \}, \\
      -v_{i,j} & \text{ if } r\not=s, \,\text{ where }\, i=1+\max\{m:\,s_m < r \}, ~j=1+\max\{m:\,s_m < s\},
\end{cases}
\ee
where $s_0=0$, and $s_m=\sum\limits_{i=1}^m x_i$ for $m=1,\hdots,k$. Schematically, $L_{\bf x}$ is represented as follows

$\hskip 0.28 in \overbrace{ \hskip 1.5 in }^{x_1}~~~\overbrace{ \hskip 2.2 in}^{x_2} \hskip 0.07 in \dots \hskip 0.07 in\overbrace{ \hskip 1.5 in }^{x_k}$
{\tiny
$$L_{\bf x}=\left[\begin{array}{cc|ccc|c|cc} \!\!\langle {\bf e}_1 |V| {\bf x} \rangle \!-\!v_{1,1} & -v_{1,1} & -v_{1,2} & -v_{1,2} & -v_{1,2} & \dots& -v_{1,k} & -v_{1,k} \\-v_{1,1} &  \!\!\!\!\!\!\langle {\bf e}_1 |V| {\bf x} \rangle \!-\!v_{1,1} & -v_{1,2} & -v_{1,2} & -v_{1,2} & \dots  & -v_{1,k} & -v_{1,k} \\ \hline -v_{2,1} & -v_{2,1} & \!\!\langle {\bf e}_2 |V| {\bf x} \rangle \!-\!v_{2,2} & -v_{2,2} & -v_{2,2} & \dots & -v_{2,k} & -v_{2,k} \\-v_{2,1} & -v_{2,1} & -v_{2,2} &  \!\!\!\!\!\!\langle {\bf e}_2 |V| {\bf x} \rangle \!-\!v_{2,2} & -v_{2,2} & \dots & -v_{2,k} & -v_{2,k} \\-v_{2,1} & -v_{2,1} & -v_{2,2} & -v_{2,2} &   \!\!\!\!\!\!\langle {\bf e}_2 |V| {\bf x} \rangle \!-\!v_{2,2} & \dots & -v_{2,k} & -v_{2,k} \\\hline \vdots & \vdots & \vdots & \vdots & \vdots & \ddots &  \vdots  & \vdots  \\\hline-v_{k,1} & -v_{k,1} & -v_{k,2} & -v_{k,2} & -v_{k,2} & \dots &  \!\!\langle {\bf e}_k |V| {\bf x} \rangle \!-\!v_{k,k} & -v_{k,k} \\-v_{k,1} & -v_{k,1} & -v_{k,2} & -v_{k,2} & -v_{k,2} & \dots & -v_{k,k} &  \!\!\!\!\!\!\langle {\bf e}_k |V| {\bf x} \rangle \!-\!v_{k,k}\end{array}\right]$$
}

\medskip
\noindent
Notice that for each $m=1,2,\hdots,k$ and each $j=s_{m-1}+1,\hdots,s_m$, vector $|{\bf e}_j \rangle - |{\bf e}_{j+1} \rangle \in \mathbb{R}^{\langle {\bf x} | \boldsymbol{1}\rangle}$ is an eigenvector of $L_{\bf x}$
corresponding to the eigenvalue $\langle {\bf e}_m |V| {\bf x} \rangle$. Hence, $\langle {\bf e}_m |V| {\bf x} \rangle$ is an eigenvalue of $L_{\bf x}$ of multiplicity $x_m-1$.

\bigskip
\noindent 
The weighted spanning tree enumerator $T_{\bf x}=T_{\bf x}(V)$ can be expressed via the celebrated Kirchhoff's Weighted Matrix-Tree Theorem \cite{Kirchhoff,KS20,Maxwell} as stated below.
\begin{thm}[Weighted Matrix-Tree Theorem]\label{thm:MatrixTree}
For any $1\leq i,j \leq \langle {\bf x} | {\bf 1} \rangle$, 
\be\label{eqn:MatrixTx}
T_{\bf x}=(-1)^{i+j} \det\big[L_{\bf x}\big]_{i,j},
\ee
where $\big[L_{\bf x}\big]_{i,j}$ denotes the $(i,j)$ minor of $L_{\bf x}$ obtained by removing the $i$-th row and $j$-th column in $L_{\bf x}$.
\end{thm}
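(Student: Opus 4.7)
The plan is to establish the classical weighted Matrix-Tree identity via a Cauchy-Binet calculation applied to an incidence-matrix factorization of $L_{\bf x}$. First I would reduce the two-index statement to a single principal minor. The matrix $L_{\bf x}$ is symmetric with all row sums equal to zero: in a row indexed by a vertex in the $i$-th partition, the diagonal entry $\langle {\bf e}_i|V|{\bf x}\rangle - v_{i,i}$ exactly cancels the sum of the off-diagonal entries $-v_{i,j}$. Hence both $|{\bf 1}\rangle$ and $\langle {\bf 1}|$ are annihilated by $L_{\bf x}$. The identity $L_{\bf x}\cdot\operatorname{adj}(L_{\bf x})=0=\operatorname{adj}(L_{\bf x})\cdot L_{\bf x}$ then forces each row and each column of $\operatorname{adj}(L_{\bf x})$ to be a scalar multiple of $\langle {\bf 1}|$ and $|{\bf 1}\rangle$ respectively, so every entry of $\operatorname{adj}(L_{\bf x})$ coincides. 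This shows $(-1)^{i+j}\det[L_{\bf x}]_{i,j}$ is independent of $(i,j)$, and it is enough to compute the principal minor obtained by deleting the last row and column, with $n:=\langle {\bf x}|{\bf 1}\rangle$.

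Second, I would construct a weighted signed incidence matrix $N\in\mathbb{R}^{n\times m}$, where $m$ is the number of edges of $K_{\bf x}(V)$. Orient each edge arbitrarily and set $N_{u,e}=\sqrt{w_e}$ if $u$ is the head of $e$, $N_{u,e}=-\sqrt{w_e}$ if $u$ is the tail, and $N_{u,e}=0$ otherwise, where $w_e=v_{i,j}$ for an edge joining vertices in the $i$-th and $j$-th partitions. A direct entry-wise computation of $\sum_e N_{u,e}N_{v,e}$ reproduces the entries of $L_{\bf x}$ prescribed by \eqref{eqn:Lx}, so $L_{\bf x}=NN^{\sf T}$. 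Deleting the last row of $N$ to obtain $N'\in\mathbb{R}^{(n-1)\times m}$ and applying the Cauchy-Binet formula gives
\begin{equation*}
\det[L_{\bf x}]_{n,n} \;=\; \sum_{|S|=n-1}\det(N'_S)^{2},
\end{equation*}
where the sum ranges over $(n-1)$-element subsets $S$ of the edge set and $N'_S$ is the corresponding column submatrix.

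The last step evaluates each term in this sum. The classical incidence-matrix lemma states that $\det(N'_S)=0$ when $S$ contains a cycle (the columns indexed by the cycle admit a nonzero alternating linear dependence), and $|\det(N'_S)|=\prod_{e\in S}\sqrt{w_e}$ when $S$ is a spanning tree; squaring yields precisely $\prod_{e\in S}w_e=W(\mathcal{T})$. Summing across all spanning trees of $K_{\bf x}(V)$ then recovers $T_{\bf x}$, which combined with the cofactor-equality of the first step completes the proof. The main obstacle is the spanning-tree case of the incidence-matrix lemma, which I would dispatch by induction on $n$: a tree has a leaf $v$, and by the cofactor-equality one may assume $v$ is not the deleted vertex, so the row of $N'_S$ indexed by $v$ has a single nonzero entry $\pm\sqrt{w_e}$; expanding the determinant along that row gives $\pm\sqrt{w_e}$ times the analogous determinant for a tree on $n-1$ vertices, closing the induction.
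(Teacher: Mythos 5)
The paper does not prove this statement: Theorem~\ref{thm:MatrixTree} is quoted as the classical Kirchhoff Weighted Matrix--Tree Theorem with citations to \cite{Kirchhoff,KS20,Maxwell}, so there is no in-house argument to compare against. Your proposal is a self-contained and correct rendering of the standard Cauchy--Binet proof: equality of all cofactors of a symmetric matrix with zero row sums, the factorization $L_{\bf x}=NN^{\sf T}$ via the $\pm\sqrt{w_e}$ signed incidence matrix (a direct entry-check does reproduce \eqref{eqn:Lx}, including the $-v_{i,i}$ correction on the diagonal coming from the missing self-loop), and the expansion $\det[L_{\bf x}]_{n,n}=\sum_{|S|=n-1}\det(N'_S)^2$ with the leaf-peeling evaluation of each minor, where $n=\langle{\bf x}|{\bf 1}\rangle$.

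Two small points worth making explicit. First, the step from $L_{\bf x}\operatorname{adj}(L_{\bf x})=\operatorname{adj}(L_{\bf x})L_{\bf x}=0$ to ``every column of $\operatorname{adj}(L_{\bf x})$ is a scalar multiple of $|{\bf 1}\rangle$'' tacitly assumes $\operatorname{rank}L_{\bf x}=n-1$, so that the kernel is exactly $\operatorname{span}\{{\bf 1}\}$; when the rank is smaller (which occurs precisely when zero weights disconnect $K_{\bf x}(V)$), the adjugate is the zero matrix, $T_{\bf x}=0$, and the cofactor-equality and the theorem both hold trivially --- so the conclusion survives, but the case distinction should be acknowledged rather than elided. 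Second, the dichotomy ``$S$ contains a cycle $\Rightarrow\det(N'_S)=0$, $S$ a spanning tree $\Rightarrow|\det(N'_S)|=\prod_{e\in S}\sqrt{w_e}$'' is exhaustive only because any acyclic set of $n-1$ edges on $n$ vertices is automatically a spanning tree; stating that one-line fact closes the logical gap in the Cauchy--Binet step. With those two remarks in place the proof is complete.
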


\medskip
\noindent
Observe that for a simple graph $G$ with all edge weights $w_{i,j}=1$, the weighted spanning tree enumerator $\tau(G, w_{i,j})$
counts the number of spanning trees in $G$.
\begin{example}
In the 1-D case ($k=1$), $\,T_n = n^{n-2}$ is the number of spanning trees in a complete graph $K_n$, and equation \eqref{eqn:recTx}
turns into the following well known identity
\be\label{eqn:1dKnTn}
T_n = {1 \over 2(n-1) } \sum\limits_{m=1}^n {n \choose m} m (n-m) T_m T_{n-m}.
\ee
On the other hand, as discovered in \cite{McLeod62}, $S_n={n^{n-2} \over n!}$.
Thus, validating Lemma \ref{lem:solSx}.
\end{example}

\medskip
\noindent
\begin{example}
Let $V=|{\bf 1}\rangle \langle {\bf 1}|-I$.
In the context of the vector-multiplicative coalescent processes, this is the case when only the pairs of particles of different types are allowed to bond, 
each such pair bonding with rate $1/n$. 
Then, $T_{\bf x}=T_{\bf x}(V)$ is the number of spanning trees in a complete multipartite graph $\,K_{x_1,\hdots,x_k}$.
It was shown in \cite{Lewis1999} that the number of spanning trees in the complete multipartite graph equals
\be\label{eqn:KxTx}
T_{\bf x}  = n_{\bf x}^{k-2} \prod_{i=1}^k (n_{\bf x} - x_i)^{x_i-1}, \quad \text{ where }\quad n_{\bf x}=\langle {\bf x} | {\bf 1} \rangle.
\ee

\noindent
For instance, in the 2-D case ($k=2$),  $\,T_{x_1, x_2}= x_1^{x_2-1} x_2^{x_1-1}$ is the number of spanning trees in the complete bipartite graph $K_{x_1, x_2}$ 
with the partitions of sizes $x_1$ and $x_2$. Also, it was shown in \cite{KOY} that the solution $S_{x_1, x_2}$ of \eqref{SolnEqn} equals 
$S_{x_1, x_2} = {x_1^{x_2-1} x_2^{x_1-1} \over x_1! x_2!}$. Thus, Lemma \ref{lem:solSx} is validated for this case as well.
\end{example}

\bigskip
\noindent
Weighted Matrix-Tree Theorem (Thm.~\ref{thm:MatrixTree}) was  enhanced in S.~Klee and M.\,T. Stamps \cite{KS20} as follows.
\begin{lemma}[Weighted Matrix-Tree Lemma, \cite{KS20}]\label{lem:MatrixTreeKS}
For any given vectors $~{\bf a}, {\bf b} \in \mathbb{R}^{\langle {\bf x} | {\bf 1} \rangle}$ such that
$$\langle {\bf a} | {\bf 1} \rangle=\sum\limits_i a_i \not=0 \quad \text{ and }\quad \langle {\bf b} | {\bf 1} \rangle=\sum\limits_i b_i \not=0,$$ 
the weighted spanning tree enumerator equals
\be\label{eqn:abMatrixTx}
T_{\bf x}={\det\Big(L_{\bf x}+|{\bf a}\rangle\langle{\bf b}|\,\Big) \over \langle {\bf a} | {\bf 1} \rangle \langle {\bf b} | {\bf 1} \rangle}.
\ee
\end{lemma}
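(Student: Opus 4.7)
The plan is to reduce the identity to the ordinary Weighted Matrix-Tree Theorem (Theorem~\ref{thm:MatrixTree}) via the matrix determinant lemma. The crucial structural fact I would exploit is that the weighted Laplacian $L_{\bf x}$ is singular, with the all-ones vector in both its left and right kernels: by construction each row and each column of $L_{\bf x}$ sums to zero, so $L_{\bf x}|{\bf 1}\rangle = 0$ and $\langle{\bf 1}|L_{\bf x} = 0$, whence $\det L_{\bf x} = 0$. This is precisely why a rank-one perturbation $|{\bf a}\rangle\langle{\bf b}|$ is needed to produce a nonzero determinant, and it also accounts for the normalizing factor $\langle{\bf a}|{\bf 1}\rangle\langle{\bf b}|{\bf 1}\rangle$ appearing in \eqref{eqn:abMatrixTx}.

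First I would establish the auxiliary identity $\mathrm{adj}(L_{\bf x}) = T_{\bf x}\,|{\bf 1}\rangle\langle{\bf 1}|$. By Theorem~\ref{thm:MatrixTree}, every signed $(i,j)$-cofactor $(-1)^{i+j}\det[L_{\bf x}]_{i,j}$ equals $T_{\bf x}$, which says exactly that every entry of the adjugate is $T_{\bf x}$; i.e.\ $\mathrm{adj}(L_{\bf x})$ equals $T_{\bf x}$ times the all-ones matrix. Next, I would invoke the matrix determinant lemma in the form
\[
\det\bigl(A + |{\bf u}\rangle\langle{\bf v}|\bigr) = \det A + \langle{\bf v}|\,\mathrm{adj}(A)\,|{\bf u}\rangle,
\]
valid for any square matrix $A$ and any vectors ${\bf u},{\bf v}$. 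Specializing to $A = L_{\bf x}$, ${\bf u}={\bf a}$, ${\bf v}={\bf b}$ and substituting the two preceding observations, I obtain
\[
\det\bigl(L_{\bf x}+|{\bf a}\rangle\langle{\bf b}|\bigr) = T_{\bf x}\,\langle{\bf b}|{\bf 1}\rangle\langle{\bf 1}|{\bf a}\rangle = T_{\bf x}\,\langle{\bf a}|{\bf 1}\rangle\langle{\bf b}|{\bf 1}\rangle,
\]
and dividing by the (nonzero) product $\langle{\bf a}|{\bf 1}\rangle\langle{\bf b}|{\bf 1}\rangle$ yields \eqref{eqn:abMatrixTx}.

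An alternative route that bypasses the adjugate and is essentially self-contained is to expand the determinant by multilinearity in the columns. The $j$-th column of $L_{\bf x}+|{\bf a}\rangle\langle{\bf b}|$ equals the $j$-th column of $L_{\bf x}$ plus $b_j|{\bf a}\rangle$, so multilinearity produces a sum of $2^n$ terms (with $n=\langle{\bf x}|{\bf 1}\rangle$) indexed by the subset of columns that receive the rank-one contribution. Every term with two or more perturbed columns vanishes, since such a matrix has two proportional columns, and the single fully unperturbed term is $\det L_{\bf x} = 0$. The remaining $n$ terms, each obtained by replacing exactly one column of $L_{\bf x}$ with $b_j|{\bf a}\rangle$, are evaluated by cofactor expansion along the replaced column and by Theorem~\ref{thm:MatrixTree}; they contribute $b_j\sum_i a_i(-1)^{i+j}\det[L_{\bf x}]_{i,j} = b_j T_{\bf x}\langle{\bf a}|{\bf 1}\rangle$, and summing over $j$ recovers the same identity.

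I do not anticipate any serious obstacle: the argument is almost entirely bookkeeping once the all-cofactors-equal property of $L_{\bf x}$ is extracted from Theorem~\ref{thm:MatrixTree}. The only care needed is tracking signs in the cofactor expansion, and the hypotheses $\langle{\bf a}|{\bf 1}\rangle\neq 0$ and $\langle{\bf b}|{\bf 1}\rangle\neq 0$ enter only at the final division step, confirming that they are exactly what is needed to make the right-hand side of \eqref{eqn:abMatrixTx} well-defined.
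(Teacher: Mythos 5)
The paper does not include its own proof of Lemma~\ref{lem:MatrixTreeKS}; it is quoted from Klee and Stamps~\cite{KS20} and used as a black box. So there is no in-paper argument to compare against, and your proposal should be judged on its own. It is correct. The reduction to the all-cofactors-equal property from Theorem~\ref{thm:MatrixTree} (equivalently $\mathrm{adj}(L_{\bf x})=T_{\bf x}\,|{\bf 1}\rangle\langle{\bf 1}|$), combined with either the rank-one update formula $\det(A+|{\bf u}\rangle\langle{\bf v}|)=\det A+\langle{\bf v}|\mathrm{adj}(A)|{\bf u}\rangle$ or the direct column-multilinearity expansion (where the singly-perturbed terms survive because all higher-order ones have repeated columns and the zeroth-order term is $\det L_{\bf x}=0$), gives exactly the claimed identity; the row and column sums of $L_{\bf x}$ vanish by construction, so $\det L_{\bf x}=0$ holds. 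Both of your routes are clean and supply a short, self-contained derivation that the manuscript leaves implicit by citation.
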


\bigskip
\noindent 
Lemma \ref{lem:MatrixTreeKS} will be used in our solution for $T_{\bf x}(V)$ in Thm.~\ref{thm:SolTx}. 
Notice that in $\,K_{\bf x}(V)$, the total weight of all edges connecting
vertices in the $i$-th partition with the vertices in the $j$-th partition equals $x_i x_j v_{i,j}$. Thinking of the $k$ partitions as $k$ vertices in 
the {\it partition graph} $K_k$, where vertex $i$ and vertex $j$, representing the corresponding partitions, are connected by an edge of weight $x_i x_j v_{i,j}$, 
the weighted enumerator for the spanning trees on the partition graph equals
\be\label{eqn:Tpartition}
\tau(K_k, x_i x_j v_{i,j})=T_{\bf 1}\big(D[x_i]VD[x_i]\big).
\ee
Notice that, by Weighted Matrix-Tree Theorem (Thm.~\ref{thm:MatrixTree}), 
\be\label{eqn:TpartitionL}
T_{\bf 1}\big(D[x_i]VD[x_i]\big)=(-1)^{i+j} \det\big[L(x_i x_j v_{i,j})\big]_{i,j} \quad \text{ for all }\quad 1 \leq i,j \leq k,
\ee
where the weighted Laplacian for the partition graph equals
$$L(x_i x_j v_{i,j})=D[x_i]\Big(D[\langle {\bf e}_i|V|{\bf x}\rangle]-VD[x_i]\Big).$$
\medskip
\noindent
Our next result reduces the computation of $T_{\bf x}(V)$ from $(\langle {\bf x} | {\bf 1} \rangle-1)$-dimensional determinants as in  \eqref{eqn:1dKnTn} and \eqref{eqn:abMatrixTx}, where $\langle {\bf x} | {\bf 1} \rangle$ gets arbitrarily large, to just computing the $(k-1)$-dimensional determinant in \eqref{eqn:TpartitionL}.
\begin{thm}[Solution for $T_{\bf x}(V)$]\label{thm:SolTx}
\be\label{eqn:SolTx}
T_{\bf x}(V)={\tau(K_k, x_i x_j v_{i,j}) \over {\bf x}^{\bf 1}}(V {\bf x})^{{\bf x}- {\bf 1}}.
\ee
\end{thm}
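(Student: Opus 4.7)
The plan is to apply the Weighted Matrix-Tree Lemma (Lemma~\ref{lem:MatrixTreeKS}) to $L_{\bf x}$ with a rank-one perturbation adapted to the block structure of $K_{\bf x}(V)$, and then invoke the same lemma a second time on the partition graph. Decompose $\mathbb{R}^{\langle {\bf x}|{\bf 1}\rangle}=W\oplus U$, where $W$ is the $k$-dimensional subspace of vectors that are constant on each of the $k$ partition blocks, and $U=W^\perp$ consists of vectors whose coordinates sum to zero on each block. Since $V$ is symmetric, $L_{\bf x}$ is symmetric; a direct block computation shows $L_{\bf x}$ preserves $W$ and therefore $U$ as well.

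The first key step is to compute the action on each summand. As noted after \eqref{eqn:Lx}, for every $m$ and every adjacent pair of indices in the $m$-th block the difference of the corresponding coordinate vectors is an eigenvector of $L_{\bf x}$ with eigenvalue $\langle {\bf e}_m|V|{\bf x}\rangle$; these vectors span $U$, so $\det(L_{\bf x}|_U)=\prod_{m=1}^k\langle {\bf e}_m|V|{\bf x}\rangle^{x_m-1}=(V{\bf x})^{{\bf x}-{\bf 1}}$. Identifying a vector in $W$ with its tuple of block values $c_i\in\mathbb{R}$, a similar block computation shows $L_{\bf x}|_W$ is represented in the block-indicator basis by $\tilde L:=D[\langle {\bf e}_i|V|{\bf x}\rangle]-VD[x_i]$, which satisfies $L(x_ix_jv_{i,j})=D[x_i]\tilde L$.

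Now pick any ${\bf a},{\bf b}\in W$ with block values ${\bf a}',{\bf b}'\in\mathbb{R}^k$ chosen so that $\langle {\bf a}'|{\bf x}\rangle\ne 0$ and $\langle {\bf b}'|{\bf x}\rangle\ne 0$. Because $U=W^\perp$, the rank-one operator $|{\bf a}\rangle\langle {\bf b}|$ annihilates $U$ and preserves $W$, so $L_{\bf x}+|{\bf a}\rangle\langle {\bf b}|$ is block-diagonal with respect to $W\oplus U$. Its determinant therefore equals $(V{\bf x})^{{\bf x}-{\bf 1}}$ multiplied by the determinant of the $W$-block, which in the block-indicator basis is $\tilde L+|{\bf a}'\rangle\langle D[x_i]{\bf b}'|$ (the factor $D[x_i]$ arises because $\langle {\bf b}|\phi({\bf e}_j)\rangle=x_jb'_j$). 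Factoring out $D[x_i]^{-1}$ on the left gives $\det(D[x_i]^{-1})=1/{\bf x}^{\bf 1}$ times $\det\bigl(L(x_ix_jv_{i,j})+|D[x_i]{\bf a}'\rangle\langle D[x_i]{\bf b}'|\bigr)$.

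A second application of Lemma~\ref{lem:MatrixTreeKS}, now on the partition graph $(K_k,x_ix_jv_{i,j})$ with vectors $D[x_i]{\bf a}'$ and $D[x_i]{\bf b}'$, evaluates that determinant as $\tau(K_k,x_ix_jv_{i,j})\langle {\bf a}'|{\bf x}\rangle\langle {\bf b}'|{\bf x}\rangle$. Since $\langle {\bf a}|{\bf 1}\rangle=\langle {\bf a}'|{\bf x}\rangle$ and $\langle {\bf b}|{\bf 1}\rangle=\langle {\bf b}'|{\bf x}\rangle$, Lemma~\ref{lem:MatrixTreeKS} applied to $L_{\bf x}$ itself yields $T_{\bf x}(V)=(V{\bf x})^{{\bf x}-{\bf 1}}\tau(K_k,x_ix_jv_{i,j})/{\bf x}^{\bf 1}$ after the factors $\langle {\bf a}'|{\bf x}\rangle\langle {\bf b}'|{\bf x}\rangle$ cancel. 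The main point requiring care is the bookkeeping between $\tilde L$ and $L(x_ix_jv_{i,j})$, which produces the prefactor $1/{\bf x}^{\bf 1}$; everything else is a routine block-decomposition argument that exploits the symmetries of $K_{\bf x}(V)$.
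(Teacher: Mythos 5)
Your proof is correct, and it takes a cleaner and somewhat different technical route than the paper's. The paper chooses the very specific vectors $|{\bf a}\rangle$ (indicator of block $k$) and $\langle{\bf b}|$ (block-constant with value $v_{k,i}$ on block $i$) so that $L_{\bf x}+|{\bf a}\rangle\langle{\bf b}|$ becomes block \emph{upper triangular} in the standard basis, with $\langle{\bf e}_k|V|{\bf x}\rangle I_{x_k}$ in the lower-right corner; it then isolates the upper-left block $Q$, and relates $\det(Q)$ to $\det(\widetilde Q)$ by exhibiting the eigenvectors of $Q$, finally invoking Theorem~\ref{thm:MatrixTree} (not Lemma~\ref{lem:MatrixTreeKS}) on the partition graph. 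You instead observe that for \emph{any} block-constant pair $({\bf a},{\bf b})$ with nonvanishing weighted sums, $L_{\bf x}+|{\bf a}\rangle\langle{\bf b}|$ is block \emph{diagonal} with respect to the orthogonal decomposition $W\oplus U$, which cleanly splits the determinant into the $U$-factor $(V{\bf x})^{{\bf x}-{\bf 1}}$ and a $W$-factor that you then evaluate by a second call to Lemma~\ref{lem:MatrixTreeKS} on the partition graph. Both arguments rest on the same structural facts (difference vectors within a block are eigenvectors of $L_{\bf x}$; the block-indicator action of $L_{\bf x}$ is $\tilde L=D[\langle{\bf e}_i|V|{\bf x}\rangle]-VD[x_i]$ with $L(x_ix_jv_{i,j})=D[x_i]\tilde L$), but your version avoids the "trace the specific block of $Q$" bookkeeping, at the cost of being explicit that the determinant of a direct-sum-preserving operator factors across the summands regardless of whether the chosen basis of $W$ is orthonormal (which is true, since determinants of endomorphisms are basis-independent, but worth a sentence). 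The one small thing to verify, which you handled correctly, is the appearance of $D[x_i]$ in the rank-one term when translating $\langle{\bf b}|\cdot$ into the block-indicator basis; that and the identity $\langle{\bf a}|{\bf 1}\rangle=\langle{\bf a}'|{\bf x}\rangle$ are the only places where the non-unit block sizes intervene, and they are what produce and then cancel the $1/{\bf x}^{\bf 1}$ factor cleanly.
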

\begin{proof}
Recall that we let $s_0=0$, and $s_m=\sum\limits_{i=1}^m x_i$ for $m=1,\hdots,k$. 
Set $|{\bf a}\rangle=\!\sum\limits_{j=s_{k-1}+1}^{s_k} \!\!|{\bf e}_j\rangle$ in $\mathbb{R}^{\langle {\bf x} | {\bf 1} \rangle}$, i.e.,
$$\langle {\bf a}|=\big(0,\hdots,0,\underbrace{1,\hdots,1}_{x_k}\big),$$
and let
$$\langle {\bf b}|=\big(\underbrace{v_{k,1},\hdots,v_{k,1}}_{x_1}, ~\underbrace{v_{k,2},\hdots,v_{k,2}}_{x_2}, ~\hdots, ~\underbrace{v_{k,k},\hdots,v_{k,k}}_{x_k} \big).$$
Then, by Lemma \ref{lem:MatrixTreeKS}, 
\be\label{eqn:abMatrixPf}
T_{\bf x}={\det\Big(L_{\bf x}+|{\bf a}\rangle\langle{\bf b}|\,\Big) \over \langle {\bf a} | {\bf 1} \rangle \langle {\bf b} | {\bf 1} \rangle}
={\det\Big(L_{\bf x}+|{\bf a}\rangle\langle{\bf b}|\,\Big) \over x_k \langle {\bf e}_k |V| {\bf x} \rangle}.
\ee
On the other hand,
\be\label{eqn:Lxab}
L_{\bf x}+|{\bf a}\rangle\langle{\bf b}|=\left[\begin{array}{ccc|ccc} &  &  & * & * & * \\ & Q &  & * & * & * \\ &  &  & * & * & * \\\hline 0 & 0 & 0 &  \langle e_k |V| {\bf x} \rangle & 0 & 0 \\0 & 0 & 0 & 0 & \ddots & 0 \\0 & 0 & 0 & 0 & 0 &  \langle e_k |V| {\bf x} \rangle\end{array}\right],
\ee
where $s_{k-1}\times s_{k-1}$ matrix $Q=\Big[~L_{\bf x}~\Big]_{[1..s_{k-1}]\times[1..s_{k-1}]}$ is the restriction of the Laplacian matrix $L_{\bf x}$ to the first $s_{k-1}$ rows and columns. 

\medskip
\noindent
Equations \eqref{eqn:abMatrixPf} and \eqref{eqn:Lxab} yield
\be\label{eqn:TxQsk}
T_{\bf x}={\langle {\bf e}_k |V| {\bf x} \rangle^{x_k-1} \over x_k}\det(Q).
\ee

\medskip
\noindent
Recall the weighted Laplacian 
$$L(x_i x_j v_{i,j})=D[x_i]\Big(D[\langle {\bf e}_i|V|{\bf x}\rangle]-VD[x_i]\Big).$$
Therefore, by Thm.~\ref{thm:MatrixTree}, the weighted enumerator for the spanning trees on the partition graph equals
\be\label{eqn:tauQwave}
\tau(K_k, x_i x_j v_{i,j})=\det\big[L(x_i x_j v_{i,j})\big]_{k,k}=\left(\prod\limits_{j=1}^{k-1}x_j \right)\det(\widetilde{Q}),
\ee
where $\,\widetilde{Q}=\Big[D[\langle {\bf e}_i|V|{\bf x}\rangle]-VD[x_i]\Big]_{k,k}$ is the $(k,k)$ minor of $D[\langle {\bf e}_i|V|{\bf x}\rangle]-VD[x_i]$.

\medskip
\noindent
Next, we compare matrix
$$\widetilde{Q}=\left[\begin{array}{cccc}\langle {\bf e}_1 |V| {\bf x} \rangle -x_1 v_{1,1} & -x_2v_{1,2} & \dots & -x_{k-1} v_{1,k-1} \\-x_1 v_{2,1} & \langle {\bf e}_2 |V| {\bf x} \rangle -x_2 v_{2,2} & \dots & x_{k-1} v_{2,k-1} \\\vdots & \vdots & \ddots & \vdots \\-x_1 v_{k-1,1} & -x_2 v_{k-1,2} & \dots & \langle {\bf e}_{k-1} |V| {\bf x} \rangle -x_{k-1} v_{k-1,k-1}\end{array}\right]$$
to 

\medskip
$\hskip 0.27 in \overbrace{ \hskip 1.5 in }^{x_1}~~~\overbrace{ \hskip 1.45 in}^{x_2} \hskip 0.07 in \dots \hskip 0.07 in\overbrace{ \hskip 2.25 in }^{x_{k-1}}$
{\tiny
$$Q=\left[\begin{array}{cc|cc|c|cc} \!\!\langle {\bf e}_1 |V| {\bf x} \rangle \!-\!v_{1,1} & -v_{1,1} & -v_{1,2} & -v_{1,2} & \dots& -v_{1,k-1} & -v_{1,k-1} \\-v_{1,1} &  \!\!\!\!\!\!\langle {\bf e}_1 |V| {\bf x} \rangle \!-\!v_{1,1} & -v_{1,2} & -v_{1,2} & \dots  & -v_{1,k-1} & -v_{1,k-1} \\ \hline -v_{2,1} & -v_{2,1} & \!\!\langle {\bf e}_2 |V| {\bf x} \rangle \!-\!v_{2,2} & -v_{2,2} &  \dots & -v_{2,k-1} & -v_{2,k-1} \\-v_{2,1} & -v_{2,1} & -v_{2,2} &  \!\!\!\!\!\!\langle {\bf e}_2 |V| {\bf x} \rangle \!-\!v_{2,2} & \dots & -v_{2,k-1} & -v_{2,k-1} \\\hline \vdots & \vdots & \vdots & \vdots & \ddots &  \vdots  & \vdots  \\\hline-v_{k-1,1} & -v_{k-1,1} & -v_{k-1,2} & -v_{k-1,2}  & \dots &  \!\!\langle {\bf e}_{k-1} |V| {\bf x} \rangle \!-\!v_{k-1,k-1} & -v_{k-1,k-1} \\-v_{k-1,1} & -v_{k-1,1} & -v_{k-1,2} & -v_{k-1,2} & \dots & -v_{k-1,k-1} &  \!\!\!\!\!\!\langle {\bf e}_{k-1} |V| {\bf x} \rangle \!-\!v_{k-1,k-1}\end{array}\right]$$
}

\medskip
\noindent
First, we observe that if $|{\bf u}\rangle=(u_1,\hdots,u_{k-1})^T$ is a right eigenvector of $\widetilde{Q}$, then
$$\big(\underbrace{u_1,\hdots,u_1}_{x_1}, ~\underbrace{u_2,\hdots,u_2}_{x_2}, ~\hdots, ~\underbrace{u_{k-1},\hdots,u_{k-1}}_{x_{k-1}} \big)^T$$
is an eigenvector of $Q$ corresponding to the same eigenvalue.

\medskip
\noindent
Next, we find all $s_{k-1}\!-\!(k\!-\!1)=\sum\limits_{m=1}^{k-1}(x_m-1)$ remaining eigenvalues.
This is easy since for each $m=1,\hdots,k-1$ and each $j=s_{m-1}\!+\!1,\hdots,s_m$, 
vector $|{\bf e}_j \rangle \!-\! |{\bf e}_{j+1} \rangle$ in $\mathbb{R}^{s_{k-1}}$ is an eigenvector of $Q$
corresponding to the eigenvalue $\langle {\bf e}_m |V| {\bf x} \rangle$. 
Thus, $\langle {\bf e}_m |V| {\bf x} \rangle$ is an eigenvalue of $Q$ of multiplicity $x_m-1$.

\medskip
\noindent
Therefore,
\be\label{eqn:detQviaQtilde}
\det(Q)=\det(\widetilde{Q}) \prod\limits_{m=1}^{k-1} \langle {\bf e}_m |V| {\bf x} \rangle^{x_m-1}.
\ee
Together, equations \eqref{eqn:detQviaQtilde} and \eqref{eqn:tauQwave} imply
\be\label{eqn:detQ}
\det(Q)=T_{\bf 1}\big(D[x_i]VD[x_i]\big) \prod\limits_{j=1}^{k-1}x_j^{-1} \prod\limits_{m=1}^{k-1} \langle {\bf e}_m |V| {\bf x} \rangle^{x_m-1}.
\ee

\medskip
\noindent
Finally, substituting equation \eqref{eqn:detQ} into \eqref{eqn:TxQsk} yields
$$T_{\bf x}
=T_{\bf 1}\big(D[x_i]VD[x_i]\big) \prod\limits_{j=1}^k x_j^{-1} \prod\limits_{m=1}^k \langle {\bf e}_m |V| {\bf x} \rangle^{x_m-1}={T_{\bf 1}\big(D[x_i]VD[x_i]\big) \over {\bf x}^{\bf 1}}(V {\bf x})^{{\bf x}- {\bf 1}}.$$
Thus, by \eqref{eqn:Tpartition}, the proof is complete.
\end{proof}

\medskip
\begin{remark}
Observe that Theorem~\ref{thm:SolTx} reduces the need for calculating $T_{\bf x}(\cdot)$ for every ${\bf x}\in\mathbb{Z}_+^k$ 
to finding an expression for $T_{\bf 1}(\cdot)$, and substituting values of ${\bf x}$ into $T_{\bf 1}\big(D[x_i]VD[x_i]\big)$.
\end{remark}

\medskip
\noindent
\begin{example}
For a given vector ${\bf w} \in (0,\infty)^k$, let $$V = |{\bf w}\rangle \langle {\bf w}|-D[w_i^2].$$
Then, $K_{\bf x}(V)$ is a complete multipartite graph with weighted edges. 
Now,
$$D[x_i]VD[x_i]=|{\bf xw}\rangle \langle {\bf xw}|-D[x_i^2 w_i^2],$$
where $|{\bf xw}\rangle=\sum\limits_{i=1}^k x_i w_i |{\bf e}_i\rangle$ denotes the vector with coordinates $x_i w_i$.
Thus, the weighted Laplacian of the partition graph equals
$$L(x_i x_j v_{i,j})=\langle {\bf x}|{\bf w}\rangle D[x_i w_i]-|{\bf xw}\rangle \langle {\bf xw}|$$
and, by Lemma \ref{lem:MatrixTreeKS},
\be\label{eqn:exT1}
\tau(K_k, x_i x_j v_{i,j})={\det\Big(L(x_i x_j v_{i,j})+|{\bf xw}\rangle \langle {\bf xw}| \Big) \over \langle {\bf 1} |{\bf xw}\rangle^2}  
={\bf x}^{\bf 1} {\bf w}^{\bf 1} \langle {\bf x}|{\bf w}\rangle^{k-2}.
\ee
Hence, by Theorem~\ref{thm:SolTx}, substituting \eqref{eqn:exT1} into \eqref{eqn:SolTx} yields
\be\label{eqn:exTx}
T_{\bf x} = {\bf w}^{\bf 1} \langle {\bf w} | {\bf x} \rangle^{k-2} (V {\bf x})^{{\bf x}- {\bf 1}}.
\ee
Notice that letting ${\bf w}={\bf 1}$ in \eqref{eqn:exTx} yields \eqref{eqn:KxTx} as a special case.
\end{example}

\medskip
\noindent
Together, Proposition~\ref{prop:solODE}, Lemma~\ref{lem:solSx}, and Theorem \ref{thm:SolTx} yield the following 
general solution to the modified Smoluchowski equations  \eqref{eqn:Flory}.
\begin{cor}\label{cor:TxsolODE}
\be
\label{eqn:ODEsolnTxCor} 
\zeta_{\bf x}(t) = {1 \over {\bf x}!}\boldsymbol{\alpha}^{\bf x} T_{\bf x} e^{-\langle {\bf x} | V | \boldsymbol{\alpha} \rangle t} t^{\langle {\bf x} |\boldsymbol{1} \rangle -1}
\quad\text{ with }\quad T_{\bf x}={\tau(K_k, x_i x_j v_{i,j}) \over {\bf x}^{\bf 1}}(V {\bf x})^{{\bf x}- {\bf 1}} 
\ee
is the unique solution of MSE \eqref{eqn:Flory}.
\end{cor}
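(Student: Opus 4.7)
The statement is a direct consolidation of the three preceding results, so the plan is essentially to chain them together. First I would invoke Proposition~\ref{prop:solODE}, which already guarantees that the unique solution of the modified Smoluchowski system \eqref{eqn:Flory} with the given initial conditions has the form
\[
\zeta_{\bf x}(t) = \boldsymbol{\alpha}^{\bf x} S_{\bf x}\, e^{-\langle {\bf x} | V | \boldsymbol{\alpha} \rangle t}\, t^{\langle {\bf x} |\boldsymbol{1} \rangle -1},
\]
where $S_{\bf x}$ is determined by the recursion \eqref{SolnEqn}. Hence only the identification of the coefficient $S_{\bf x}$ remains.

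Next I would substitute Lemma~\ref{lem:solSx}, which identifies $S_{\bf x} = T_{\bf x}/{\bf x}!$ as the unique solution of \eqref{SolnEqn} (the uniqueness follows because \eqref{SolnEqn} is a triangular recursion in $\langle {\bf x} | {\bf 1} \rangle$ with the initial data $S_{{\bf e}_j}=1$ fixed). This already yields
\[
\zeta_{\bf x}(t) = \frac{1}{{\bf x}!}\boldsymbol{\alpha}^{\bf x}\, T_{\bf x}\, e^{-\langle {\bf x} | V | \boldsymbol{\alpha} \rangle t}\, t^{\langle {\bf x} |\boldsymbol{1} \rangle -1},
\]
with $T_{\bf x}=T_{\bf x}(V)$ the weighted spanning-tree enumerator of $K_{\bf x}(V)$.

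Finally, I would invoke Theorem~\ref{thm:SolTx}, which provides the closed-form expression
\[
T_{\bf x}(V) = \frac{\tau(K_k, x_i x_j v_{i,j})}{{\bf x}^{\bf 1}}\,(V{\bf x})^{{\bf x}-{\bf 1}},
\]
and substitute it into the previous display, yielding exactly \eqref{eqn:ODEsolnTxCor}. Uniqueness is inherited from Proposition~\ref{prop:solODE}.

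There is no real obstacle here since all the work has been done in the preceding results; the corollary is a bookkeeping step. The only minor subtlety worth flagging is ensuring the initial conditions match: when $\langle {\bf x}|{\bf 1}\rangle=1$, i.e., ${\bf x}={\bf e}_j$, the convention $T_{{\bf e}_j}=1$ and the factor $t^0=1$ correctly reproduce $\zeta_{{\bf e}_j}(0)=\alpha_j$, confirming consistency with the initial data of \eqref{eqn:Flory}.
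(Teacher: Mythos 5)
Your proof is correct and follows exactly the same route as the paper: the corollary is stated in the text as an immediate consequence of Proposition~\ref{prop:solODE}, Lemma~\ref{lem:solSx}, and Theorem~\ref{thm:SolTx}, which is precisely the chain you assemble. The added check that the initial data $\zeta_{{\bf e}_j}(0)=\alpha_j$ is reproduced is a harmless and sensible sanity check, though not needed for the deduction.
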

\noindent


\bigskip
\noindent
Finally, we would like to make the following general observation.
\begin{remark}
Notice that identity \eqref{eqn:1dKnTn} with $\,T_n = n^{n-2}$ is an application of Abel's binomial identity.  
In the 2-D case ($k=2$),  for $V=|{\bf 1}\rangle\langle {\bf 1}|-I$, the expression for weighted enumerator $\,T_{x_1, x_2}= x_1^{x_2-1} x_2^{x_1-1}\,$ 
can be obtained from a two dimensional generalization of Abel's identity in Huang and Liu \cite{HL}. See \cite{KOY}.
Thus, the multinomial equation \eqref{eqn:recTx} can be considered as a $k$-dimensional generalization of Abel's identity of the kind considered by
A.~Kelmans and A.~Postnikov in \cite{Postnikov}, J.~Pitman in \cite{Pitman2002}, and in related works.
Specifically, by Theorem~\ref{thm:SolTx}, for all nonnegative irreducible $V$, we have
\be\label{eqn:abeltau}
\tau(K_k, x_i x_j v_{i,j}) = \frac{1}{2 (\langle {\bf x} | \boldsymbol{1}\rangle - 1)} \sum_{{\bf y}, {\bf z} : {\bf y} + {\bf z} = {\bf x}} {{\bf x}! \over {\bf y}! {\bf z}!} \langle {\bf y} | V | {\bf z} \rangle \, \tau(K_k, y_i y_j v_{i,j}) \, \tau(K_k, z_i z_j v_{i,j}).
\ee

\end{remark}

\bigskip

\section{Gelation}

In the vector-multiplicative processes the total mass $\sum\limits_{\bf x} \zeta_{\bf x}(t) |{\bf x} \rangle$ is also a vector
with each coordinate being the corresponding component-vise total mass.
Consider the matrix of all second order moments $\,A(t)=\sum\limits_{\bf x} \zeta_{\bf x}(t) |{\bf x}\rangle \langle {\bf x}|$.
In this section, we will analyze $\sum\limits_{\bf x} \zeta_{\bf x}(t) |{\bf x} \rangle$ and $A(t)$ and establish gelation and find the gelation time.

\subsection{Divergence of second order moments}\label{sec:gel}
Let $\zeta_{\bf x}(t)$ be a solution to MSE \eqref{eqn:Flory}.
Then, \eqref{eqn:Flory} implies $\zeta_{\bf x}(t) \geq 0$ for all ${\bf x}$ and all $t\geq 0$.
For $n \in \mathbb{N}$, let  
$$|M_n(t)\rangle=\sum\limits_{{\bf x}: \langle {\bf 1} |{\bf x}\rangle \leq n} \zeta_{\bf x}(t) |{\bf x}\rangle
\quad\text{ and }\quad A_n(t)=\sum\limits_{{\bf x}: \langle {\bf 1} |{\bf x}\rangle \leq n} \zeta_{\bf x}(t) |{\bf x}\rangle \langle {\bf x}|$$
be the partial sums for the vector series $\sum\limits_{\bf x} \zeta_{\bf x}(t) |{\bf x} \rangle$ and matrix series $\,A(t)=\sum\limits_{\bf x} \zeta_{\bf x}(t) |{\bf x}\rangle \langle {\bf x}|$.
Equation \eqref{eqn:Flory} yields the following inequality
\begin{align}\label{eqn:ineqMnt}
{d \over dt}|M_n(t)\rangle &=- \sum_{{\bf x}: \langle {\bf 1} |{\bf x}\rangle \leq n} \zeta_{\bf x}(t) |{\bf x} \rangle \langle {\bf x} | V | \boldsymbol{\alpha} \rangle  + \frac{1}{2}\sum_{{\bf y}, {\bf z} \,:  \langle {\bf 1} |{\bf y} + {\bf z}\rangle \leq n} \langle {\bf y} | V | {\bf z} \rangle \zeta_{{\bf y}} \zeta_{{\bf z}} | {\bf y} +{\bf z} \rangle \nonumber \\
&=- A_n(t) V | \boldsymbol{\alpha} \rangle  + \sum_{{\bf y}, {\bf z} \,:  \langle {\bf 1} |{\bf y} + {\bf z}\rangle \leq n}  \zeta_{{\bf y}} | {\bf y} \rangle\langle {\bf y} | V | {\bf z} \rangle \zeta_{{\bf z}} \nonumber \\
&< - A_n(t) V | \boldsymbol{\alpha} \rangle  + \sum_{\substack{{\bf y}:  \langle {\bf 1} |{\bf y} \rangle \leq n \\ {\bf z} : \langle {\bf 1} |{\bf z}\rangle \leq n}}  \zeta_{{\bf y}} | {\bf y} \rangle\langle {\bf y} | V | {\bf z} \rangle \zeta_{{\bf z}} \nonumber \\
&=- A_n(t) V | \boldsymbol{\alpha} \rangle  +A_n(t) V | M_n(t)\rangle \nonumber \\
&=- A_n(t)V \big | \boldsymbol{\alpha} - M_n(t) \big\rangle,
\end{align}
where $\,|M_n(0)\rangle=| \boldsymbol{\alpha} \big\rangle$.
Inequality \eqref{eqn:ineqMnt} implies 
$$|M_n(t)\rangle \leq | \boldsymbol{\alpha} \big\rangle \qquad \forall t\geq 0.$$
Therefore,
$$\sum\limits_{{\bf x}: \langle {\bf 1} |{\bf x}\rangle \leq n} \zeta_{\bf x}(t)  \leq  \langle {\bf 1}|M_n(t)\rangle \leq  \langle {\bf 1} | \boldsymbol{\alpha} \big\rangle$$
and series
$$\sum\limits_{\bf x}\zeta_{\bf x}(t) \quad \text{ is convergent for all }~\boldsymbol{\alpha} \in (0,\infty)^k~\text{ and all }\,t \geq 0.$$

\medskip
\noindent
Let $S({\bf z})=\sum\limits_{\bf x}S_{\bf x}{\bf z}^{\bf x}$ be the generating function of $S_{\bf x}$ defined as a $k$-dimensional power series.
Notice that by \eqref{eqn:testsoln} we have 
\be\label{eqn:zetaSxw}
\zeta_{\bf x}(t) = \boldsymbol{\alpha}^{\bf x} S_{\bf x} e^{-\langle {\bf x} | V | \boldsymbol{\alpha} \rangle t} t^{\langle {\bf x} | \boldsymbol{1} \rangle -1}
={1 \over t}S_{\bf x}\left(\sum\limits_{j=1}^k \alpha_j t  e^{-\langle {\bf e}_j  | V | {\boldsymbol \alpha} t \rangle} |{\bf e}_j\rangle\right)^{\bf x}={1 \over t}S_{\bf x}{\bf w}^{\bf x},
\ee
where $\,|{\bf w}\rangle=\sum\limits_{j=1}^k \alpha_j t  e^{-\langle {\bf e}_j  | V | {\boldsymbol \alpha} t \rangle} |{\bf e}_j\rangle$.

\medskip
\noindent
Therefore, since 
$$\sum\limits_{\bf x}\zeta_{\bf x}(t)={1 \over t}\sum\limits_{\bf x} S_{\bf x}{\bf w}^{\bf x}$$ 
converges for all choices of $\,\boldsymbol{\alpha} \in (0,\infty)^k\,$ and all $t>0$,
the series $\,\sum\limits_{\bf x}S_{\bf x}{\bf w}^{\bf x}\,$
converges for all ${\bf w}$ in the domain $\mathcal{D}$ as in \eqref{eqn:DR}.
Hence, by \eqref{eqn:DR0}, open set
$$\mathcal{D}_0=\left\{{\bf u} \in (0,\infty)^k ~~:~~\exists {\bf z} \in  R_0  \quad\text{ such that }\quad
|{\bf u}\rangle =\sum\limits_{j=1}^k z_j \,  e^{-\langle {\bf e}_j  | V | {\bf z} \rangle}|{\bf e}_j\rangle  \right\}.$$
is a subset of the domain (interior region) of convergence of $S({\bf z})=\sum\limits_{\bf x}S_{\bf x}{\bf z}^{\bf x}$.

\begin{lemma}\label{lem:2momentGelation}
Solutions of \eqref{eqn:SE}  and \eqref{eqn:Flory} coincide for $\,0 \leq t \leq {1 \over \rho ( V D[\alpha_i] )}$.
Moreover,
\be\label{eqn:conservmass}
\sum_{\bf x} \zeta_{\bf x}(t) |{\bf x} \rangle = |\boldsymbol{\alpha} \rangle \quad \text{ for all }~0 \leq t < {1 \over \rho (V D[\alpha_i] )}.
\ee
\end{lemma}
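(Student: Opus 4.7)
The plan is to exploit the factorization $\zeta_{\bf x}(t) = \frac{1}{t} S_{\bf x} {\bf w}(t)^{\bf x}$ with $|{\bf w}(t)\rangle = \Psi_V(\boldsymbol{\alpha} t)$, derived just above the lemma, to reduce convergence of the second moments of $\zeta_{\bf x}(t)$ to convergence of certain weighted sums of the power series $S({\bf z}) = \sum_{\bf x} S_{\bf x} {\bf z}^{\bf x}$ at ${\bf z}={\bf w}(t)$.  Once the matrix $A(t) = \sum_{\bf x}\zeta_{\bf x}(t)|{\bf x}\rangle\langle{\bf x}|$ is shown to have finite entries, both the conservation of mass \eqref{eqn:conservmass} and the coincidence of the solutions of \eqref{eqn:SE} and \eqref{eqn:Flory} will follow from the derivation already sketched in \eqref{eqn:consmassSE}.

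The first step is to verify that $\mathcal{D}_0$ is open in $(0,\infty)^k$.  Direct differentiation of $\Psi_V$ (which is also readable from \eqref{eqn:fieldFjacobian} evaluated at ${\bf x}={\bf z}$) gives Jacobian $D\!\left[e^{-\langle {\bf e}_j | V | {\bf z}\rangle}\right](I - D[z_j]V)$ at each ${\bf z}\in(0,\infty)^k$.  For ${\bf z}\in R_0$ one has $\rho(D[z_j]V)=\rho(VD[z_j])<1$, so $I-D[z_j]V$ is invertible and the Jacobian is nonsingular.  The inverse function theorem then implies that $\Psi_V$ is a local diffeomorphism on $R_0$, in particular an open map, so $\mathcal{D}_0=\Psi_V(R_0)$ is open.

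Fix now $t\in[0,\,1/\rho(VD[\alpha_i]))$.  By Corollary \ref{cor:ytat}, $\boldsymbol{\alpha} t\in R_0$, hence $|{\bf w}(t)\rangle\in\mathcal{D}_0$.  By openness there is $\varepsilon>0$ with ${\bf w}(t)+\varepsilon|{\bf 1}\rangle\in\mathcal{D}_0\subset\mathcal{D}$, so the nonnegative series $\sum_{\bf x} S_{\bf x}({\bf w}(t)+\varepsilon|{\bf 1}\rangle)^{\bf x}$ converges.  Each ratio $w_k(t)/(w_k(t)+\varepsilon)$ is strictly less than $1$, and since $x_ix_j\leq(\langle{\bf 1}|{\bf x}\rangle)^2$ grows only polynomially in $\langle{\bf 1}|{\bf x}\rangle$ while $\prod_k(w_k(t)/(w_k(t)+\varepsilon))^{x_k}$ decays exponentially, the product $x_ix_j\prod_k(w_k(t)/(w_k(t)+\varepsilon))^{x_k}$ is uniformly bounded in ${\bf x}$ by some constant $C_{ij}$.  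Hence
$$\sum_{\bf x} S_{\bf x}\, x_ix_j\, {\bf w}(t)^{\bf x} \;\leq\; C_{ij}\sum_{\bf x} S_{\bf x}\bigl({\bf w}(t)+\varepsilon|{\bf 1}\rangle\bigr)^{\bf x} \;<\;\infty,$$
so every entry of $A(t)$ is finite on $[0,\,1/\rho(VD[\alpha_i]))$.  With this in hand the interchange in \eqref{eqn:consmassSE} is justified and yields $\frac{d}{dt}\sum_{\bf x}\zeta_{\bf x}(t)|{\bf x}\rangle=0$; combined with the initial condition $\sum_{\bf x}\zeta_{\bf x}(0)|{\bf x}\rangle=|\boldsymbol{\alpha}\rangle$, this establishes \eqref{eqn:conservmass}.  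Whenever the total mass equals $|\boldsymbol{\alpha}\rangle$ the right-hand sides of \eqref{eqn:SE} and \eqref{eqn:Flory} agree, so by uniqueness in Proposition \ref{prop:solODE} the SE solution coincides with $\zeta_{\bf x}(t)$ from Corollary \ref{cor:TxsolODE} on $[0,\,1/\rho(VD[\alpha_i]))$; the boundary case $t=1/\rho(VD[\alpha_i])$ follows by continuity of each $\zeta_{\bf x}(\cdot)$.

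The main obstacle is the openness of $\mathcal{D}_0$: without it one cannot produce a majorant in $\mathcal{D}$ that strictly dominates ${\bf w}(t)$ component-wise, and the naive alternative of slightly enlarging $t$ fails because $w_j(t)=\alpha_j t\, e^{-t\langle {\bf e}_j|V|\boldsymbol{\alpha}\rangle}$ is unimodal in $t$.  Once openness is established, the rest of the argument is standard power-series dominance together with a direct appeal to \eqref{eqn:consmassSE}.
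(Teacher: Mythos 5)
Your proof is correct and follows the same route as the paper's: factor $\zeta_{\bf x}(t)=\tfrac{1}{t}S_{\bf x}{\bf w}(t)^{\bf x}$, observe that ${\bf w}(t)\in\mathcal{D}_0$ when $t<1/\rho(VD[\alpha_i])$, deduce finiteness of the second-moment matrix $A(t)$ from convergence of polynomial multiples of $S$ in the interior of its domain, and then invoke \eqref{eqn:consmassSE}. Your inverse-function-theorem argument for the openness of $\mathcal{D}_0$ is a genuine and worthwhile addition — the paper simply asserts $\mathcal{D}_0$ is open, yet that openness is precisely what guarantees ${\bf w}(t)$ lies in the \emph{interior} of the convergence domain and therefore admits the $\varepsilon$\nobreakdash-inflated majorant you use.
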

\begin{proof}
Observe that for all multinomials $p({\bf x})=p(x_1,\hdots,x_k)$, the series $\sum\limits_{\bf x} p({\bf x}) S_{\bf x}{\bf z}^{\bf x}$
is convergent in the domain of convergence of $S({\bf z})=\sum\limits_{\bf x}S_{\bf x}{\bf z}^{\bf x}$.
Thus, by \eqref{eqn:zetaSxw}, series 
$$\sum\limits_{\bf x}p({\bf x}) \zeta_{\bf x}(t)={1 \over t}\sum\limits_{\bf x} p({\bf x}) S_{\bf x}{\bf w}^{\bf x} \quad\text{ with }\quad |{\bf w}\rangle=\sum\limits_{j=1}^k \alpha_j t  e^{-\langle {\bf e}_j  | V | {\boldsymbol \alpha} t \rangle} |{\bf e}_j\rangle$$
converges whenever ${\bf w} \in \mathcal{D}_0$. This happens when ${\boldsymbol \alpha} t \in R_0$, i.e., when
$$t<{1 \over \rho (V D[\alpha_i] )}.$$
Since $p({\bf x})$ can be taken to be quadratic, the matrix of all second order moments $\,A(t)=\sum\limits_{\bf x} \zeta_{\bf x}(t) |{\bf x}\rangle \langle {\bf x}|$
is finite when $t<{1 \over \rho (V D[\alpha_i] )}$. 
Hence, equation \eqref{eqn:consmassSE} implies that $\,\sum_{\bf x} \zeta_{\bf x}(t) |{\bf x} \rangle = |\boldsymbol{\alpha} \rangle$ for all $\,0 \leq t <{1 \over \rho ( V D[\alpha_i] )}$.
\end{proof}

\bigskip
\noindent 
Additionally, equation \eqref{eqn:SE} implies
\begin{align*}
{d \over dt}A(t) &= \frac{1}{2}\sum_{{\bf y}, {\bf z}} \langle {\bf y} | V | {\bf z} \rangle \zeta_{{\bf y}}(t) \zeta_{{\bf z}}(t) \Big( |{\bf y+z}\rangle \langle {\bf y+z}| \Big) 
- \sum_{{\bf x,y}} \langle {\bf x} | V | {\bf y} \rangle \zeta_{\bf x}(t) \zeta_{{\bf y}}(t) \Big( |{\bf x}\rangle \langle {\bf x}| \Big) \\
&= \frac{1}{2}\sum_{{\bf y}, {\bf z}} \langle {\bf y} | V | {\bf z} \rangle \zeta_{{\bf y}}(t) \zeta_{{\bf z}}(t) \Big( |{\bf y}\rangle \langle {\bf z}|+|{\bf z}\rangle \langle {\bf y}| \Big) 
~= \sum_{{\bf y}, {\bf z}} \langle {\bf y} | V | {\bf z} \rangle \zeta_{{\bf y}}(t) \zeta_{{\bf z}}(t) \Big( |{\bf y}\rangle \langle {\bf z}| \Big)\\
&= \left(\sum_{\bf y} \zeta_{{\bf y}}(t) |{\bf y}\rangle \langle {\bf y}|\right) V \left(\sum_{\bf z} \zeta_{{\bf z}}(t) |{\bf z}\rangle \langle {\bf z}|\right) 
~=A(t)\,VA(t),
\end{align*}
and therefore,
\begin{equation}\label{eqn:odeAVt}
{d \over dt}\big(VA(t)\big)=\big(VA(t)\big)^2
\end{equation}
\\
\noindent
with the initial conditions $A(0)=D[\alpha_i]$.
Note that we used finiteness of some third order moments of $\zeta_{\bf x}(t)$. However, they should be finite for all positive $t<{1 \over \rho (V D[\alpha_i] )}$
as $p({\bf x})$ can be take to be a third degree multinomial.

\medskip
\noindent
Naturally, equation \eqref{eqn:odeAVt} has the following solution
\be\label{eqn:solAt}
A(t)=D[\alpha_i]\Big(I-tVD[\alpha_i]\Big)^{-1}
\ee
implying the explosive behavior of the second moments matrix $A(t)$ as $\,t \uparrow {1 \over \rho (V D[\alpha_i] )}$.

\bigskip

\subsection{Gelation via mass conservation and mass dissipation} 
As a consequence of multidimensional Lambert-Euler inversion,
$\sum\limits_{\bf x} \zeta_{\bf x}(t) |{\bf x} \rangle$ can be expressed via the minimal solution ${\bf y}(t)$ of \eqref{eqn:GenMin}.

\medskip
\begin{lemma}\label{len:massloss} 
Let $\zeta_{\bf x}(t)$ denote the solution to the modified Smoluchowski equation \eqref{eqn:Flory} and ${\bf y}(t)=\Lambda_V \circ \Psi_V({\boldsymbol \alpha} t)$ be the minimal solution of \eqref{eqn:GenMin}.  Then,
$$\sum_{\bf x}  \zeta_{\bf x}(t)|{\bf x}\rangle = {1 \over t}|{\bf y}(t)\rangle, \quad\text{ or equivalently, } \qquad \sum_{\bf x} x_i \zeta_{\bf x}(t) = {y_i(t) \over t} \qquad (i=1,\hdots,k).$$
\end{lemma}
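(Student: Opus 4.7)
The plan is to identify $\sum_{\bf x}\zeta_{\bf x}(t)|{\bf x}\rangle$ as the image under $\Lambda_V$ of the curve ${\bf w}(t)=\Psi_V({\boldsymbol \alpha}t)$, via a generating-function identity for weighted rooted spanning trees.

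First I would use equation \eqref{eqn:zetaSxw} to rewrite
$$\sum_{\bf x} x_j\,\zeta_{\bf x}(t) \,=\, \frac{1}{t}\sum_{\bf x} x_j\, S_{\bf x}\,{\bf w}(t)^{\bf x} \,=\, \frac{1}{t}\,R_j\bigl({\bf w}(t)\bigr),$$
where I set $R_j({\bf z}):=z_j\partial_{z_j}S({\bf z})=\sum_{\bf x}x_j S_{\bf x}{\bf z}^{\bf x}$ and $|{\bf w}(t)\rangle = \sum_j \alpha_j t\,e^{-t\langle {\bf e}_j|V|\boldsymbol{\alpha}\rangle}|{\bf e}_j\rangle = \Psi_V(\boldsymbol{\alpha}t)$. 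The positive series $R_j({\bf w}(t))$ converges for every $t>0$ because the preceding subsection has established $\sum_{\bf x} x_j\,\zeta_{\bf x}(t)\le \alpha_j<\infty$.

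Next I would prove the key functional identity
$$R_j({\bf z}) \,=\, z_j\,\exp\!\Bigl(\textstyle\sum_i v_{j,i}\,R_i({\bf z})\Bigr),\qquad\text{equivalently}\qquad R_j({\bf z})\,e^{-\langle {\bf e}_j\,|\,V\,|\,{\bf R}({\bf z})\rangle} \,=\, z_j,$$
holding throughout the domain of convergence of $R_j$. Using Lemma~\ref{lem:solSx} ($S_{\bf x}=T_{\bf x}/{\bf x}!$) together with the spanning-tree interpretation of $T_{\bf x}$, the coefficient $\frac{x_j T_{\bf x}}{{\bf x}!}$ equals the exponential weight of vertex-labeled spanning trees of $K_{\bf x}(V)$ rooted at a specified type-$j$ vertex. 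The standard root decomposition — remove the type-$j$ root (contributing $z_j$) and view the remainder as an unordered set of rooted subtrees, each attached to the root by an edge of weight $v_{j,i}$ when the subtree's root has type $i$ — combined with the exponential formula for labeled set species yields the identity.

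Substituting ${\bf z}={\bf w}(t)=\Psi_V(\boldsymbol{\alpha}t)$ into the functional equation gives
$$R_j({\bf w}(t))\,e^{-\langle {\bf e}_j\,|\,V\,|\,{\bf R}({\bf w}(t))\rangle} \,=\, \alpha_j\,t\,e^{-t\langle {\bf e}_j|V|\boldsymbol{\alpha}\rangle},$$
so ${\bf y}^\star(t):={\bf R}({\bf w}(t))$ is a solution of \eqref{eqn:GenMin}. To identify ${\bf y}^\star$ with the minimal solution ${\bf y}(t)=\Lambda_V\circ\Psi_V(\boldsymbol{\alpha}t)$ it suffices, by the uniqueness statement in Theorem~\ref{thm:multEulerLambert}, to show ${\bf y}^\star(t)\in\overline{R}_0$. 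For $t\le T_{gel}:=1/\rho(VD[\alpha_i])$ this is immediate: Lemma~\ref{lem:2momentGelation} gives $\sum_{\bf x}\zeta_{\bf x}(t)|{\bf x}\rangle = |\boldsymbol{\alpha}\rangle$, hence ${\bf y}^\star(t)=t\boldsymbol{\alpha}={\bf y}(t)$ by Corollary~\ref{cor:ytat}(a). For $t>T_{gel}$, I would use that ${\bf y}^\star(t)\le t\boldsymbol{\alpha}$ (inherited from the mass bound $R_j({\bf w}(t))/t\le\alpha_j$), together with joint continuity in $t$ of ${\bf y}^\star(t)$ and ${\bf y}(t)$ and the fact that these coincide at $t=T_{gel}$, to propagate the identification along the branch of solutions of \eqref{eqn:GenMin} that lies in $\overline{R}_0$.

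The main obstacle is this last step. After the gelation time the equation \eqref{eqn:GenMin} admits the trivial root ${\boldsymbol \alpha}t\in R_1$ in addition to the minimal root ${\bf y}(t)\in\overline{R}_0$, and one must argue that the power-series continuation $R_j({\bf w}(t))$, which near the origin manifestly selects the $R_0$-branch, stays on that branch for all $t>T_{gel}$. The argument is essentially analytic: $R_j$ is a nonnegative-coefficient power series that serves as the formal inverse of $\Psi_V$; its image near the origin lies in $R_0$; and the bijectivity of $\Lambda_V:\mathcal{D}\to\overline{R}_0$ granted by Theorem~\ref{thm:multEulerLambert} forces the branch selection to be preserved wherever $R_j$ continues to converge.
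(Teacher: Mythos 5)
Your starting point agrees with the paper's: both identify $\sum_{\bf x} x_j\,\zeta_{\bf x}(t) = t^{-1}R_j({\bf w}(t))$ with $R_j({\bf z})=z_j\partial_{z_j}S({\bf z})$ and ${\bf w}(t)=\Psi_V(\boldsymbol\alpha t)$. From there the two arguments diverge, and yours has a gap that you yourself flag but do not close.

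You propose to prove the functional equation $R_j({\bf z})\,e^{-\langle {\bf e}_j|V|{\bf R}({\bf z})\rangle}=z_j$ by a rooted-tree/species decomposition and the exponential formula, and then to identify ${\bf R}({\bf w}(t))$ with $\Lambda_V({\bf w}(t))$ by continuity across $T_{gel}$. The continuity step does not work as stated: at $t=T_{gel}$ the two solutions ${\bf y}(t)$ and $\boldsymbol\alpha t$ of \eqref{eqn:GenMin} \emph{coincide}, so matching at that single point cannot determine which branch the analytic continuation $t\mapsto {\bf R}({\bf w}(t))$ follows for $t>T_{gel}$; both the bound ${\bf R}({\bf w}(t))\le t\boldsymbol\alpha$ and nonnegativity of the power-series coefficients are satisfied by both candidate branches, so they provide no discrimination either. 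Appealing to ``bijectivity of $\Lambda_V$ forcing branch selection'' presupposes exactly what needs to be shown, namely ${\bf R}({\bf w}(t))\in\overline{R}_0$.

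The paper's proof avoids the issue entirely and does not need the combinatorial functional equation. It uses the mass-conservation statement of Lemma~\ref{lem:2momentGelation}, but with the freedom to re-parametrize: for any ${\bf y}\in R_0$, setting $\boldsymbol\alpha={\bf y}$ and $t=1$ makes the hypothesis $t<1/\rho(VD[\alpha_i])$ automatic (it is exactly ${\bf y}\in R_0$), so Lemma~\ref{lem:2momentGelation} yields $D[w_i]\,|\nabla S({\bf w})\rangle=|{\bf y}\rangle$ with ${\bf w}=\Psi_V({\bf y})$, i.e.\ $D[w_i]\nabla S=\Lambda_V$ on all of $\Psi_V(R_0)$. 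Then, for a generic $\boldsymbol\alpha$ and any $t\ne T_{gel}$, the minimal solution ${\bf y}(t)$ lies in the open set $R_0$ (by Lemma~\ref{lem:R1toR0} when $\boldsymbol\alpha t\in R_1$, trivially when $\boldsymbol\alpha t\in R_0$), and ${\bf w}(t)=\Psi_V(\boldsymbol\alpha t)=\Psi_V({\bf y}(t))\in\Psi_V(R_0)$, so the identity applies pointwise with no branch tracking. The endpoint $t=T_{gel}$ is handled by a one-line continuity argument using monotonicity of the partial sums. You should adopt this re-parametrization trick rather than the analytic-continuation route; the combinatorial derivation of the functional equation, even if carried out carefully, still leaves you with the $\overline{R}_0$-membership problem that the paper's approach dissolves.
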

\bp
Consider the generating function  $\,S({\bf z})=\sum\limits_{\bf x}S_{\bf x}{\bf z}^{\bf x}\,$ of $\,S_{\bf x}\,$ for $\,{\bf z} \in \mathbb{R}^k$ 
in the domain of convergence.
Recall that, by \eqref{eqn:zetaSxw}, we have the following representation
$$\zeta_{\bf x}(t) ={1 \over t}S_{\bf x}{\bf w}^{\bf x}, \quad \text{ where } \quad |{\bf w}\rangle=\sum\limits_{j=1}^k \alpha_j t  e^{-\langle {\bf e}_j  | V | {\boldsymbol \alpha} t \rangle} |{\bf e}_j\rangle.$$
Therefore, 
$$\zeta_{\bf x}(t)\,x_i={1 \over t} x_i S_{\bf x}{\bf w}^{\bf x}={1 \over t} w_i {\partial \over \partial w_i} S_{\bf x}{\bf w}^{\bf x},$$ 
and
$$\sum_{\bf x}  \zeta_{\bf x}(t)|{\bf x}\rangle = {1 \over t} D[w_i] \big|\nabla S({\bf w}) \big\rangle \quad \text{ with the gradient of }\,S({\bf z})\,\text{ taken at }~~{\bf w}.$$
Thus, by Lemma \ref{lem:2momentGelation}, we have
$$t|{\boldsymbol \alpha}\rangle = D[w_i] \big| \nabla S({\bf w}) \big\rangle, \quad \text{ where }\quad  |{\bf w}\rangle=\sum\limits_{j=1}^k \alpha_j t  e^{-\langle {\bf e}_j  | V | {\boldsymbol \alpha} t \rangle} |{\bf e}_j\rangle,$$
for any choice of $\,{\boldsymbol \alpha}>{\bf 0}\,$ and $\,0<t < {1 \over \rho ( V D[\alpha_i] )}$.

\medskip
\noindent
Hence, for all ${\bf y} \in R_0$, we have
\be\label{eqn:gradSwy}
|{\bf y}\rangle =  D[w_i] \big| \nabla S({\bf w}) \big\rangle, \quad \text{ where }\quad  |{\bf w}\rangle=\sum\limits_{j=1}^k y_j  e^{-\langle {\bf e}_j  | V | {\bf y} \rangle} |{\bf e}_j\rangle.
\ee

\medskip
\noindent
Next, for a given ${\boldsymbol \alpha}>{\bf 0}\,$ and $\,t \not= {1 \over \rho ( V D[\alpha_i] )}$, let ${\bf y}(t)=\Lambda_V \circ \Psi_V({\boldsymbol \alpha} t)$ be the minimal solution of \eqref{eqn:GenMin}. 
Then, we have
$$\sum_{\bf x}  \zeta_{\bf x}(t)|{\bf x}\rangle = {1 \over t} D[w_i] | \nabla S({\bf w})\rangle,$$
where, by \eqref{eqn:GenMin},
$$|{\bf w}\rangle=\sum\limits_{j=1}^k \alpha_j t  e^{-\langle {\bf e}_j  | V | {\boldsymbol \alpha} t \rangle} |{\bf e}_j\rangle=\sum\limits_{j=1}^k y_j  e^{-\langle {\bf e}_j  | V | {\bf y}(t) \rangle} |{\bf e}_j\rangle$$
and as ${\bf y}(t) \in R_0$, \eqref{eqn:gradSwy} yields
$$D[w_i] \big| \nabla S({\bf w})\big\rangle=|{\bf y}(t)\rangle.$$
Therefore,
$$\sum_{\bf x}  \zeta_{\bf x}(t)|{\bf x}\rangle ={1 \over t} D[w_i] \big| \nabla S({\bf w})\big\rangle= {1 \over t}  |{\bf y}(t)\rangle$$
affirming the statement of the lemma for $\,t \not= {1 \over \rho ( V D[\alpha_i] )}$.

\medskip
\noindent
Now, equation \eqref{eqn:ineqMnt} implies that the partial sums $\,|M_n(t)\rangle\,$ of $\,\sum_{\bf x}  \zeta_{\bf x}(t)|{\bf x}\rangle\,$ are decreasing, 
and therefore, $\sum_{\bf x}  \zeta_{\bf x}(t)|{\bf x}\rangle$ itself is coordinate-wise nonincreasing. Thus, $\,\sum_{\bf x}  \zeta_{\bf x}(t)|{\bf x}\rangle \leq |{\boldsymbol \alpha}\rangle$, and by continuity of ${\bf y}(t)=\Lambda_V \circ \Psi_V({\boldsymbol \alpha} t)$, we have
$$|{\boldsymbol \alpha}\rangle =\lim\limits _{~~t \downarrow {1 \over \rho ( V D[\alpha_i] )}}{1 \over t}  |{\bf y}(t)\rangle 
=\lim\limits _{~~t \downarrow {1 \over \rho ( V D[\alpha_i] )}}\sum_{\bf x}  \zeta_{\bf x}(t)|{\bf x}\rangle 
~\leq~ \sum_{\bf x}  \zeta_{\bf x}\left({1 \over \rho ( V D[\alpha_i] )}\right)\big|{\bf x}\big\rangle ~\leq~ |{\boldsymbol \alpha}\rangle.$$
This yields $\sum_{\bf x}  \zeta_{\bf x}(t)|{\bf x}\rangle =|{\boldsymbol \alpha}\rangle= {1 \over t}  |{\bf y}(t)\rangle$ for $\,t = {1 \over \rho ( V D[\alpha_i] )}$.
\ep

\bigskip
\noindent
Gelation time $T_{gel}$ is defined as the time until which the mass $\sum\limits_{\bf x} \zeta_{\bf x}(t) |{\bf x} \rangle$ is conserved, and after which, 
the mass begins to dissipate coordinate-wise.
\begin{defn}\label{def:gel}
The gelation time is the infimum 
$$T_{gel}=\inf\Big\{t>0\,:\,\sum_{\bf x} \zeta_{\bf x}(t) |{\bf x} \rangle  ~<~  |\boldsymbol{\alpha} \rangle \Big\}.$$
\end{defn}

\medskip
\noindent
Together, Corollary \ref{cor:ytat} and Lemma \ref{len:massloss} imply the following.
\begin{cor}\label{cor:gel}
Let $\zeta_{\bf x}(t)$ be the solution to the modified Smoluchowski equation \eqref{eqn:Flory}.  Then, gelation time equals 
$$T_{gel}={1 \over \rho ( V D[\alpha_i] )}.$$
\end{cor}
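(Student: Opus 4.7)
The plan is to combine the two ingredients assembled just above: the mass identity from Lemma \ref{len:massloss} and the dichotomy from Corollary \ref{cor:ytat}. Lemma \ref{len:massloss} identifies the total mass with the minimal solution of the multidimensional Lambert-Euler inversion, writing $\sum_{\bf x}\zeta_{\bf x}(t)|{\bf x}\rangle = \tfrac{1}{t}|{\bf y}(t)\rangle$ where ${\bf y}(t)=\Lambda_V\circ\Psi_V(\boldsymbol{\alpha} t)$. Corollary \ref{cor:ytat} then tells us exactly when ${\bf y}(t)=\boldsymbol{\alpha} t$ and when ${\bf y}(t)<\boldsymbol{\alpha} t$, with the threshold being $t_*=1/\rho(VD[\alpha_i])$. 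Substituting the dichotomy into the mass identity converts it directly into a dichotomy for the total mass.

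More precisely, I would argue in two steps. First, for $t\in[0,t_*]$, part (a) of Corollary \ref{cor:ytat} gives ${\bf y}(t)=\boldsymbol{\alpha} t$, so Lemma \ref{len:massloss} yields
\[
\sum_{\bf x}\zeta_{\bf x}(t)|{\bf x}\rangle \;=\; \tfrac{1}{t}|\boldsymbol{\alpha} t\rangle \;=\; |\boldsymbol{\alpha}\rangle,
\]
showing that the total mass is conserved on $[0,t_*]$ and hence $T_{gel}\geq t_*$ by Definition \ref{def:gel}. Second, for $t>t_*$, part (b) gives ${\bf y}(t)<\boldsymbol{\alpha} t$ coordinate-wise, so Lemma \ref{len:massloss} yields
\[
\sum_{\bf x}\zeta_{\bf x}(t)|{\bf x}\rangle \;=\; \tfrac{1}{t}|{\bf y}(t)\rangle \;<\; |\boldsymbol{\alpha}\rangle,
\]
again coordinate-wise. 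Taking the infimum over such $t$ yields $T_{gel}\leq t_*$, so the two bounds combine to give $T_{gel}=t_*=1/\rho(VD[\alpha_i])$.

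There is essentially no obstacle here: all the analytic content has been absorbed into Lemma \ref{len:massloss} (which required the multidimensional Lambert-Euler inversion and the mass-conservation argument via the partial sums $M_n(t)$) and into Corollary \ref{cor:ytat} (which is an immediate consequence of Theorem \ref{thm:multEulerLambert} since $\boldsymbol{\alpha} t\in\overline{R}_0$ iff $t\leq t_*$). The only minor point worth noting in the write-up is that the strict inequality ${\bf y}(t)<\boldsymbol{\alpha} t$ in every coordinate, together with the mass identity, makes the infimum in Definition \ref{def:gel} actually attained as a limit from the right at $t_*$, so one can safely conclude $T_{gel}=t_*$ rather than only $T_{gel}\leq t_*$.
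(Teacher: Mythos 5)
Your proof is correct and follows exactly the same route as the paper, which simply states that Corollary \ref{cor:ytat} and Lemma \ref{len:massloss} together imply the result; you have merely spelled out the substitution of the dichotomy (a)/(b) into the mass identity that the paper leaves implicit.
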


\bigskip
\noindent
Recall another critical time introduced in Subsection \ref{sec:Smol}
$$t_c =\inf\Big\{t>0~:~A(t)=\sum\limits_{\bf x} \zeta_{\bf x}(t) |{\bf x}\rangle \langle {\bf x}| \text{ diverges } \Big\}.$$
From Lemma \ref{len:massloss} and equation \eqref{eqn:consmassSE} we deduce the value of $t_c$.
\begin{cor}\label{cor:tcp}
For $\zeta_{\bf x}(t)$ solving the modified Smoluchowski equation \eqref{eqn:Flory},
$$t_c ={1 \over \rho ( V D[\alpha_i] )}.$$
\end{cor}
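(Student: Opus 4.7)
The plan is to split the claimed equality $t_c = 1/\rho(VD[\alpha_i])$ into two matching inequalities. The lower bound $t_c \geq 1/\rho(VD[\alpha_i])$ is essentially already established in the proof of Lemma \ref{lem:2momentGelation}: taking the quadratic multinomial $p({\bf x}) = x_i x_j$ in the generating-function argument given there shows that every entry of $A(t) = \sum_{\bf x} \zeta_{\bf x}(t)|{\bf x}\rangle\langle{\bf x}|$ is finite whenever $\boldsymbol{\alpha} t \in R_0$, i.e.\ whenever $t < 1/\rho(VD[\alpha_i])$. Equivalently, the closed-form expression $A(t) = D[\alpha_i]\bigl(I - tVD[\alpha_i]\bigr)^{-1}$ from \eqref{eqn:solAt} makes the same conclusion explicit.

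For the reverse inequality $t_c \leq 1/\rho(VD[\alpha_i])$, I would argue by contradiction. Suppose that $t_c > T_{gel} := 1/\rho(VD[\alpha_i])$ and fix any $T \in (T_{gel}, t_c)$. Then on $[0, T]$ every entry of $A(t)$ is finite, so by the discussion in Subsection \ref{sec:Smol} the solutions of \eqref{eqn:SE} and \eqref{eqn:Flory} coincide on this interval, and the mass-conservation identity \eqref{eqn:consmassSE} applies, giving $\frac{d}{dt}\sum_{\bf x} \zeta_{\bf x}(t)|{\bf x}\rangle = 0$ throughout $[0, T]$. Integration from $t=0$ then yields $\sum_{\bf x} \zeta_{\bf x}(T)|{\bf x}\rangle = |\boldsymbol{\alpha}\rangle$. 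On the other hand, Lemma \ref{len:massloss} rewrites this total mass as $\frac{1}{T}|{\bf y}(T)\rangle$, while Corollary \ref{cor:ytat}(b) asserts ${\bf y}(T) < \boldsymbol{\alpha} T$ coordinate-wise whenever $T > T_{gel}$. Hence $\sum_{\bf x} \zeta_{\bf x}(T)|{\bf x}\rangle < |\boldsymbol{\alpha}\rangle$, contradicting the mass conservation just derived and forcing $t_c \leq T_{gel}$.

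No computationally serious obstacle is expected; the only delicate point is justifying the use of \eqref{eqn:consmassSE}, originally stated for \eqref{eqn:SE}, in the Flory context of \eqref{eqn:Flory}. This transfer is secured by the observation in Subsection \ref{sec:Smol} that the two solution families coincide precisely on the interval where the second-moment matrix $A(t)$ has finite entries, so the identity holds on $[0, t_c)$ for the Flory solution as well, which is all that the contradiction step requires.
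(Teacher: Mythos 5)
Your proposal follows the same route as the paper's proof: the lower bound $t_c \geq 1/\rho(VD[\alpha_i])$ comes from the finiteness of $A(t)$ established in Lemma \ref{lem:2momentGelation} and the closed form \eqref{eqn:solAt}, while the upper bound $t_c \leq 1/\rho(VD[\alpha_i])$ combines the mass-conservation identity \eqref{eqn:consmassSE} on $[0,t_c)$ with Lemma \ref{len:massloss} and Corollary \ref{cor:ytat}(b). Phrasing the upper bound as a proof by contradiction is a stylistic difference only; the logical content is identical to the paper's two-line argument.
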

\begin{proof}
Equation \eqref{eqn:consmassSE} yields $\sum_{\bf x} \zeta_{\bf x}(t) |{\bf x} \rangle= |\boldsymbol{\alpha} \rangle$ for all $t \in [0,t_c)$.
Thus, Lemma \ref{len:massloss} implies  $\, t_c \leq {1 \over \rho ( V D[\alpha_i] )} $.
On the other hand, the second moment matrix series $A(t)=\sum\limits_{\bf x} \zeta_{\bf x}(t) |{\bf x}\rangle \langle {\bf x}| $ is finite
for  $\, t <{1 \over \rho ( V D[\alpha_i] )}$ (recall \eqref{eqn:solAt} and its derivation), yielding $\, t_c \geq {1 \over \rho ( V D[\alpha_i] )}$.
\end{proof}

\bigskip
\noindent
Lemmas \ref{lem:massdissipation} and \ref{len:massloss} yield another important corollary,
stating that the total mass $\,\sum\limits_{\bf x} \zeta_{\bf x}(t) |{\bf x} \rangle\,$ will eventually dissipate to nothing,
corresponding to a fact that all smaller clusters will be eventually absorbed by a giant component.
\begin{cor}\label{cor:massto0}
Let $\zeta_{\bf x}(t)$ be the solution to the modified Smoluchowski equation \eqref{eqn:Flory}.  Then, 
$$\lim\limits_{t \to \infty}\sum\limits_{\bf x} \zeta_{\bf x}(t) |{\bf x} \rangle={\boldsymbol 0}.$$
\end{cor}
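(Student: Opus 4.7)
The plan is to combine the two results cited immediately before the corollary, namely Lemma~\ref{len:massloss} and Lemma~\ref{lem:massdissipation}, which together make the conclusion essentially immediate.

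First, I would invoke Lemma~\ref{len:massloss}, which gives the identity
\[
\sum_{\bf x}  \zeta_{\bf x}(t)|{\bf x}\rangle \;=\; \tfrac{1}{t}\,|{\bf y}(t)\rangle
\]
for all $t>0$, where ${\bf y}(t) = \Lambda_V \circ \Psi_V({\boldsymbol \alpha} t)$ is the minimal solution of \eqref{eqn:GenMin}. This converts the question about the total-mass vector into a question about the asymptotic behavior of $\tfrac{1}{t}{\bf y}(t)$.

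Next, I would apply Lemma~\ref{lem:massdissipation}, which asserts precisely that
\[
\lim_{t\to\infty} \tfrac{1}{t}\,{\bf y}(t) \;=\; {\boldsymbol 0}.
\]
Taking the limit as $t\to\infty$ in the identity above yields $\lim_{t\to\infty} \sum_{\bf x}\zeta_{\bf x}(t)|{\bf x}\rangle = {\boldsymbol 0}$, completing the proof.

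There is no real obstacle here: the corollary is a direct composition of the two preceding lemmas, and all the substantive work (the representation of total mass via ${\bf y}(t)$ through the multidimensional Lambert–Euler inversion, and the exponentially fast decay of ${\bf y}(t)/t$ obtained from \eqref{eqn:yovert} together with irreducibility of $V$) has already been carried out. The only thing to verify is that the hypotheses of both lemmas are satisfied throughout, which they are because $\zeta_{\bf x}(t)$ is the MSE solution and ${\boldsymbol\alpha}\in(0,\infty)^k$ is fixed so that Lemma~\ref{lem:massdissipation} applies.
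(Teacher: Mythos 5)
Your proposal is correct and matches the paper's own reasoning exactly: the paper itself states that Corollary~\ref{cor:massto0} follows directly from Lemmas~\ref{lem:massdissipation} and~\ref{len:massloss}, which is precisely the composition you carry out. No further verification is needed.
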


\bigskip

\section{Application in Minimal Spanning Trees}\label{sec:mst} 

For a nonnegative irreducible symmetric matrix $V \in \mathbb{R}^{k \times k}$ and a vector ${\boldsymbol \alpha}\in(0,\infty)^k$, 
let ${\boldsymbol \alpha}[n]$ be as in \eqref{eqn:aplhan}. Consider the graph $K_{{\boldsymbol \alpha}[n]}$ equipped with random edge lengths $\ell_e$ 
as defined in Sect.~\ref{intro:mst}. Recall that the length of a tree is the sum of the lengths of the tree's edges, and 
let the random variable $L_n$ denote the length of the minimal spanning tree of $K_{{\boldsymbol \alpha}[n]}$. 
We are interested in the asymptotic mean lengths of the minimal spanning tree of $K_{{\boldsymbol \alpha}[n]}$ as $n \to \infty$. 
The following theorem follows immediately from Sect. 4.3 in \cite{KOY}.
\begin{thm}\label{thm:mainKan}
\begin{equation}\label{eqn:mst}
\lim_{n \goto \infty} \mathbb{E}[L_n] = \sum_{\bf x} \int\limits_0^\infty \zeta_{\bf x}(t) \, dt,
\end{equation}
where $\zeta_{\bf x}(t)$ is the solution of the modified Smoluchowski coagulation system  \eqref{eqn:Flory}.
\end{thm}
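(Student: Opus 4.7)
The plan is to mimic the approach of Section 4.3 of \cite{KOY}, leveraging the coupling between the random graph process $G(n,1-e^{-t/n})$ and the Marcus-Lushnikov process ${\bf ML}_n(t)$ already established in Subsection \ref{intro:mst}, together with the hydrodynamic limit in Lemma \ref{lem:hydroFlory}.

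First I would invoke the following deterministic identity, a consequence of Kruskal's algorithm: for any connected graph equipped with distinct edge lengths $\ell_e$, if $C(p)$ denotes the number of connected components of the subgraph induced by edges of length $\leq p$, then the length of the minimal spanning tree equals $\int_0^{\infty}(C(p)-1)\,dp$. Applied to $K_{\boldsymbol{\alpha}[n]}$ with Beta-distributed edge lengths and the time change $p=1-e^{-t/n}$, $dp=\tfrac{1}{n}e^{-t/n}\,dt$, one has $C(p(t))=\sum_{\bf x}\xi_{\bf x}^{[n]}(t)$, the total number of clusters at time $t$ in $G(n,1-e^{-t/n})$. Hence
\[
L_n=\frac{1}{n}\int_0^{\infty}\Big(\sum_{\bf x}\xi_{\bf x}^{[n]}(t)-1\Big)e^{-t/n}\,dt.
\]
Taking expectations and using that $\big(\xi_{\bf x}^{[n]}(t)\big)_{\bf x}$ is distributed as ${\bf ML}_n(t)=\big(\zeta_{\bf x}^{[n]}(t)\big)_{\bf x}$, this rewrites as
\[
\mathbb{E}[L_n]=\int_0^{\infty}\Big(\frac{1}{n}\sum_{\bf x}\mathbb{E}[\zeta_{\bf x}^{[n]}(t)]-\frac{1}{n}\Big)e^{-t/n}\,dt.
\]

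Next, by the hydrodynamic limit of Lemma \ref{lem:hydroFlory}, for each fixed $t\geq 0$ the integrand converges pointwise to $\sum_{\bf x}\zeta_{\bf x}(t)$, since $\tfrac{1}{n}\mathbb{E}[\zeta_{\bf x}^{[n]}(t)]\to \zeta_{\bf x}(t)$, $\tfrac{1}{n}\to 0$, and $e^{-t/n}\to 1$. Applying Fubini/Tonelli (everything is nonnegative) would then give the desired expression $\sum_{\bf x}\int_0^{\infty}\zeta_{\bf x}(t)\,dt$, provided we can justify the exchange of the limit $n\to\infty$ and the integral over $t\in[0,\infty)$.

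The main obstacle is this interchange, since Lemma \ref{lem:hydroFlory} only delivers uniform convergence on compact time intervals, whereas the integral runs to infinity. The standard remedy, which is the one carried out in \cite{KOY}, is to split the integral at a large threshold $T$: on $[0,T]$ uniform convergence from Kurtz's theorem together with the uniform bound $\tfrac{1}{n}\sum_{\bf x}\xi_{\bf x}^{[n]}(t)\leq \tfrac{1}{n}\langle\boldsymbol{\alpha}[n]|{\bf 1}\rangle\to\langle\boldsymbol{\alpha}|{\bf 1}\rangle$ allows dominated convergence; on $[T,\infty)$, after gelation the number of surviving non-giant clusters becomes small, and one uses an exponential-type tail bound on $\mathbb{E}[C(p(t))-1]$ (driven by absorption into the giant component, mirrored by Corollary \ref{cor:massto0} on the deterministic side). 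Matching the two tails against $\sum_{\bf x}\int_T^{\infty}\zeta_{\bf x}(t)\,dt$ and sending $T\to\infty$ delivers the limit. Since precisely this argument is executed for the bipartite case ($V=\bigl[\begin{smallmatrix}0&1\\1&0\end{smallmatrix}\bigr]$) in Section 4.3 of \cite{KOY}, and the only place $V$ enters is through the kernel $\langle{\bf x}|V|{\bf y}\rangle$, replacing $V$ by a general nonnegative irreducible symmetric matrix leaves the estimates structurally intact, and the conclusion \eqref{eqn:mst} follows.
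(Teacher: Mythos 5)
Your reconstruction is exactly the argument the paper defers to --- the Kruskal/Avram--Bertsimas integral identity $L_n=\int_0^\infty(C(p)-1)\,dp$, the exponential time change $p=1-e^{-t/n}$, the identification of component counts with the Marcus--Lushnikov process, the hydrodynamic limit of Lemma~\ref{lem:hydroFlory}, and a truncation at a large $T$ to handle the improper integral. The paper's own proof simply cites Section~4.3 of \cite{KOY} verbatim with $V$ replaced, so your proposal is the same approach, just unpacked.
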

\begin{proof}
Follows immediately from the proof in Section 4.3 of \cite{KOY} by replacing $V=\left[\begin{array}{cc}0 & 1 \\1 & 0\end{array}\right]$
with any other nonnegative irreducible symmetric $V \in \mathbb{R}^{k \times k}$.
\end{proof}

\medskip
\noindent
Applying Corollary \ref{cor:TxsolODE} and Theorem \ref{thm:mainKan} together results in the following closed form expression 
for the limit $\lim\limits_{n \goto \infty} \mathbb{E}[L_n]$.
\begin{cor}\label{cor:LnKxV}
\be\label{eqn:LnKxV}
\lim_{n \goto \infty} \mathbb{E}[L_n] = \sum_{\bf x} {(\langle {\bf x} |\boldsymbol{1} \rangle -1)! \over {\bf x}!}\boldsymbol{\alpha}^{\bf x} T_{\bf x} \, \langle {\bf x} | V | \boldsymbol{\alpha} \rangle^{-\langle {\bf x} | {\bf 1} \rangle}  
\ee
$$\text{ with }\quad T_{\bf x}={\tau(K_k, x_i x_j v_{i,j}) \over {\bf x}^{\bf 1}}(V {\bf x})^{{\bf x}- {\bf 1}}.$$
\medskip
\end{cor}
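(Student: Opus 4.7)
The proof is essentially a direct substitution followed by a standard Gamma integral, so the plan is short. Starting from Theorem \ref{thm:mainKan}, I would write
\[
\lim_{n \to \infty} \mathbb{E}[L_n] = \sum_{\bf x} \int_0^\infty \zeta_{\bf x}(t)\, dt
\]
and then plug in the closed-form solution from Corollary \ref{cor:TxsolODE},
\[
\zeta_{\bf x}(t) = \frac{1}{{\bf x}!}\,\boldsymbol{\alpha}^{\bf x}\, T_{\bf x}\, e^{-\langle {\bf x} | V | \boldsymbol{\alpha}\rangle t}\, t^{\langle {\bf x} | \boldsymbol{1}\rangle - 1}.
\]
Since $\zeta_{\bf x}(t)\geq 0$, the interchange of summation and integration needed to move the integral inside is justified by Tonelli's theorem.

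Next, I would evaluate the one-dimensional integral using the standard Gamma identity. For each fixed ${\bf x}$ with $\langle {\bf x} | V | \boldsymbol{\alpha}\rangle > 0$ (which holds since $V$ is nonnegative irreducible and $\boldsymbol{\alpha}\in(0,\infty)^k$, so $V|\boldsymbol{\alpha}\rangle$ has strictly positive coordinates and $\langle {\bf x}|\boldsymbol{1}\rangle \geq 1$),
\[
\int_0^\infty e^{-\langle {\bf x} | V | \boldsymbol{\alpha}\rangle t}\, t^{\langle {\bf x} | \boldsymbol{1}\rangle - 1}\, dt = \frac{(\langle {\bf x}|\boldsymbol{1}\rangle - 1)!}{\langle {\bf x}|V|\boldsymbol{\alpha}\rangle^{\langle {\bf x}|\boldsymbol{1}\rangle}}.
\]
Substituting this back gives
\[
\int_0^\infty \zeta_{\bf x}(t)\, dt = \frac{(\langle {\bf x}|\boldsymbol{1}\rangle - 1)!}{{\bf x}!}\,\boldsymbol{\alpha}^{\bf x}\, T_{\bf x}\, \langle {\bf x}|V|\boldsymbol{\alpha}\rangle^{-\langle {\bf x}|\boldsymbol{1}\rangle},
\]
and summing over ${\bf x}$ yields the claimed formula, with $T_{\bf x} = \tau(K_k, x_i x_j v_{i,j})\,(V{\bf x})^{{\bf x}-{\bf 1}}/{\bf x}^{\bf 1}$ substituted from Theorem \ref{thm:SolTx}.

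There is essentially no obstacle here: the only points worth mentioning are the nonnegativity-based interchange of $\sum$ and $\int$ (which is automatic), and the positivity of $\langle {\bf x}|V|\boldsymbol{\alpha}\rangle$, which ensures convergence of each Gamma integral and relies on the irreducibility hypothesis on $V$ together with ${\bf x}\neq {\bf 0}$. Both are immediate, so the corollary reduces to one line of calculation.
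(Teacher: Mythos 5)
Your proposal is correct and matches the paper's own proof almost verbatim: both substitute the closed form of $\zeta_{\bf x}(t)$ from Corollary \ref{cor:TxsolODE} into Theorem \ref{thm:mainKan} and evaluate the resulting integral via the Gamma identity $\int_0^\infty e^{-at}t^{m-1}\,dt=\Gamma(m)/a^m$. The only addition beyond the paper is your explicit remark on Tonelli and on $\langle{\bf x}|V|\boldsymbol\alpha\rangle>0$, which the paper leaves implicit.
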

\begin{proof}
Substituting 
$\,\zeta_{\bf x}(t) = {1 \over {\bf x}!}\boldsymbol{\alpha}^{\bf x} T_{\bf x} e^{-\langle {\bf x} | V | \boldsymbol{\alpha} \rangle t} t^{\langle {\bf x} |\boldsymbol{1} \rangle -1}\,$ 
from equation \eqref{eqn:ODEsolnTxCor} into equation \eqref{eqn:mst} yields
$$\lim_{n \goto \infty} \mathbb{E}[L_n] = \sum_{\bf x}  {1 \over {\bf x}!}\boldsymbol{\alpha}^{\bf x} T_{\bf x} \int\limits_0^\infty e^{-\langle {\bf x} | V | \boldsymbol{\alpha} \rangle t} t^{\langle {\bf x} |\boldsymbol{1} \rangle -1} \, dt
= \sum_{\bf x} {1 \over {\bf x}!}\boldsymbol{\alpha}^{\bf x} T_{\bf x} \, \langle {\bf x} | V | \boldsymbol{\alpha} \rangle^{-\langle {\bf x} | {\bf 1} \rangle} \Gamma\big(\langle {\bf x} |\boldsymbol{1} \rangle\big).$$
\end{proof}

\medskip
\noindent
Now, we will use the following example to validate the general formula \eqref{eqn:LnKxV} in Corollary \ref{cor:LnKxV}.
\begin{example}\label{ex:zeta}
Let $V=|{\bf 1}\rangle \langle {\bf 1}|-I$, i.e., $K_{{\boldsymbol \alpha}[n]}$ is a complete multipartite graph with edge lengths $\ell_e$ 
uniformly distributed on $(0,1)$.
Substituting \eqref{eqn:KxTx} into \eqref{eqn:LnKxV} yields
\be\label{eqn:exKxLnComplete}
\lim_{n \goto \infty} \mathbb{E}[L_n] = \sum_{\bf x} {(n_{\bf x} -1)! \over {\bf x}!}\boldsymbol{\alpha}^{\bf x} n_{\bf x}^{k-2} 
\, \big(n_{\bf x}\langle {\bf 1} | \boldsymbol{\alpha} \rangle -  \langle {\bf x} | \boldsymbol{\alpha} \rangle\big)^{-n_{\bf x}} \prod_{i=1}^k (n_{\bf x} - x_i)^{x_i-1} , 
\ee
where $\, n_{\bf x}=\langle {\bf x} | {\bf 1} \rangle$.
In the equipartitioned case, when $\boldsymbol{\alpha}={\bf 1}$, equation \eqref{eqn:exKxLnComplete} simplifies to
\begin{align}\label{eqn:exEquiPart}
\lim_{n \goto \infty} \mathbb{E}[L_n] &= \sum_{\bf x} {n_{\bf x}! \over {\bf x}!} \,n_{\bf x}^{k-n_{\bf x}-3}\, (k-1)^{-n_{\bf x}} \prod_{i=1}^k (n_{\bf x} - x_i)^{x_i-1} \nonumber \\ 
&= \sum\limits_{n=1}^\infty n^{k-n-3}\, (k-1)^{-n} \!\!\sum\limits_{{\bf x}:\langle {\bf x} | {\bf 1}\rangle=n} {n! \over {\bf x}!}  \prod_{i=1}^k (n- x_i)^{x_i-1}
\end{align}
Now, by the Abel's type multinomial identity from Abramson \cite{Abramson1969}, we have 
\be\label{eqn:newAbelnx}
\sum\limits_{{\bf x}:\langle {\bf x} | {\bf 1}\rangle=n} {n! \over {\bf x}!}  \prod_{i=1}^k (n- x_i)^{x_i-1}=k (k-1)^{n-1} n^{n-k}.
\ee
Substituting \eqref{eqn:newAbelnx} into \eqref{eqn:exEquiPart}, we get
\be\label{eqn:kzeta3}
\lim_{n \goto \infty} \mathbb{E}[L_n] ={k \over k-1} \sum\limits_{n=1}^\infty n^{-3}={k \over k-1} \zeta(3),
\ee
where $\zeta(n)$ is Riemann zeta function.
Equation \eqref{eqn:kzeta3} matches the general expression for the asymptotic limit 
$\lim_{n \goto \infty} \mathbb{E}[L_n]$ for regular graphs with i.i.d. uniformly distributed edge lengths as derived in Beveridge et al \cite{BFMcD},
thus, validating the general formula \eqref{eqn:LnKxV} in Corollary \ref{cor:LnKxV}.
\end{example}

\bigskip

\section*{Acknowledgements}
This research was supported in part by NSF award DMS-1412557.

\bibliographystyle{amsplain}

\end{document}